\documentclass[12pt]{amsart}
\usepackage{amssymb}
\usepackage{amsfonts}
\usepackage{amssymb,latexsym}
\usepackage{enumerate}
\usepackage{mathrsfs}
\usepackage{url}
\usepackage{rotating}
\usepackage{lscape}
\allowdisplaybreaks

\makeatletter
\@namedef{subjclassname@2020}{%
	\textup{2020}Mathematics Subject Classification}
\makeatother

\newcommand{\Z}{\mathbb{Z}}
\newcommand{\Q}{\mathbb{Q}}

\DeclareMathOperator{\supp}{supp}
\DeclareMathOperator{\dist}{dist}

\newcommand{\Zp}{\mathbb{Z}_p}
\newcommand{\Qp}{\mathbb{Q}_p}
\newcommand{\Cp}{\mathbb{C}_p}

\newcommand{\cL}{\mathbf{L}}

\newtheorem{theorem}{Theorem}[section]
\newtheorem{lemma}[theorem]{Lemma}
\newtheorem{proposition}[theorem]{Proposition}

\theoremstyle{definition}
\newtheorem{remark}[theorem]{Remark}
\newtheorem{definition}[theorem]{Definition}

\usepackage{array}
\def\bar{\tilde}
\def\ga{\gamma}

\def\d{\mathrm}

\numberwithin{equation}{section} 

\begin{document}

\def\t{\widetilde}
\def\tilde{\widetilde}
\def\ga{\gamma}
\def\d{{\rm d}}

\title[$p$-adic spectral zeta functions]
{$p$-adic spectral zeta functions via the inverse Stieltjes transform}

\author{Su Hu}
\address{Department of Mathematics, South China University of Technology, Guangzhou, Guangdong 510640, China}
\email{mahusu@scut.edu.cn}

\author{Min-Soo Kim}
\address{Department of Mathematics Education, Kyungnam University, Changwon, Gyeongnam 51767, Republic of Korea}
\email{mskim@kyungnam.ac.kr}

\subjclass[2020]{11S80, 11M35, 47A10, 47S10, 81T35}
\keywords{$p$-adic analysis, spectral zeta function, inverse Stieltjes transform, continuous spectrum, $p$-adic quantum mechanics}
\dedicatory{Dedicated to the memory of Lev Genrikhovich Shnirelman (1905--1938)}
\begin{abstract}
Spectral zeta functions provide a standard tool for regularizing determinants of differential operators in quantum physics. In a previous paper (J. Math. Phys. 66: 083505, 2025), we introduced a $p$-adic spectral zeta function for discrete spectra via a locally analytic interpolation function. In this paper we extend the framework to continuous spectra using the inverse Stieltjes transform. Starting from the resolvent of a bounded operator, we construct a generalized distribution, called the $p$-adic spectral distribution, and define bosonic and fermionic zeta functions and functional determinants. We establish their analytic properties, special value formulas, and Stirling expansions, and show that  our previous  framework  is embedded into the present one in the discrete case. In this approach, the bosonic and fermionic cases exhibit an interesting symmetry. As an application, we consider the position operator in $p$-adic quantum mechanics on an arbitrary compact subset of $\mathbb{C}_{p}$.
\end{abstract}
\maketitle

\section{Introduction}\label{sec:intro}
Throughout this paper we shall use the following notations.
\begin{equation*}
\begin{aligned}
\qquad \mathbb{N}  ~~~&- ~\textrm{the set of positive integers}.\\
\qquad \mathbb{C}  ~~~&- ~\textrm{the field of complex numbers}.\\
\qquad p  ~~~&- ~\textrm{an odd rational prime number}. \\
\qquad\mathbb{Z}_p  ~~~&- ~\textrm{the ring of $p$-adic integers}. \\
\qquad\mathbb{Q}_p~~~&- ~\textrm{the field of fractions of}~\mathbb Z_p.\\
\qquad\mathbb C_p ~~~&- ~\textrm{the completion of a fixed algebraic closure}~\overline{\mathbb Q}_p~ \textrm{of}~\mathbb Q_{p}.
\end{aligned}
\end{equation*}

Spectral zeta functions are a standard tool in quantum physics for regularizing determinants of differential operators. For a quantum system with discrete energy spectrum $\{\lambda_n\}_{n=0}^{\infty}$ satisfying $\lambda_n\neq0$, the classical spectral zeta function is defined by
\begin{equation}\label{eq:classical-zeta}
Z(s,\lambda)=\sum_{n=0}^{\infty}\frac{1}{(\lambda_n+\lambda)^s},
\end{equation}
and the associated functional determinant is obtained by differentiation at $s=0$:
\begin{equation}\label{eq:classical-det}
\log D(\lambda)=-\frac{\partial}{\partial s}Z(s,\lambda)\bigg|_{s=0}.
\end{equation}
This formalism, originating in the work of Hawking \cite{Hawking1977} and Voros \cite{Voros1987,Voros1992}, has become a standard technique in quantum field theory, string theory and cosmology. Its properties and physical applications have been investigated by many authors, including Freitas \cite{Freitas}, Zhang, Li, Liu and Dai \cite{Zhang}, Fedosova, Rowlett and Zhang \cite{Fedosova}, Reyes Bustos and Wakayama \cite{Bustos}, Kimoto and Wakayama \cite{KW}, and Cunha and Freitas \cite{CF}; see also Elizalde \cite{Elizalde} for a comprehensive treatment of its physical applications. Recent developments include the localization of spectral zeta functions and determinants on Lie groups \cite{Spreafico}, systematic regularization techniques for functional determinants in quantum field theory \cite{ShojiYamaguchi2025}, and the application of supersymmetric zeta functions and determinants to supersymmetric indices and Casimir energies \cite{NakayamaOkazaki2026}.

When the spectrum is continuous, the situation is more subtle. In the classical archimedean setting, one works with the spectral density $\rho(\lambda)$ and the heat kernel trace
\[
K(t)=\operatorname{Tr}(e^{-tA})=\int_0^\infty e^{-t\lambda}\rho(\lambda)\,\d\lambda,
\]
whose small-$t$ asymptotic expansion determines the poles and residues of the spectral zeta function. This method, going back to Seeley \cite{Seeley} and Gilkey \cite{Gilkey}, is a cornerstone of spectral geometry. Alternatively, the spectral measure can be recovered from the discontinuity of the resolvent $R(z)=(A-z)^{-1}$ across the real axis. Both approaches rely essentially on the archimedean structure of $\mathbb{R}$.

From the point of view of number theory, the \(p\)-adic fields and the real field are two sides of the same coin. By Ostrowski's theorem \cite[Chapter 1]{Koblitz}, every nontrivial absolute value on \(\Q\) is equivalent either to the archimedean absolute value \(|\cdot|_\infty\) or to a \(p\)-adic absolute value \(|\cdot|_p\) for some prime \(p\). Equivalently, the completions of \(\Q\) with respect to its nontrivial absolute values are precisely \(\mathbb{R}\) and the fields \(\Qp\), one for each prime \(p\). This is the starting point of the adelic viewpoint: \(\mathbb{R}\) and the \(\Qp\)'s should be treated on an equal footing, and a complete theory over \(\Q\) is expected to admit an adelic formulation in which all completions participate. Just as the classical spectral zeta functions of Hawking, Voros
and other mathematical physicists are built over the archimedean completion \(\mathbb{R}\), it is natural to seek a spectral zeta function built over a non-archimedean completion \(\Qp\).

The classical framework of \(p\)-adic analysis and mathematical
physics, including pseudo-differential operators, spectral theory, and
\(p\)-adic quantum mechanics with complex-valued wave functions, was
systematically developed by Vladimirov, Volovich and Zelenov
\cite{VladimirovVolovichZelenov1994}. In the \(p\)-adic valued setting,
a parallel theory of \(p\)-adic valued distributions, Gaussian
integration, and quantum mechanics was developed by Khrennikov and
coauthors \cite{Khrennikov1994book}; see also their earlier
formulations \cite{Albeverio1996,Khrennikov1991,Khrennikov1992,Albeverio1995}.
More recent operator-algebraic foundations, including trace-class
operators and states in \(p\)-adic quantum mechanics, are given in
\cite{Aniello2023}, and a tensor product of \(p\)-adic Hilbert spaces
has been constructed in \cite{Aniello}, providing a mathematical
foundation for composite systems and entanglement in the \(p\)-adic
setting. Recently, the \(p\)-adic AdS/CFT correspondence was
introduced in \cite{Gubser2017} (also see \cite{Bhattacharyya2018}). 
In such contexts, space-time or bulk
geometry is described by a non-Archimedean field rather than by the
real numbers, and the usual tools of Archimedean analysis are no longer
available. These developments further motivate the study of \(p\)-adic
spectral theory and, in particular, of \(p\)-adic spectral zeta
functions.

It is known that \(p\)-adic fields offer advantages for analytic problems, especially concerning the convergence of series: a series \(\sum_{n=1}^{\infty}a_n\) converges in the \(p\)-adic field if and only if \(a_n\to 0\) as \(n\to\infty\). This property has been exploited in several recent works. In \cite{HK2021}, addressing a Hilbert problem \cite{HilbertProb}, an infinite-order linear differential equation satisfied by \(\zeta_{p,E}(s,\lambda)\) was found, which converges in a certain region of the \(p\)-adic complex domain \(\mathbb{C}_p\). Moreover, in \cite{KW}, while investigating the quantum Rabi model, Kimoto and Wakayama obtained divergent series expressions for the special values \(\zeta(n,\lambda)\) of the Hurwitz zeta function at positive integers (see \cite[(38) and (40)]{KW}), but they showed that the corresponding series for the \(p\)-adic Hurwitz zeta function \(\zeta_p(s,\lambda)\) are convergent (see \cite[(50) and Remark 7.3]{KW}). These results illustrate the advantages of \(p\)-adic analysis and motivate the construction of \(p\)-adic spectral zeta functions.

It is important to distinguish between two radically different types of \(p\)-adic analysis. The first type considers functions from \(\mathbb{Q}_p\) to the complex field \(\mathbb{C}\), whereas the second type studies functions from the \(p\)-adic complex plane \(\mathbb{C}_p\) to \(\mathbb{C}_p\). The present paper belongs to the second setting. As observed by Vishik \cite{Vishik1985}, in this second kind of analysis there is no general notion of a Laplace operator, which makes it difficult to derive a spectrum from a Schr\"odinger-type equation; see also the first paragraph of \cite{Kochubei}.

In our previous paper \cite{HuKim}, we proposed a framework for discrete
spectra that circumvents this problem. Let
\(D_1=\{a\in\Cp:|a|_p\le1\}\) be the closed unit disc. Given a discrete
spectrum \(\{\lambda_n\}_{n=0}^{\infty}\) of a Hamiltonian and a locally
analytic and bounded interpolation function \(f:D_1\to\Cp\) with
\(f(n)=\lambda_n\) for all \(n\in\mathbb N_0\), we defined the bosonic
\(p\)-adic spectral zeta function
\begin{equation}\label{eq:HK-bosonic}
\zeta_p^f(s,\lambda)=\frac{1}{s-1}\int_{\Zp}\langle \lambda+f(a)\rangle^{1-s}\,\d a,
\end{equation}
and its fermionic counterpart
\begin{equation}\label{eq:HK-fermionic}
\zeta_{p,E}^f(s,\lambda)=\int_{\Zp}\langle \lambda+f(a)\rangle^{1-s}\,\d\mu_{-1}(a),
\end{equation}
where \(\langle\cdot\rangle\) is the projection function, \(da\) denotes
the Volkenborn integral with respect to the Haar distribution, and
\(\mu_{-1}\) is the \(p\)-adic measure defined in
Definition~\ref{def:mu-1}. These constructions extend the classical
\(p\)-adic Hurwitz and Hurwitz-type Euler zeta functions and have been
successfully applied to spectra such as the harmonic oscillator
\(\lambda_n=n+\frac12\) and the infinite square well \(\lambda_n=n^2\).

The framework of \cite{HuKim} is not restricted to discrete spectra.
Since \(f\) is only required to be locally analytic on \(D_1\), it may
also parameterize a continuous family of values: choosing for instance
\(f(a)=a^2\) or \(f(a)=a+\frac12\) yields a continuous family of
spectral data. In this sense, the earlier framework already covers a
class of continuous spectra, namely those whose values admit a locally
analytic parameterization on \(D_1\). However, this is a restricted
notion of continuity. The reason is that the spectrum is still encoded
as the countable sequence \(\{f(n)\}_{n\in\mathbb N_0}\), and the zeta
function is defined by integrating over the parameter space \(\Zp\)
with the Haar distribution. Because \(\mathbb N_0\) is dense in
\(\Zp\), the interpolation is performed at the integers; consequently,
the construction is intrinsically tied to \(\Zp\). Moreover, the Haar
distribution is an \emph{a priori} measure on \(\Zp\) that bears no
direct relation to the operator's spectral distribution.

This structural limitation prevents the framework of \cite{HuKim} from
handling more general continuous spectra. A general continuous spectrum is a compact set
\(\sigma\subset\Cp\), which need not be \(\Zp\); it need not be
parameterizable by a locally analytic function on the unit disc, nor
need its spectral distribution be governed by the Haar distribution.
 A representative example is the position operator
in \(p\)-adic quantum mechanics, restricted to a compact subset
\(\sigma\subset\Cp\). Its spectrum is \(\sigma\) itself, which is
uncountable and has no isolated points, and its spectral distribution
is a general bounded measure on \(\sigma\), which need not be the Haar
distribution on \(\Zp\). For the details, we refer to  Remark~\ref{rem:why-not-HK} at the end of
Section~\ref{sec:applications}.

These observations motivate the present paper. Our goal is to develop
a framework of \(p\)-adic spectral zeta functions that is not tied to
\(\Zp\), nor to a discrete interpolation of the spectrum, but instead
works directly with the operator's spectral data. To achieve this, we
replace the interpolation function \(f\) by the operator's resolvent
\(R(z)=(zI-T)^{-1}\) and construct the spectral distribution directly
from it. The essential tool is Vishik's inverse Stieltjes transform,
which associates to a Krasner analytic function on the complement of a
compact set \(\sigma\) a unique generalized distribution
\(\mu\in\cL^*(\sigma)\). This allows us to handle continuous spectra,
where the spectrum itself may be an arbitrary compact subset of
\(\Cp\), and the spectral distribution may be a general element of
\(\cL^*(\sigma)\), not necessarily a measure.

The approach is based on Vishik's inverse Stieltjes transform in
\(p\)-adic analysis, which in turn relies on the Shnirelman integral.
The Shnirelman integral, introduced by Shnirelman in 1938
\cite{Shnirelman}, defined on \(\mathbb{C}_{p}\), is the \(p\)-adic
analogue of the contour integral in complex analysis. Since \(p\)-adic
spaces are totally disconnected, the integral is not defined along a
continuous contour but by averaging over roots of unity on a \(p\)-adic
circle. This construction provides \(p\)-adic versions of the Cauchy
integral formula, the residue theorem, and the maximum modulus
principle; see Section~\ref{sec:shnirelman} for a brief recall and see
\cite[Appendix, \S3]{Koblitz2} for a detailed treatment.

In the present paper, the Shnirelman integral is the main tool that allows us to
invert the Stieltjes transform and to pass from the resolvent of an
operator to a generalized spectral distribution. The inverse Stieltjes
transform, further developed by Amice--V\'elu \cite{AmiceVelu} and
Vishik \cite{Vishik1985}, then associates with a Krasner analytic
function on the complement of a compact set a unique generalized
distribution on that set. The key idea of this paper is to replace the interpolation function \(f\) by the resolvent of the operator. For a bounded linear operator $T$ on a $\Cp$-Banach space, the resolvent
\[
R(z)=(zI-T)^{-1}
\]
is analytic outside the spectrum. Under mild conditions, a scalar version of $R(z)$ belongs to the space \(H_0(\Cp\setminus\sigma)\) of Krasner analytic functions vanishing at infinity on the complement of a suitable compact set $\sigma$; this is made precise in Definition~\ref{def:admissible-operator}. Vishik's inverse Stieltjes theorem
(see Theorem~\ref{thm:invStieltjes}) then associates with $R(z)$ a unique generalized distribution $\mu$ in the space $\cL^*(\sigma)$, where $\cL^*(\sigma)$ denotes the space of continuous linear functionals on the space $\cL(\sigma)$ of locally analytic functions on $\sigma$ (see Definition~\ref{def:generalized-distribution}). We call $\mu$ the $p$-adic spectral distribution of $T$ (see Definition~\ref{def:spectral-distribution}). In this way, spectral information is encoded in a generalized distribution rather than in an interpolation function, and the construction does not require the existence of a \(\mathbb C_p\)-valued parameterization of the spectrum.

Using this distribution, we define the bosonic and fermionic \(p\)-adic
spectral zeta functions by
\begin{equation}\label{eq:bosonic-def}
\zeta_p^\mu(s,\lambda)
=
\frac{1}{s-1}\int_\sigma \langle \lambda+x\rangle^{1-s}\,\d\mu(x),
\end{equation}
and
\begin{equation}\label{eq:fermionic-def}
\zeta_{p,E}^\nu(s,\lambda)
=
\int_\sigma \langle \lambda+x\rangle^{1-s}\,\d\nu(x),
\end{equation}
which are formalized in
Definitions~\ref{def:bosonic-zeta} and~\ref{def:fermionic-zeta},
respectively, in analogy with \eqref{eq:HK-bosonic} and
\eqref{eq:HK-fermionic}. Our main analytic results are
Theorem~\ref{thm:welldef}, which establishes \(C^\infty\)-dependence on
$s$ and local analyticity in $\lambda$, and
Theorems~\ref{thm:special-bosonic} and~\ref{thm:special-fermionic},
which give special value formulas at non-positive integers in terms of
generalized Bernoulli and Euler functionals (see
Definition~\ref{def:Bernoulli}). We also prove a compatibility theorem
(see Theorem~\ref{thm:compat}) showing that when the spectrum is
discrete and admits a locally analytic interpolation function, the zeta
function of \cite{HuKim} is recovered as the zeta function associated
with a spectral distribution \(\mu_{\mathcal T}\in\cL^*(\sigma)\)
constructed by the inverse Stieltjes transform.

We note that the bosonic and fermionic cases differ in the following respect. The bosonic zeta function is modeled on the Haar distribution, which is unbounded; in the discrete case the associated integral is therefore defined only for \(C^1\)-functions, and the proof of the compatibility theorem requires a careful use of the Volkenborn integral estimates, the indefinite sum, and Taylor expansions of \(C^1\)-functions (see the proof of Theorem~\ref{thm:compat}). In contrast, the fermionic zeta function is modeled on the bounded \(\mu_{-1}\)-measure, so the integral is defined for every continuous function. Consequently, the fermionic case requires no differentiability assumptions and its proof is considerably simpler.

The associated $p$-adic functional determinants are introduced in Definition~\ref{def:functional-det}. In the classical archimedean setting, the functional determinant $\log\det A = -\zeta_A'(0)$ encodes the one-loop effective action of a quantum system and plays a central role in quantum field theory, spectral geometry, and the computation of Casimir energies \cite{Elizalde,Hawking1977,Kirsten,Voros1992}. Our $p$-adic analogue, defined via the derivative at $s=0$ of the spectral zeta function, inherits this physical interpretation. That is, it provides a regularized version of the determinant of the operator in the $p$-adic setting. We prove an integral representation in Theorem~\ref{thm:integral-rep} and a Stirling-type expansion in Theorem~\ref{thm:stirling}, showing that the functional determinant is completely determined by the moments of the spectral distribution, which in turn are read off from the resolvent expansion at infinity. This gives a $p$-adic counterpart of the classical heat kernel expansion.

Our principal application is the position operator in \(p\)-adic
quantum mechanics, restricted to an arbitrary compact subset
\(\sigma\subset\Cp\). Its spectrum is \(\sigma\), which is genuinely
continuous and compact, and its resolvent is explicit. This allows a
direct computation of the \(p\)-adic spectral distribution, the zeta
functions, and the functional determinant. A natural choice of
\(\sigma\) is the closed unit ball of a finite extension \(K/\Qp\),
which corresponds to a bounded spatial region. This example
demonstrates that the inverse Stieltjes method applies to operators
with genuinely continuous spectra, and extends our previous framework
for discrete spectra in a meaningful way (see Section \ref{Compatibility}).

The paper is organized as follows. In Section~\ref{sec:prelim} we
recall the necessary background on \(p\)-adic numbers and measures, and
we develop the Shnirelman integral and the inverse Stieltjes theorem.
In Subsection~\ref{sec:spectral-distribution} we construct the
\(p\)-adic spectral distribution from the resolvent, and we establish
its expansion at infinity (Proposition~\ref{prop:expansion}). In
Section~\ref{sec:zeta} we introduce the bosonic and fermionic
\(p\)-adic spectral zeta functions, prove their analyticity in
\((s,\lambda)\), derive special value formulas at non-positive
integers, and establish a compatibility theorem showing that the
construction of \cite{HuKim} is recovered as a special case. In
Section~\ref{sec:det} we develop the theory of the associated
functional determinants, including the integral representation and the
Stirling expansion. Finally, in Section~\ref{sec:applications} we
apply the framework to the position operator in \(p\)-adic quantum
mechanics on an arbitrary compact subset of \(\Cp\).
\section{Preliminaries}\label{sec:prelim}

This section collects the background material used throughout the paper. In subsection~\ref{sec:p-adic-numbers}, we recall the multiplicative decomposition of \(p\)-adic numbers and the projection function. In subsection~\ref{sec:measures}, we introduce \(p\)-adic distributions and measures on \(\Zp\), including the Haar distribution and the fermionic measure \(\mu_{-1}\). In subsection~\ref{sec:shnirelman}, we develop the Shnirelman integral and the inverse Stieltjes transform. Finally, In subsection~\ref{sec:spectral-distribution}, we apply these tools to associate a \(p\)-adic spectral distribution to an admissible operator via its resolvent, which is the key construction for defining spectral zeta functions in the continuous-spectrum setting.

\subsection{Basic facts about \(p\)-adic numbers}
\label{sec:p-adic-numbers}
Let \(p\) be a prime and let \(\Qp\) be the completion of \(\Q\) with respect to the \(p\)-adic absolute value \(|\cdot|_p\), normalized by
\[
|p|_p=p^{-1}.
\]
Let \(\overline{\Q}_p\) be a fixed algebraic closure of \(\Qp\), and let \(\Cp\) be its completion. We work in the setting sometimes called the ``second kind'' of \(p\)-adic analysis: all functions, distributions, and integrals below are \(\Cp\)-valued.

Every nonzero rational number \(x\) has a unique decomposition
\[
x=p^n\frac{a}{b},
\]
with \(n\in\Z\) and \(a,b\) integers prime to \(p\). The \(p\)-adic absolute value is then given by \(|x|_p=p^{-n}\). The space \(\Qp\) is the completion of \(\Q\) with respect to the induced ultrametric. The \(p\)-adic absolute value satisfies the ultrametric inequality
(also named the strong triangle inequality)
\[
|x+y|_p\le\max\{|x|_p,|y|_p\}.
\]
Consequently, every point of a \(p\)-adic disc is a center of that disc; two discs are either disjoint or one contains the other; and the spaces \(\Qp\) and \(\Cp\) are totally disconnected.

To introduce the definitions and the properties of \(p\)-adic Hurwitz zeta functions, we need to further recall some concepts in \(p\)-adic analysis, which focus on the \(p\)-adic Teichm\"uller character \(\omega_\nu(a)\) and the projection function \(\langle a \rangle\). Our exposition follows \cite[Sec. 2]{TP2011}.

For \(a \in \Zp\) with \(p \nmid a\), there exists a unique \((p-1)\)th root of unity \(\omega(a) \in \Zp\) satisfying
\[
a \equiv \omega(a) \pmod p,
\]
where \(\omega\) denotes the Teichm\"uller character. The projection function \(\langle a \rangle\) is then defined as
\[
\langle a \rangle = \omega^{-1}(a) a,
\]
ensuring \(\langle a \rangle \equiv 1 \pmod p\).

We may extend the definition of \(\langle a \rangle\) from \(\Zp\) to \(\Cp\) as follows. Let \(\mathfrak o_{\Cp}\) denote the valuation ring of \(\Cp\) and \(\mathfrak m_{\Cp}\) its maximal ideal. For \(x\in\Cp^\times\), let \(v_p(x)\in\Q\) be the additive valuation normalized by \(v_p(p)=1\), so that
\[
|x|_p=p^{-v_p(x)}.
\]
Choose once and for all a compatible system of fractional powers \(p^r\), \(r\in\Q\), in \(\Cp\). For general \(a \in \Cp^\times\), fix an embedding of \(\overline{\Q}\) into \(\Cp\). Write \(a = p^{v_p(a)} u\) with \(|u|_p = 1\). The Teichm\"uller representative \(\hat{a}\) is uniquely determined by
\[
|\hat{a} - a|_p < 1
\quad\text{and}\quad
\hat{a} = \lim_{n \to \infty} u^{p^{n!}}.
\]
The projection function extends to \(\Cp^\times\) via
\[
\langle a \rangle = p^{-v_p(a)} \cdot \frac{a}{\hat{a}}.
\]
This induces the decomposition
\[
\Cp^\times \simeq p^{\Q} \times \mu \times D,
\]
where \(\mu\) consists of roots of unity with order coprime to \(p\) and
\[
D = \{ x \in \Cp : |x - 1|_p < 1 \}.
\]
Although the decomposition of \(\Cp^\times\) depends on the choice of embedding of \(\Q\) into \(\Cp\), for fixed \(a \in \Cp^\times\), the components \(p^{v_p(a)}\), \(\hat{a}\), and \(\langle a \rangle\) are uniquely determined up to roots of unity.

Now define the \(p\)-adic Teichm\"uller character \(\omega_\nu(a)\) on \(\Cp^\times\) by
\[
\omega_\nu(a) = \frac{a}{\langle a \rangle} = p^{v_p(a)} \cdot \hat{a}.
\]
In this general \(\Cp\)-adic setting one has
\[
\langle a \rangle \in 1+\mathfrak m_{\Cp}.
\]
If \(a\in\Qp^\times\), then \(v_p(a)\in\Z\), \(\hat{a}\) is a root of unity of order prime to \(p\), and
\[
\langle a \rangle \in 1+p\Zp.
\]
Thus the \(\Qp\)-adic version of the decomposition is the familiar one; in the rest of the paper we use the same notation in both cases.

We record the differentiability properties of the three factors appearing in the decomposition
\[
a = p^{v_p(a)}\,\hat{a}\,\langle a \rangle.
\]
The maps
\begin{equation}\label{maps}
a \mapsto p^{v_p(a)},
\quad
a \mapsto \hat{a},
\quad
a \mapsto \omega_\nu(a),
\end{equation}
are all locally constant on \(\Cp^\times\); hence their derivatives vanish identically. In contrast, the projection function
\[
a \mapsto \langle a \rangle
\]
is not locally constant, and it satisfies the derivative formula
\begin{equation}\label{deri}
\frac{\d}{\d a}\langle a \rangle = \frac{\langle a \rangle}{a},
\quad a \in \Cp^\times.
\end{equation}
Indeed, since \(\langle a \rangle = a/(p^{v_p(a)}\hat{a})\) and the denominator \(p^{v_p(a)}\hat{a}\) is locally constant on \(\Cp^\times\), differentiating with respect to \(a\) gives
\[
\frac{\d}{\d a}\langle a \rangle
=
\frac{1}{p^{v_p(a)}\hat{a}}
=
\frac{\langle a \rangle}{a}.
\]
Formula \eqref{deri} is the basic identity which allows one to differentiate expressions involving the projection function, and it will be used repeatedly in the sequel.

For \(s\in\Zp\), the power \(\langle a \rangle^s\) is defined by the binomial series
\[
\langle a \rangle^s
=
\sum_{n=0}^{\infty}\binom{s}{n}\bigl(\langle a \rangle-1\bigr)^n.
\]
Since \(|\langle a \rangle-1|_p<1\), this series converges \(p\)-adically. For fixed \(a\), it is locally analytic in \(s\in\Zp\), and for fixed \(s\), it is locally analytic in \(a\) away from the zero set of \(a\).

\subsection{\(p\)-adic measures and integrals on \(\Zp\)}
\label{sec:measures}

Here we introduce the notions of \(p\)-adic distributions and bounded measures on \(\Zp\), and we recall the Haar distribution and the fermionic measure \(\mu_{-1}\). These measures will be used to define the bosonic and fermionic spectral zeta functions in the next section. The fermionic measure is particularly important for our purposes, since it gives rise to the Euler-type zeta functions studied in the paper.

\begin{definition}[Distribution and bounded measure]
\label{def:distribution-on-Zp}
A \(p\)-adic \emph{distribution} on \(\Zp\) is a finitely additive map
\[
\mu:\{\text{compact open subsets of }\Zp\}\longrightarrow\Cp.
\]
If there exists a constant \(C>0\) such that
\[
|\mu(U)|_p\le C
\]
for every compact open subset \(U\subset\Zp\), then \(\mu\) is called a \emph{bounded measure}.
\end{definition}

For a locally analytic function \(f:\Zp\to\Cp\) and a distribution \(\mu\) on \(\Zp\), the integral of \(f\) against \(\mu\) is defined by the limit of Riemann sums over residue classes:
\[
\int_{\Zp} f(x)\,\d\mu(x)
=
\lim_{N\to\infty}\sum_{a=0}^{p^N-1} f(a)\,\mu\bigl(a+p^N\Zp\bigr),
\]
whenever the limit exists in \(\Cp\).

\begin{definition}[Haar distribution and Volkenborn integral]
\label{def:haar}
The \emph{Haar distribution} on \(\Zp\) is defined by
\[
\mu_{\mathrm{Haar}}\bigl(a+p^N\Zp\bigr)=\frac{1}{p^N}.
\]
It is unbounded in general. The associated integral is the \emph{Volkenborn integral}, written as
\[
\int_{\Zp} f(a)\,\d a
=
\lim_{N\to\infty}\frac{1}{p^N}\sum_{a=0}^{p^N-1}f(a),
\]
for locally analytic \(f:\Zp\to\Cp\) whenever the limit exists.
\end{definition}

\begin{definition}[\(\mu_{-1}\) and the fermionic integral]
\label{def:mu-1}
Assume \(p\) is odd. The measure \(\mu_{-1}\) is defined on residue classes by
\[
\mu_{-1}\bigl(a+p^N\Zp\bigr)=(-1)^a,
\]
for \(0\le a<p^N\). It is a bounded measure. The associated integral is
\[
\int_{\Zp} f(a)\,\d\mu_{-1}(a)
=
\lim_{N\to\infty}\sum_{a=0}^{p^N-1}f(a)(-1)^a.
\]
\end{definition}

\begin{remark}
This measure was independently found by Katz \cite[p.~486]{Katz} (in Katz's notation, the \(\mu^{(2)}\)-measure), Shiratani and Yamamoto \cite{Shi}, Osipov \cite{Osipov}, Lang \cite{Lang} (in Lang's notation, the \(E_{1,2}\)-measure), T. Kim \cite{TK} from very different viewpoints.
Obviously, in contrast with the Haar distribution, the \(\mu_{-1}\)-measure is bounded under the \(p\)-adic valuation, so it can be applied to integrate the continuous functions on \(\mathbb{Z}_{p}\)
(see \cite[p. 39, Theorem 6]{Koblitz}).
\end{remark}

It is interesting to note that, although \(\mu_{-1}\) assigns the alternating signs \((-1)^a\) to residue classes, its total mass is still \(1\). This is a direct consequence of the fact that \(p\) is odd.

\begin{proposition}[{Total mass under $\mu_{-1}$}]
\label{prop:mu-1-total-mass}
The measure \(\mu_{-1}\) is normalized so that
\[
\int_{\Zp} \d\mu_{-1}(a) = 1.
\]
\end{proposition}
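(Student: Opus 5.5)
We compute the integral of the constant function $f\equiv 1$ directly from the Riemann-sum definition in Definition~\ref{def:mu-1}:
\[
\int_{\Zp}\d\mu_{-1}(a)=\lim_{N\to\infty}\sum_{a=0}^{p^N-1}(-1)^a .
\]
The work therefore reduces to evaluating a finite alternating sum of length $p^N$. Since $p$ is odd, $p^N$ is odd for every $N\ge 0$.

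Group the terms in consecutive pairs $(2k,2k+1)$ for $0\le k\le (p^N-3)/2$. Each pair contributes $1-1=0$. The only unpaired term is the last one, $a=p^N-1$, which is even, so it contributes $(-1)^{p^N-1}=1$. Hence
\[
\sum_{a=0}^{p^N-1}(-1)^a=1
\]
for every $N$. The sequence of Riemann sums is constant, so its limit exists and equals $1$.

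For completeness, we also check that $\mu_{-1}$ is a well-defined finitely additive distribution, so that the value is independent of how $\Zp$ is partitioned. The residue class $a+p^N\Zp$ splits into the classes $a+jp^N+p^{N+1}\Zp$ for $0\le j<p$. Since $p^N$ is odd, $(-1)^{a+jp^N}=(-1)^{a+j}$. Summing over $j$ gives
\[
(-1)^a\sum_{j=0}^{p-1}(-1)^j=(-1)^a,
\]
again because $p$ is odd. This confirms additivity, and in particular that $\mu_{-1}(\Zp)=\mu_{-1}(0+p^0\Zp)=(-1)^0=1$, matching the limit computed above.

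There is no real obstacle here. The only point needing care is to use the oddness of $p$ in both places: in the parity of the length $p^N$, and in the additivity check. For $p=2$, both the total mass and the well-definedness of $\mu_{-1}$ would fail.
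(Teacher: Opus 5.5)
Your proof is correct and essentially matches the paper's. Both reduce the integral to the alternating sum $\sum_{a=0}^{p^N-1}(-1)^a$, which equals $1$ because $p^N$ is odd; the paper gets there via finite additivity over the cover by residue classes mod $p^N$, and you get there via the Riemann-sum definition. Your additional check that $\mu_{-1}$ is consistent under refinement of residue classes is a useful supplement, since the paper takes it for granted in Definition~\ref{def:mu-1}.
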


\begin{proof}
For every \(N\ge 0\), the sets
\[
a+p^N\Zp,\quad a=0,1,\dots,p^N-1,
\]
form a disjoint clopen cover of \(\Zp\). Therefore, by finite additivity,
\[
\int_{\Zp} \d\mu_{-1}(a)
=
\mu_{-1}(\Zp)
=
\sum_{a=0}^{p^N-1} \mu_{-1}(a+p^N\Zp)
=
\sum_{a=0}^{p^N-1} (-1)^a.
\]
Since \(p\) is odd, \(p^N\) is odd, and the alternating sum
\[
\sum_{a=0}^{p^N-1} (-1)^a = 1.
\]
Hence
\[
\int_{\Zp} \d\mu_{-1}(a)=1.
\]
Thus the proposition follows.
\end{proof}

These two measures are not only the building blocks of the bosonic and fermionic spectral zeta functions studied in this paper; they also lie at the heart of Iwasawa theory. To see this, recall that the classical \(p\)-adic Hurwitz zeta function \(\zeta_p(s,\lambda)\) is defined by the \(p\)-adic Mellin transform of the Haar distribution,
\[
\zeta_p(s,\lambda)
=
\frac{1}{s-1}\int_{\Zp}\langle \lambda+a\rangle^{1-s}\,\d a
\]
for $\lambda\in\Cp\setminus\Zp$ and the \(p\)-adic Hurwitz-type Euler zeta function \(\zeta_{p,E}(s,\lambda)\) is defined by the \(p\)-adic Mellin transform of the \(\mu_{-1}\)-measure,
\[
\zeta_{p,E}(s,\lambda)
=
\int_{\Zp}\langle \lambda+a\rangle^{1-s}\,\d\mu_{-1}(a)
\]
for $\lambda\in\Cp\setminus\Zp,$ see \cite[Definition 11.2.5]{Cohen} and \cite[Definition 3.3]{HK2012}, respectively. Both interpolate the corresponding classical zeta functions at non-positive integers:
\[
\zeta_p(1-m,\lambda)
=-\frac{1}{\omega_v^m(\lambda)}\frac{B_m(\lambda)}{m},
\quad
\zeta_{p,E}(1-m,\lambda)
=\frac{1}{\omega_v^m(\lambda)}E_m(\lambda),
\]
where \(B_m(\lambda)\) and \(E_m(\lambda)\) are the Bernoulli and Euler polynomials; see \cite[Proposition 11.2.6]{Cohen} and \cite[Theorem 3.8(2)]{HK2012}.

Using these zeta functions as building blocks, one defines the corresponding \(p\)-adic \(L\)-functions. The Kubota--Leopoldt \(p\)-adic \(L\)-function
\[
L_p(s,\chi)
=
\frac{1}{s-1}\int_{\Zp}\chi(a)\langle a\rangle^{1-s}\,\d a
\]
is associated with the ideal class group of the \(p^n\)-th cyclotomic field \(\Q(\zeta_{p^n})\); see \cite{Iw} and \cite{Wa}. On the other side, the \(p\)-adic \(L\)-function
\[
L_{p,E}(s,\chi)
=
\int_{\Zp}\chi(a)\langle a\rangle^{1-s}\,\d\mu_{-1}(a)
\]
is associated with the $(S,\{2\})$-refined ideal class group of $\Q(\zeta_{p^n})$; see \cite[Propositions 3.3 and 3.4]{HK2016} for details. In this sense, the Haar distribution and the \(\mu_{-1}\)-measure encode arithmetic information about cyclotomic fields, and the bosonic and fermionic spectral zeta functions introduced in this paper can be regarded as spectral analogues of the Kubota--Leopoldt and \((S,\{2\})\)-refined \(p\)-adic \(L\)-functions, respectively.

This arithmetic interpretation provides additional motivation for the
constructions in this paper: the same measures that govern the
interpolation of classical \(L\)-values in Iwasawa theory also govern
the regularization of spectral determinants in \(p\)-adic quantum
mechanics. Having introduced the two measures and their arithmetic
applications, we now turn to the analytic tools needed to construct
spectral distributions from resolvents..

\subsection{Shnirelman integral and inverse Stieltjes transform}
\label{sec:shnirelman}

Shnirelman integral is the \(p\)-adic analogue of the contour integral
in complex analysis. It was introduced by Shnirelman in 1938
\cite{Shnirelman} and provides \(p\)-adic versions of the residue
theorem and the Cauchy integral formula. Since \(p\)-adic spaces are
totally disconnected, the integral is not defined along a continuous
contour but by averaging over roots of unity on a \(p\)-adic circle.
This construction allows one to extract the constant term of a Laurent
expansion, playing the same role as the classical residue does in
complex analysis. 

In this subsection we use this tool to relate
resolvent-type functions to \(p\)-adic distributions. We first recall
the space of Krasner analytic functions and the Shnirelman integral,
and then state the inverse Stieltjes transform, which gives an
isomorphism between the space of generalized distributions on a
compact set \(\sigma\) and the space of functions analytic on the
complement of \(\sigma\). This result is the main bridge between
spectral data and distribution theory in our setting. Our exposition
follows the treatment by Koblitz in \cite[Appendix, \S3]{Koblitz2}, to
which we refer for proofs and further details.

\begin{definition}[{Locally analytic and $\cL(\sigma)$,
see \cite[p.~136]{Koblitz2}}]
\label{def:locally-analytic}
Let \(\sigma\subset\Cp\) be compact. A function \(f:\sigma\to\Cp\) is
called \emph{locally analytic} if for every \(a\in\sigma\), there is
an open neighbourhood \(U\subset\Cp\) of \(a\) such that \(f\) is
given by a convergent power series on \(\sigma\cap U\). The space of
all locally analytic functions on \(\sigma\) is denoted by
\(\cL(\sigma)\).
\end{definition}

\begin{definition}[{$\mathbb D_\sigma(r)$ and \(B_r(\sigma)\),
see \cite[Appendix, \S3, p.~136]{Koblitz2}}]
\label{def:Br}
The space \(\cL(\sigma)\) is equipped with the inductive limit topology
of the Banach spaces \(B_r(\sigma)\), \(r>0\), defined as follows. For
each \(r>0\), let
\begin{equation}\label{Dar}
\mathbb D_\sigma(r)=\bigcup_{a\in\sigma}D_a(r),
\quad
D_a(r)=\{z\in\Cp:|z-a|_p<r\},
\end{equation}
and let \(B_r(\sigma)\) be the space of functions on
\(\mathbb D_\sigma(r)\) that are given by a convergent power series on
each disc \(D_a(r)\). Equipped with the norm
\[
\|f\|_r=\max_{z\in\mathbb D_\sigma(r)}|f(z)|_p,
\]
the space \(B_r(\sigma)\) is a Banach space over \(\Cp\). The space
\(\cL(\sigma)\) is the union of the \(B_r(\sigma)\), and its topology
is the inductive limit topology.
\end{definition}

\begin{definition}[{Generalized distribution,
see \cite[p.~136, Lemma 7]{Koblitz2}}]
\label{def:generalized-distribution}
A \emph{generalized distribution} on \(\sigma\) is a continuous
\(\Cp\)-linear map
\[
\mu:\cL(\sigma)\longrightarrow\Cp.
\]
The space of all such functionals is denoted by \(\cL^*(\sigma)\). For
\(\mu\in\cL^*(\sigma)\) and \(r>0\), the norm of \(\mu\) on
\(B_r(\sigma)\) is
\begin{equation}\label{norm}
\|\mu\|_r=\sup_{0\neq f\in B_r(\sigma)}
\frac{|\mu(f)|_p}{\|f\|_r}.
\end{equation}
\end{definition}

Every bounded \(p\)-adic measure on \(\sigma\) defines an element of
\(\cL^*(\sigma)\) by integration, namely the continuous linear
functional
\[
f\longmapsto \mu(f):=\int_\sigma f\,\d\mu.
\]
Thus the notation \(\mu(f)\) denotes the action of this functional on
\(f\), which coincides with the integral of \(f\) against the bounded
measure \(\mu\). However, \(\cL^*(\sigma)\) also contains many
genuinely unbounded functionals that are not represented by measures;
see \cite[p.~136, Lemma 7]{Koblitz2} for the precise criterion
distinguishing measures within \(\cL^*(\sigma)\).

Let
\[
\Omega_\sigma=\Cp\setminus\sigma
\]
be the complement of \(\sigma\) in \(\Cp\). For \(r>0\), write
\[
\Omega_\sigma(r)=\{z\in\Cp:\operatorname{dist}(z,\sigma)\ge r\},
\]
where \(\operatorname{dist}(z,\sigma)=\inf_{x\in\sigma}|z-x|_p\).

\begin{definition}[{Krasner analytic and $H_0(\Omega_\sigma)$,
see \cite[pp.~133--134]{Koblitz2}}]
\label{def:krasner-analytic}
A function \(F:\Omega_\sigma\to\Cp\) is called \emph{Krasner
analytic} on \(\Omega_\sigma\) if, for every \(r>0\), the restriction
\(F|_{\Omega_\sigma(r)}\) is the uniform limit of rational functions
whose poles all lie in \(\sigma\).

The space \(H_0(\Omega_\sigma)\) is equipped with the locally convex
topology defined by the seminorms
\[
\|F\|_{\Omega,r}:=\sup_{z\in\Omega_\sigma(r)}|F(z)|_p,\quad r>0,
\]
where the subscript \(\Omega\) distinguishes this seminorm from the
norm \(\|\cdot\|_r\) on \(B_r(\sigma)\).
\end{definition}

Now let \(a\in\Cp\) and let \(\Gamma\in\Cp^\times\). Put
\(r=|\Gamma|_p\). Consider the circle
\[
C(a,r)=\{x\in\Cp:|x-a|_p=r\}.
\]

In 1938, Shnirelman \cite{Shnirelman} introduced the following integral, which is the \(p\)-adic analogue of the contour integral
in complex analysis.
\begin{definition}[{Shnirelman integral,
see \cite[p.~129, Definition]{Koblitz2}}]
\label{def:shnirelman}
Let \(f\) be a function admitting a convergent Laurent expansion
\[
f(x)=\sum_{k\in\Z}c_k\,(x-a)^k
\]
on an annulus containing the circle \(C(a,r)\). The \emph{Shnirelman
integral} of \(f\) around \(a\) is defined by
\[
\int_{a,\Gamma}f(x)\,\d x
=
\lim_{\substack{n\to\infty\\ p\nmid n}}
\frac1n\sum_{\xi^n=1}f(a+\xi\Gamma),
\]
whenever the right-hand side converges in \(\Cp\).
\end{definition}

\begin{proposition}[{see \cite[pp.~129--130, Lemma 1(3)]{Koblitz2}}]
\label{prop:shnirelman-evaluation}
If
\[
f(x)=\sum_{k\in\Z}c_k\,(x-a)^k
\]
converges uniformly on the circle \(|x-a|_p=r\), then
\[
\int_{a,\Gamma}f(x)\,\d x=c_0.
\]
Equivalently,
\[
\int_{a,\Gamma}(x-a)^k\,\d x=
\begin{cases}
1,&k=0,\\
0,&k\ne0.
\end{cases}
\]
\end{proposition}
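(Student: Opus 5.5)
The plan is to compute the averages over roots of unity directly, first for single monomials and then for the whole Laurent series. We start with the monomial case, since the general statement then follows by linearity and a limit argument. For $f(x)=(x-a)^k$ we have
\[
\frac1n\sum_{\xi^n=1}f(a+\xi\Gamma)=\Gamma^k\cdot\frac1n\sum_{\xi^n=1}\xi^k .
\]
By the standard orthogonality of $n$-th roots of unity, the inner sum equals $n$ if $n\mid k$ and $0$ otherwise. This holds in $\Cp$ because $p\nmid n$ makes $x^n-1$ separable, so there are exactly $n$ distinct roots, and $\frac1n$ makes sense.

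For $k=0$ every average is $1$. For $k\neq0$ the average is zero as soon as $n>|k|$, since then $n\nmid k$. So the limit over $n\to\infty$, $p\nmid n$, is $1$ or $0$ as claimed, which proves the ``equivalently'' form.

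For the general series we have
\[
\frac1n\sum_{\xi^n=1}f(a+\xi\Gamma)=\sum_{k\in\Z}c_k\Gamma^k\cdot\frac1n\sum_{\xi^n=1}\xi^k .
\]
Exchanging the finite sum over $\xi$ with the convergent series is legitimate by uniform convergence on $C(a,r)$, since $a+\xi\Gamma\in C(a,r)$ because $|\xi|_p=1$. By orthogonality this equals
\[
\sum_{m\in\Z}c_{mn}\Gamma^{mn}=c_0+\sum_{m\neq0}c_{mn}\Gamma^{mn}.
\]
Uniform convergence of the Laurent series on the circle means $|c_k|_p r^k\to0$ as $|k|\to\infty$; this is the $p$-adic convergence criterion mentioned in the introduction, namely that terms tend to $0$. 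So for any $\varepsilon>0$ there is a $K$ with $|c_k\Gamma^k|_p<\varepsilon$ for $|k|>K$. For $n>K$ every index $mn$ with $m\neq0$ has $|mn|>K$. By the ultrametric inequality the tail then has absolute value at most $\varepsilon$. Hence the averages converge to $c_0$.

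The only delicate point is the step from ``uniform convergence on the circle'' to the coefficient decay $|c_k|_pr^k\to0$. A convergent $p$-adic series at any single point of $C(a,r)$ already forces its terms $c_k(x-a)^k$ to tend to $0$, and these have absolute value $|c_k|_pr^k$. So this step is immediate. It also justifies the interchange and the uniform tail bound used above.
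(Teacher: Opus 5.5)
Your proof is correct and is essentially the standard argument behind the cited Koblitz lemma; the paper itself gives only the citation. Orthogonality of the $n$-th roots of unity handles the monomials, and your explicit ultrametric tail bound from $|c_k|_p r^k\to 0$ does the job of the uniform-limit property in Proposition~\ref{prop:shnirelman-uniform}(1). One minor correction: $\Cp$ has characteristic zero, so $x^n-1$ is separable and $1/n$ exists for every $n$, and the hypothesis $p\nmid n$ is not what makes orthogonality valid; its role in the definition is to ensure $|1/n|_p=1$, which underlies the norm estimate.
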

\begin{proof}
See \cite[p.~129--130, Lemma 1(3)]{Koblitz2}.
\end{proof}

\begin{proposition}[{Uniform convergence and norm estimate,
see \cite[p.~129, Lemma~1(1) and (2)]{Koblitz2}}]
\label{prop:shnirelman-uniform}
Let \(a\in\Cp\), \(\Gamma\in\Cp^\times\), and \(r=|\Gamma|_p\). Let
\(f_n,f\) be functions defined on the circle
\(C(a,r)=\{x\in\Cp:|x-a|_p=r\}\).

\begin{enumerate}
\item If \(f_n\to f\) uniformly on \(C(a,r)\), and if
\(\int_{a,\Gamma}f_n(x)\,\d x\) exists for each \(n\), then
\(\int_{a,\Gamma}f(x)\,\d x\) exists and
\[
\int_{a,\Gamma}f_n(x)\,\d x
\longrightarrow
\int_{a,\Gamma}f(x)\,\d x
\quad(n\to\infty).
\]
\item If \(\int_{a,\Gamma}f(x)\,\d x\) exists, then
\[
\left|\int_{a,\Gamma}f(x)\,\d x\right|_p
\le
\max_{|x-a|_p=r}|f(x)|_p.
\]
\end{enumerate}
\end{proposition}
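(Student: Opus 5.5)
The plan is to prove the two parts in turn directly from Definition~\ref{def:shnirelman}. Write $\Gamma_n=\frac1n\sum_{\xi^n=1}$ for the averaging operator at level $n$ ($p\nmid n$). Each $f_n$ and $f$ enters only through its values on the finite sets $\{a+\xi\Gamma:\xi^n=1\}$, and these all lie on the circle $C(a,r)$, because $|\xi|_p=1$ for every root of unity $\xi$.

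\textbf{Part (2).} This is the key elementary estimate, so we do it first. For $p\nmid n$ we have $|n|_p=1$, hence $|1/n|_p=1$. The strong triangle inequality then gives
\[
\Bigl|\frac1n\sum_{\xi^n=1}f(a+\xi\Gamma)\Bigr|_p\le\max_{\xi^n=1}|f(a+\xi\Gamma)|_p\le\sup_{|x-a|_p=r}|f(x)|_p .
\]
Passing to the limit preserves a non-strict bound, since $|\cdot|_p$ is continuous. If $f$ is only assumed defined and not attaining its supremum, one reads $\max$ as $\sup$. In the intended setting $f$ is a uniform limit of Laurent series on the circle, so the supremum is the relevant quantity in any case.

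\textbf{Part (1).} The approach is a Cauchy-sequence and interchange-of-limits argument. Set $\epsilon_k=\sup_{C(a,r)}|f_k-f|_p\to0$.

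First, by the estimate in part (2) applied at each finite level $n$ to $f_k-f_l$, we get $|\Gamma_n(f_k)-\Gamma_n(f_l)|_p\le\max(\epsilon_k,\epsilon_l)$ uniformly in $n$. Letting $n\to\infty$ gives $|I_k-I_l|_p\le\max(\epsilon_k,\epsilon_l)$, where $I_k=\int_{a,\Gamma}f_k$. So $(I_k)$ is Cauchy, and it converges to some $I\in\Cp$ because $\Cp$ is complete.

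Second, we show that $\Gamma_n(f)\to I$. Write
\[
\Gamma_n(f)-I=\bigl(\Gamma_n(f)-\Gamma_n(f_k)\bigr)+\bigl(\Gamma_n(f_k)-I_k\bigr)+\bigl(I_k-I\bigr).
\]
The first term is bounded by $\epsilon_k$ for all $n$. The third term is bounded by $\epsilon_k$, from the Cauchy bound after letting $l\to\infty$. Given $\varepsilon>0$, fix $k$ with $\epsilon_k<\varepsilon$, then choose $n$ large with $p\nmid n$ so that the middle term is $<\varepsilon$. The ultrametric inequality then gives $|\Gamma_n(f)-I|_p<\varepsilon$. Hence $\int_{a,\Gamma}f$ exists, equals $I$, and $I_k\to I$.

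The only point requiring care is that the bounds are uniform in $n$. This rests on $|1/n|_p=1$ for $p\nmid n$, which is exactly why the limit in the definition is taken over $n$ prime to $p$. Beyond that, no real obstacle is expected.
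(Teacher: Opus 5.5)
Your proposal is correct and follows essentially the route the paper indicates. The paper cites Koblitz and notes that both parts follow from the definition of $\int_{a,\Gamma}$ as a limit of finite averages over the points $a+\xi\Gamma\in C(a,r)$, together with the strong triangle inequality; your argument does exactly this, with $|1/n|_p=1$ for $p\nmid n$ giving bounds uniform in $n$, and with your three-term interchange-of-limits step for part (1), while your reading of the $\max$ in part (2) as a supremum (since $C(a,r)$ is not compact in $\Cp$) is a sensible clarification.
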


\begin{proof}
See \cite[p.~129, Lemma~1(1) and (2)]{Koblitz2}. Both statements follow from the
definition of the Shnirelman integral as a limit of finite averages,
together with the strong triangle inequality.
\end{proof}

We next introduce a notion that will simplify the statements of the
subsequent results.

\begin{definition}[{Admissible cover}]
\label{def:admissible-cover}
Let \(\sigma\subset\Cp\) be compact and let \(r>0\). An
\emph{admissible cover} of \(\sigma\) of radius \(r\) consists of a
finite cover
\begin{equation}\label{cover}
\sigma\subset\bigcup_{i}D_i,
\qquad
D_i=\{x\in\Cp:|x-a_i|_p<r\},
\end{equation}
where the discs \(D_i\) are pairwise disjoint clopen discs of radius
\(r\) with \(a_i\in\sigma\), together with a choice, for each \(i\),
of an element \(\Gamma_i\in\Cp^\times\) satisfying
\(|\Gamma_i|_p=r\). We write \(\{D_i,\Gamma_i\}_i\) for such an
admissible cover. 
\end{definition}
The right-hand sides of the formulas in
Proposition~\ref{prop:cauchy} and Theorem~\ref{thm:invStieltjes} below
are independent of the particular choice of the admissible cover; see
\cite[pp.~138--142]{Koblitz2} for details.

\begin{proposition}[{\(p\)-adic Cauchy integral formula,
see \cite[p.~131, Lemma~4]{Koblitz2}}]
\label{prop:cauchy}
Let \(\sigma\subset\Cp\) be compact, let \(g\in\cL(\sigma)\), and let
\(\{D_i,\Gamma_i\}_i\) be an admissible cover of \(\sigma\) of some
radius \(r>0\) small enough so that \(g\) admits an analytic
continuation to a neighbourhood of \(\mathbb D_\sigma(r)\)
(see Definition~\ref{def:admissible-cover}). Then for every
\(z\in\sigma\),
\[
\sum_i\int_{a_i,\Gamma_i}
\frac{g(x)\,(x-a_i)}{x-z}\,\d x
=
g(z).
\]
More generally, for every \(m\ge0\),
\[
\sum_i\int_{a_i,\Gamma_i}
\frac{g(x)\,(x-a_i)}{(x-z)^{m+1}}\,\d x
=
\frac{1}{m!}g^{(m)}(z).
\]
\end{proposition}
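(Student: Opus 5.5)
The plan is to reduce everything to the evaluation rule of Proposition~\ref{prop:shnirelman-evaluation}, applied disc by disc. Fix $z\in\sigma$. Since the discs $D_i$ are pairwise disjoint and cover $\sigma$, there is exactly one index $i_0$ with $z\in D_{i_0}$, that is, $|z-a_{i_0}|_p<r$. For every other index we have $|z-a_i|_p\ge r$; since discs are either disjoint or nested, in fact $z$ lies outside $D_i$. We treat the two kinds of terms separately. Throughout, $g$ is replaced by its analytic continuation to a neighbourhood of $\mathbb D_\sigma(r)$. In particular, on a closed disc $|x-a_i|_p\le r$ (or a slightly larger one) $g$ is given by a power series $g(x)=\sum_{k\ge0}b_k^{(i)}(x-a_i)^k$ that converges uniformly on the circle $C(a_i,r)$.

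\emph{The disc containing $z$.} Write $u=x-a_{i_0}$ and $w=z-a_{i_0}$, so that $|w|_p<r=|u|_p$ on the circle. Then
\[
\frac{1}{(x-z)^{m+1}}=u^{-m-1}\Bigl(1-\frac{w}{u}\Bigr)^{-m-1}=\sum_{j\ge0}\binom{m+j}{m}w^j u^{-m-1-j}.
\]
This series converges uniformly on $C(a_{i_0},r)$ because $|w/u|_p<1$ there. Multiplying by $u\,g(x)=\sum_k b_k u^{k+1}$ gives a Laurent series in $u$ that converges uniformly on the circle. Its constant term comes from the pairs with $k+1-m-1-j=0$, i.e.\ $k=m+j$, and equals
\[
\sum_{j\ge0}\binom{m+j}{m}b_{m+j}w^j .
\]
This is exactly the Taylor expansion of $\frac{1}{m!}g^{(m)}$ at $a_{i_0}$, evaluated at $z$. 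By Proposition~\ref{prop:shnirelman-evaluation}, the Shnirelman integral over this disc therefore equals $\frac{1}{m!}g^{(m)}(z)$. The case $m=0$ gives $g(z)$.

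\emph{The other discs.} For $i\ne i_0$ we have $|z-a_i|_p\ge r$, and I want to show the integral vanishes. If $|z-a_i|_p>r$, expand
\[
(x-z)^{-m-1}=\bigl(u-(z-a_i)\bigr)^{-m-1}
\]
in nonnegative powers of $u/(z-a_i)$ with $u=x-a_i$. Multiplied by $u\,g(x)$, every term is a positive power of $u$, so the constant term is $0$. The borderline case $|z-a_i|_p=r$ is the main obstacle, because the integrand has a pole on the circle itself. This can only happen if $z$ lies in $\mathbb D_\sigma(r)$ at distance exactly $r$ from $a_i$. To handle it, I would use the stated independence of the admissible cover and the fact that $g$ continues analytically to a neighbourhood of $\mathbb D_\sigma(r)$. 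Shrink to radius $r'<r$ with $r'$ close to $r$ (valued, so that some $\Gamma$ has $|\Gamma|_p=r'$), and refine the cover into discs of radius $r'$. Since $\sigma$ is compact, finitely many discs suffice, and the refinement can be chosen so that $z$ lies in exactly one of them and is at distance $\ge r>r'$ from the centres of all the others. Alternatively, I would rely on the convention in \cite{Koblitz2} that such Shnirelman integrals are taken on circles avoiding the singularity.

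Once every non-$i_0$ term is zero, summing over $i$ gives the claimed formula. The uniform convergence needed to integrate the series term by term is supplied by Proposition~\ref{prop:shnirelman-uniform}(1).
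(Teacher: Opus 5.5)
Your computation in the two non-degenerate cases is correct and follows essentially the route the paper indicates (via Koblitz): expand $(x-z)^{-m-1}$ as a Laurent series on each circle and read off the constant term with Proposition~\ref{prop:shnirelman-evaluation}. Using the Taylor series of $g$ at each centre $a_i$, instead of first reducing to monomials $(x-a)^n$ by linearity and continuity, makes no real difference.

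The gap is the borderline case $|z-a_i|_p=r$, $i\neq i_0$. You rightly single it out, but it cannot be argued away. There the $i$-th Shnirelman integral need not exist, so with the convention $|\Gamma_i|_p=r$ of Definition~\ref{def:admissible-cover} the proposition is false as stated. Take $\sigma=\Zp$, $g\equiv1$, $m=0$, $r=1$, $a_i=i$ for $0\le i\le p-1$, and $z=1$, and look at the term $i=0$. With $w=z/\Gamma_0$, so that $|w|_p=1$, one finds
\[
\frac1n\sum_{\xi^n=1}\frac{\xi\Gamma_0}{\xi\Gamma_0-z}=\frac{1}{1-w^n}.
\]
If $w^{n_0}=1$ for some $n_0$ prime to $p$, this is undefined for infinitely many admissible $n$. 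Otherwise $w\neq1$ and there is no limit along $p\nmid n$. Indeed, a limit $L$ would satisfy $|L|_p\ge1$ and force $w^n\to1-L^{-1}$. But $|w^{n+1}-w^n|_p=|w-1|_p\neq0$ for the infinitely many consecutive pairs $n,n+1$ prime to $p$. The same configuration occurs at every radius, for example $r=p^{-k}$, $a_i=0$, $z=p^k$.

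Consequently, appealing to ``independence of the admissible cover'' cannot close this case. Independence can only compare covers for which the sum exists. Moreover, the cover-independence of this particular sum is a consequence of the Cauchy formula, not an input to it.

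Your refinement is nonetheless exactly the right repair once it is made a hypothesis; your alternative appeal to a convention in the cited source amounts to the same thing. By compactness, $\rho_i:=\max_{y\in\sigma\cap D_i}|y-a_i|_p<r$. For $r'\in p^{\Q}$ with $\max_i\rho_i<r'<r$, the refined cover is simply $\{D_{a_i}(r')\}_i$ with the same centres. This, rather than the finiteness of the cover, is why $z$ then lies at distance $\ge r>r'$ from every other centre.

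So the statement should require each circle $C(a_i,|\Gamma_i|_p)$ to avoid $\sigma$, for instance $\rho_i<|\Gamma_i|_p<r$ with $g\in B_r(\sigma)$. Under that hypothesis:
\begin{enumerate}
\item your two cases are exhaustive;
\item the Taylor series of $g$ at $a_i$ converges uniformly on the circle automatically, because $|\Gamma_i|_p<r$, so no continuation beyond $\mathbb D_\sigma(r)$ is needed;
\item your argument is then complete.
\end{enumerate}
The paper's proof is a bare citation and never confronts this. The same convention $|\Gamma_i|_p=r$ also invalidates the later claim, in the proof of Theorem~\ref{thm:compat}, that these circles lie in $\Omega_\sigma(r_i)$.
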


\begin{proof}
See \cite[pp.~131--132, Lemma~4]{Koblitz2}. The proof reduces, by
linearity and continuity, to the case \(g(x)=(x-a)^n\) for some
\(a\in\sigma\) and \(n\ge0\), and then uses the Laurent expansion of
\((x-z)^{-m-1}\) together with
Proposition~\ref{prop:shnirelman-evaluation}.
\end{proof}

\begin{remark}
Thus the Shnirelman integral is the \(p\)-adic analogue of taking the
constant term in a Laurent expansion, rather than of the classical
residue coefficient \(c_{-1}\). For the \(p\)-adic versions of the
residue theorem and the maximum modulus principle we refer to
\cite[p.~132, Lemmas~5 and~6]{Koblitz2}.
\end{remark}

\begin{definition}[{Stieltjes transform,
see \cite[p.~137, Definition]{Koblitz2}}]
\label{def:stieltjes-transform}
Let \(\mu\in\cL^*(\sigma)\). For \(z\in\Omega_\sigma\), the function
\[
x\longmapsto\frac{1}{z-x}
\]
is locally analytic on \(\sigma\). The \emph{Stieltjes transform} of
\(\mu\) is the function \(S_\mu:\Omega_\sigma\to\Cp\) defined by
\[
S_\mu(z)
=\left(\mu(x),\frac{1}{z-x}\right),
\quad z\in\Omega_\sigma,
\]
where \((\mu,\varphi)\) denotes the action of \(\mu\) on
\(\varphi\in\cL(\sigma)\).
\end{definition}

The following theorem is the main bridge between resolvents and
spectral distributions.

\begin{theorem}[{Vishik, the inverse Stieltjes theorem,
see \cite[pp.~138--142, Theorem]{Koblitz2}}]
\label{thm:invStieltjes}
The map
\[
S:\cL^*(\sigma)\longrightarrow H_0(\Omega_\sigma),
\quad
\mu\longmapsto S_\mu,
\]
is an isomorphism of locally convex \(\Cp\)-vector spaces. Its inverse
\[
V:H_0(\Omega_\sigma)\longrightarrow\cL^*(\sigma)
\]
is described as follows. Let \(F\in H_0(\Omega_\sigma)\) and
\(f\in\cL(\sigma)\). Choose an admissible cover
\(\{D_i,\Gamma_i\}_i\) of \(\sigma\) of some radius \(r>0\) small
enough so that \(f\) admits an analytic continuation to a
neighbourhood of \(\{x\in\Cp:\operatorname{dist}(x,\sigma)\le r\}\)
(see Definition~\ref{def:admissible-cover}). Then
\begin{equation}\label{VF}
(VF)(f)
=
\sum_i
\int_{a_i,\Gamma_i}
F(x)\,f(x)\,(x-a_i)\,\d x.
\end{equation}
The right-hand side is independent of the choice of the radius \(r\),
of the admissible cover, and of the choices of \(\Gamma_i\).
\end{theorem}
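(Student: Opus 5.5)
The plan is to prove the theorem in four steps: $S_\mu$ lies in $H_0(\Omega_\sigma)$ and $S$ is continuous, $V$ is well defined and continuous, $V\circ S=\mathrm{id}$, and $S\circ V=\mathrm{id}$. Throughout, the basic tool is the Laurent expansion on annuli, evaluated by Proposition~\ref{prop:shnirelman-evaluation} and controlled by Proposition~\ref{prop:shnirelman-uniform}.

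\emph{Step 1: $S$ maps into $H_0(\Omega_\sigma)$ and is continuous.} Fix $\mu\in\cL^*(\sigma)$ and $r>0$. Cover $\sigma$ by finitely many disjoint discs $D_i$ of radius $r$. For $z\in\Omega_\sigma(r)$ and $x\in D_i$, expand
\[
\frac{1}{z-x}=\sum_{n\ge0}\frac{(x-a_i)^n}{(z-a_i)^{n+1}}.
\]
This series converges in $B_{r'}(\sigma)$ for any $r'<r$, and its norm is bounded by $\max(1/r, r'^n/r^{n+1})$. Applying $\mu$ termwise gives
\[
S_\mu(z)=\sum_i\sum_n \mu\bigl(\mathbf 1_{D_i}(x-a_i)^n\bigr)(z-a_i)^{-n-1}.
\]
This is a uniform limit of rational functions with poles at the $a_i\in\sigma$, so $S_\mu$ is Krasner analytic. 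It vanishes at infinity, and
\[
\|S_\mu\|_{\Omega,r}\le \|\mu\|_{r'}/r,
\]
which gives continuity of $S$.

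\emph{Step 2: $V$ is well defined and continuous.} Take $F\in H_0(\Omega_\sigma)$ and an admissible cover of radius $r$. On the circle $|x-a_i|=r$, approximate $F$ uniformly by rational functions with poles in $\sigma$. For rational $F$, and for $f$ analytic on the closed discs, compute the Shnirelman integral directly: it extracts the constant term of $F f\,(x-a_i)$. Independence of the radius and of the cover then follows as in a residue theorem, because the constant term depends only on the principal parts at poles inside $D_i$. One also checks independence of $\Gamma_i$, i.e. of the point on the circle. Pass to general $F$ by Proposition~\ref{prop:shnirelman-uniform}(1). The estimate in Proposition~\ref{prop:shnirelman-uniform}(2) gives
\[
|(VF)(f)|\le \|F\|_{\Omega,r}\,\|f\|\,r,
\]
so $VF\in\cL^*(\sigma)$ and $V$ is continuous.

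\emph{Step 3: $V\circ S=\mathrm{id}$.} Write $F=S_\mu$ using the expansion from Step 1. Interchanging $\mu$ with the Shnirelman integral, which is legitimate by uniform convergence, gives
\[
(VS_\mu)(f)=\Bigl(\mu(y),\ \sum_i\int_{a_i,\Gamma_i}\frac{f(x)(x-a_i)}{x-y}\,\d x\Bigr).
\]
By Proposition~\ref{prop:cauchy} with $m=0$ (note $\frac{1}{x-y}$ with $y\in\sigma$ inside $D_i$), the inner expression equals $f(y)$, so $(VS_\mu)(f)=\mu(f)$. The main obstacle is justifying this interchange. I would first establish it for the finitely many truncated sums and then pass to the limit, using that the inner function of $y$ lies in a fixed $B_{r'}(\sigma)$ with uniformly controlled norm.

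\emph{Step 4: $S\circ V=\mathrm{id}$.} Compute
\[
S_{VF}(z)=\sum_i\int_{a_i,\Gamma_i}\frac{F(x)(x-a_i)}{z-x}\,\d x .
\]
For rational $F$ with poles in $\sigma$ vanishing at infinity, a direct partial-fraction computation with constant terms shows this equals $F(z)$. This is a $p$-adic residue theorem for the total "contour", and it uses that $F(\infty)=0$. The general case follows by density and the continuity established in Steps 1 and 2.

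Together these show that $S$ and $V$ are mutually inverse continuous maps, which proves the theorem.
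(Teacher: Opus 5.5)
Your proposal is correct and follows the same route as the proof the paper cites from Koblitz: you show that $S$ and $V$ are continuous and mutually inverse. You do this using the Shnirelman estimates of Proposition~\ref{prop:shnirelman-uniform}, the $p$-adic Cauchy formula (Proposition~\ref{prop:cauchy}) for $V\circ S=\mathrm{id}$, and a residue computation for rational $F$, together with density and $F(\infty)=0$, for $S\circ V=\mathrm{id}$.

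One point you should make explicit is that Steps~2--4 tacitly require the circles $|x-a_i|_p=r$ to avoid $\sigma$, i.e.\ $|a_i-a_j|_p>r$ for $i\neq j$; otherwise a pole of $F$ can lie on a circle and the Shnirelman averages are undefined. The paper's definition of an admissible cover omits this condition, but it can always be arranged by taking $r$ outside the countable set $\{|a-b|_p : a,b\in\sigma\}$.
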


\begin{proof}
See \cite[pp.~138--142, Theorem]{Koblitz2}. The proof proceeds by
verifying that \(S\) and \(V\) are continuous and mutually inverse; the
key ingredients are the Shnirelman integral estimates
\cite[p.~129, Lemma 1]{Koblitz2}, the \(p\)-adic Cauchy integral
formula \cite[p.~131, Lemma 4]{Koblitz2}, and the characterization of
measures within \(\cL^*(\sigma)\) \cite[p.~136, Lemma 7]{Koblitz2}.
\end{proof}

\begin{remark}[{see \cite[p.~136, Lemma 7 and p.~142,
Remark]{Koblitz2}}]
The space \(\cL^*(\sigma)\) is much larger than the space of bounded
measures on \(\sigma\). The subspace of measures is characterized by the
boundedness of the norm \(\|\mu\|_r\) as \(r\to0\). The intermediate
class of \(h\)-admissible measures, studied by Amice--V\'elu
\cite{AmiceVelu} and Vishik \cite{Vishik1985}, corresponds to
functions \(F\in H_0(\Omega_\sigma)\) for which
\(r^{h+1}\|F\|_r\to0\) as \(r\to0\).
\end{remark}

\subsection{Spectral distributions from resolvents}\label{sec:spectral-distribution}

We are now in a position to associate a $p$-adic spectral distribution to a suitable operator via the inverse Stieltjes theorem. This is the key construction that allows us to define spectral zeta functions for continuous spectra. We first introduce the class of operators to which our method applies, then prove that their resolvents give rise to unique generalized distributions.

\begin{definition}[{Admissible operator; cf. \cite[pp.~138--142, Theorem]{Koblitz2} and \cite{Vishik1985}}]
\label{def:admissible-operator}
Let $T$ be a bounded linear operator on a $\Cp$-Banach space. We say that $T$ is \emph{admissible} if there exists a compact set $\sigma\subset\Cp$ such that:
\begin{enumerate}
\item $\sigma(T)\subset\sigma$, where $\sigma(T)$ denotes the spectrum of $T$;
\item for every \(z\in\Omega_\sigma\), the resolvent \(R(z)=(zI-T)^{-1}\) exists as a bounded operator;
\item there exists a continuous linear functional $\ell$ on the Banach space such that the scalar function
\[
F_T(z):=\ell(R(z))
\]
belongs to $H_0(\Omega_{\sigma})$.
\end{enumerate}
\end{definition}

\begin{remark}
The notion of an admissible operator is not explicitly defined in \cite{Koblitz2}.
It abstracts the conditions under which the inverse Stieltjes transform of \cite[pp.~138--142, Theorem]{Koblitz2} and the Vishik spectral theorem of \cite{Vishik1985} can be applied.
In particular, condition (3) ensures that the scalar resolvent $F_T$ lies in $H_0(\Omega_{\sigma})$, so that the inverse Stieltjes theorem produces a unique generalized distribution $\mu_T\in\cL^*(\sigma)$.
\end{remark}

\begin{definition}[{$p$-adic spectral distribution, cf. \cite[pp.~138--142, Theorem]{Koblitz2}}]
\label{def:spectral-distribution}
For an admissible operator $T$ with associated function $F_T\in H_0(\Omega_{\sigma})$, let $\mu_T\in\cL^*(\sigma)$ be the unique generalized distribution satisfying
\[
S_{\mu_T}(z)=F_T(z)
\]
for all $z\in\Omega_{\sigma}$. We call $\mu_T$ the \emph{$p$-adic spectral distribution} of $T$ (relative to the functional $\ell$).
\end{definition}

\begin{remark}
The existence and uniqueness of $\mu_T$ follow from the inverse Stieltjes theorem
\cite[pp.~138--142, Theorem]{Koblitz2}, which states that
$S:\cL^*(\sigma)\to H_0(\Omega_{\sigma})$ is a bijection.
The term ``$p$-adic spectral distribution'' is not used in \cite{Koblitz2};
it is introduced here to emphasize the spectral interpretation.
\end{remark}
The following proposition gives a concrete way to compute the spectral distribution in terms of the resolvent expansion at infinity.

\begin{proposition}[{Expansion at infinity; cf. \cite[pp.~138--142, Theorem]{Koblitz2}}]
\label{prop:expansion}
Let $T$ be an admissible operator with $\sigma\subset\{z\in\Cp:|z|_p\le R\}$ for some $R>0$. Then for $|z|_p>R$, we have
\[
F_T(z)=\sum_{m=0}^{\infty}\frac{\mu_T(x^m)}{z^{m+1}},
\]
where $\mu_T(x^m)$ denotes the action of the generalized distribution $\mu_T$ on the monomial $x^m$.
\end{proposition}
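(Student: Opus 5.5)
By Definition~\ref{def:spectral-distribution} we have $F_T=S_{\mu_T}$ on $\Omega_\sigma$. So for $|z|_p>R$ (such $z$ lie in $\Omega_\sigma$ because $\sigma$ is contained in the closed disc of radius $R$) we get
\[
F_T(z)=\left(\mu_T(x),\frac{1}{z-x}\right).
\]
The plan is to expand the kernel $1/(z-x)$ as a geometric series in $x/z$, show that this series converges in the topology of $\cL(\sigma)$, and then use the continuity and linearity of $\mu_T$ to exchange $\mu_T$ with the summation.

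First, fix $z$ with $|z|_p>R$ and consider the partial sums
\[
g_N(x)=\sum_{m=0}^{N}\frac{x^m}{z^{m+1}}.
\]
Then
\[
\frac{1}{z-x}-g_N(x)=\frac{x^{N+1}}{z^{N+1}(z-x)}.
\]
Next, pick a radius $r$ with $0<r\le R$ and $\max(R,r)<|z|_p$; for instance take $r$ small. On $\mathbb D_\sigma(r)$ every point $y$ satisfies $|y|_p\le\max(R,r)=R<|z|_p$. By the strong triangle inequality this gives $|z-y|_p=|z|_p$. Hence
\[
\Bigl\|\tfrac{1}{z-x}-g_N\Bigr\|_r\le \frac{1}{|z|_p}\Bigl(\frac{R}{|z|_p}\Bigr)^{N+1}\longrightarrow 0.
\]
We also check that $1/(z-x)$ and the polynomials $g_N$ lie in $B_r(\sigma)$. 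This holds because $1/(z-y)$ is a convergent power series on each disc $D_a(r)$, since $z\notin\overline{D_a(r)}$.

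So $g_N\to 1/(z-x)$ in the Banach space $B_r(\sigma)$, and therefore also in the inductive limit topology of $\cL(\sigma)$. Since $\mu_T$ is continuous on $\cL(\sigma)$, its restriction to $B_r(\sigma)$ is continuous, with finite norm $\|\mu_T\|_r$ as in \eqref{norm}. Hence
\[
\mu_T\Bigl(\tfrac{1}{z-x}\Bigr)=\lim_{N\to\infty}\mu_T(g_N)=\lim_{N\to\infty}\sum_{m=0}^N\frac{\mu_T(x^m)}{z^{m+1}}.
\]
This proves the formula. As a byproduct we get the bound $|\mu_T(x^m)|_p\le\|\mu_T\|_r\max(1,R)^m$, which shows directly that the series converges for $|z|_p>R$.

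The only delicate step is the choice of $r$, so that the enlarged set $\mathbb D_\sigma(r)$ still stays strictly inside $|y|_p<|z|_p$, which makes the uniform estimate hold on it. The case $r\le R$ handles this, and one should note that $\|\mu_T\|_r<\infty$ is exactly the continuity of $\mu_T$ restricted to $B_r(\sigma)$. Alternatively, one could invoke the explicit formula \eqref{VF} for $V$ and check the identity via Proposition~\ref{prop:shnirelman-evaluation}. The direct continuity argument above avoids that.
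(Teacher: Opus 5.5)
Your proof is correct and follows the same route as the paper: expand $1/(z-x)$ as a geometric series and interchange with $\mu_T$ by continuity. Your check that the partial sums converge in the norm of $B_r(\sigma)$ on the thickening $\mathbb D_\sigma(r)$ is an improvement. The paper only says the series converges uniformly on $\sigma$, and your version is what continuity of a general element of $\cL^*(\sigma)$ actually requires.

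One small slip in your closing remark. Since $|y|_p\le R$ on $\mathbb D_\sigma(r)$ for $r\le R$, the bound should read $|\mu_T(x^m)|_p\le\|\mu_T\|_r R^m$. With $\max(1,R)^m$ you would not get convergence for $R<|z|_p\le 1$ when $R<1$, although your main argument already gives convergence anyway.
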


\begin{proof}
By Definition~\ref{def:spectral-distribution}, \(F_T=S_{\mu_T}\), where
\[
S_{\mu_T}(z)=\left(\mu_T(x),\frac{1}{z-x}\right),\quad z\notin\sigma.
\]
Let \(|z|_p>R\). Since \(\sigma\subset\{z\in\Cp:|z|_p\le R\}\), for every \(x\in\sigma\) we have \(|x/z|_p<1\), so the geometric series
\[
\frac{1}{z-x}
=
\frac{1}{z}\cdot\frac{1}{1-x/z}
=
\sum_{m=0}^{\infty}\frac{x^m}{z^{m+1}}
\]
converges uniformly on \(\sigma\). Because \(\mu_T\in\cL^*(\sigma)\) is continuous, it commutes with this uniformly convergent series, and therefore
\[
F_T(z)
=
S_{\mu_T}(z)
=
\left(\mu_T(x),\frac{1}{z-x}\right)
=
\sum_{m=0}^{\infty}\frac{\mu_T(x^m)}{z^{m+1}}.
\]
This proves the stated expansion. See also \cite[pp.~140--141]{Koblitz2} for the same computation in the proof of Vishik's inverse Stieltjes theorem.
\end{proof}

\section{Bosonic and fermionic $p$-adic spectral zeta functions}\label{sec:zeta}

Having constructed the \(p\)-adic spectral distribution
\(\mu_T\in\cL^*(\sigma)\) from the resolvent in
Section~\ref{sec:spectral-distribution}, we now define the associated
spectral zeta functions. Two parallel constructions are considered,
following the two measures introduced in Section~\ref{sec:measures}:
the bosonic zeta function, modeled on the unbounded Haar distribution,
and the fermionic zeta function, modeled on the bounded
\(\mu_{-1}\)-measure. This distinction is reflected in the analytic
properties of the two functions: the bosonic case requires \(C^1\)
estimates, whereas the fermionic case only requires continuity.

For each case we establish the analyticity of the zeta function in
\((s,\lambda)\), derive special value formulas at non-positive
integers, and prove a compatibility theorem showing that the zeta
function of \cite{HuKim} is recovered as the zeta function associated
with a spectral distribution constructed by the inverse Stieltjes
transform.

\subsection{Bosonic zeta function}

We begin with the bosonic case, which is an extension  of our previous zeta function with the Haar distribution replaced by a general spectral distribution $\mu$.

Here and below, for $\mu\in\cL^*(\sigma)$, we denote by $\supp(\mu)$ the support of $\mu$, defined as the complement in $\sigma$ of the largest open set $U\subset\sigma$ on which $\mu$ vanishes identically.

\begin{definition}[{Bosonic $p$-adic spectral zeta function, see \cite[Definition I.1]{HuKim}}]
\label{def:bosonic-zeta}
Let $\mu\in\cL^*(\sigma)$ be a $p$-adic spectral distribution. For $s\in\Zp\setminus\{1\}$ and $\lambda\in\Cp$ such that $-\lambda\notin\supp(\mu)$, define the \emph{bosonic $p$-adic spectral zeta function} by
\[
\zeta_p^\mu(s,\lambda)=\frac{1}{s-1}\int_\sigma \langle \lambda+x\rangle^{1-s}\,\d\mu(x).
\]
\end{definition}

Our goal in this subsection is to prove the analyticity theorem
(Theorem~\ref{thm:welldef}), which extends
\cite[Theorem III.2]{HuKim}. The proof follows the same overall
strategy as that of \cite[Theorem III.2]{HuKim}. However, in the
present setting \(\sigma\) is an arbitrary compact subset of \(\Cp\)
rather than \(\Zp\), and \(\mu\) is a generalized distribution rather
than a concrete measure. These two generalizations introduce several
technical complications. To handle them in a clean and modular way, we
first construct the open set \(\Omega\) on which \(\lambda+x\) stays
away from zero, and then isolate the relevant properties of the
auxiliary function \(g_{\lambda,s}(x)=\langle\lambda+x\rangle^{1-s}\)
in Lemma~\ref{lem:g-lambda-s}.

We begin with the construction of \(\Omega\). Since \(\sigma\subset\Cp\)
is compact and the map \(x\mapsto|x|_p\) is continuous, the maximum
\[
\max_{x\in\sigma}|x|_p
\]
exists and is finite. Choose \(\lambda_0\in\Cp\) with
\[
|\lambda_0|_p>\max_{x\in\sigma}|x|_p.
\]
Then
\[
\delta:=\operatorname{dist}(-\lambda_0,\sigma)
=\min_{x\in\sigma}|\lambda_0+x|_p>0.
\]
Define
\begin{equation}\label{omega}
\Omega=\{\lambda\in\Cp:|\lambda-\lambda_0|_p<\delta/2\}.
\end{equation}
For every \(\lambda\in\Omega\) and every \(x\in\sigma\), we have
\[
|\lambda-\lambda_0|_p<\delta/2<\delta\le|\lambda_0+x|_p.
\]
Applying the ultrametric inequality with \(a=\lambda_0+x\) and
\(b=\lambda-\lambda_0\), we obtain
\[
|\lambda+x|_p=|(\lambda_0+x)+(\lambda-\lambda_0)|_p
=|\lambda_0+x|_p\ge\delta>0.
\]
Thus \(\lambda+x\neq0\) for all \(\lambda\in\Omega\) and
\(x\in\sigma\). In particular, \(\langle\lambda+x\rangle\) is
well-defined on \(\sigma\) and belongs to \(1+\mathfrak m_{\Cp}\).
Since \(\operatorname{supp}(\mu)\subset\sigma\), the same holds for
every \(x\in\operatorname{supp}(\mu)\).

Having constructed \(\Omega\), we now isolate the technical properties
of the \(p\)-adic power--exponential function
\(g_{\lambda,s}(x)=\langle\lambda+x\rangle^{1-s}\) that will be used
in the proof of the main analyticity theorem. This is the content of
the following lemma.

Recall from Definition~\ref{def:Br} that for \(r>0\),
\[
\mathbb D_\sigma(r)=\bigcup_{a\in\sigma}D_a(r),
\quad
D_a(r)=\{x\in\mathbb C_p:|x-a|_p<r\},
\]
and \(B_r(\sigma)\) is the space of functions on \(\mathbb D_\sigma(r)\) that are given by a convergent power series on each disc \(D_a(r)\), equipped with the norm
\[
\|f\|_r=\sup_{x\in\mathbb D_\sigma(r)}|f(x)|_p.
\]
Recall also that \(\cL(\sigma)=\bigcup_{r>0}B_r(\sigma)\) is the space of locally analytic functions on \(\sigma\), and that \(\cL^*(\sigma)\) denotes its continuous dual.

\begin{lemma}[{Analyticity of $\langle\lambda+x\rangle^{1-s}$}] \label{lem:g-lambda-s}
Let \(\sigma\subset\Cp\) be compact, and let \(\Omega\) be as in \eqref{omega}. For every \(\lambda\in\Omega\) and \(s\in\Zp\), define
\[
g_{\lambda,s}(x):=\langle\lambda+x\rangle^{1-s}.
\]
Then the following hold.

\begin{enumerate}
\item[(i)] For each \(\lambda\in\Omega\), there exists \(r>0\) such that \(g_{\lambda,s}\in B_r(\sigma)\) for all \(s\in\Zp\), and
\[
M(\lambda):=\sup_{x\in\mathbb D_\sigma(r)}\bigl|\log_p\langle\lambda+x\rangle\bigr|_p<\infty.
\]

\item[(ii)] Fix \(s_0\in\Zp\setminus\{1\}\), and let \(J\subset\Zp\setminus\{1\}\) be a compact neighbourhood of \(s_0\). There exists \(r>0\), independent of \(s\in J\), such that for \(h=s-s_0\),
\[
g_{\lambda,s}(x)
=
\sum_{n=0}^{\infty}h^n\psi_n(x),
\quad
\psi_n(x)
=
\frac{(-1)^n}{n!}
\langle\lambda+x\rangle^{1-s_0}
\bigl(\log_p\langle\lambda+x\rangle\bigr)^n,
\]
with \(\psi_n\in B_r(\sigma)\) for all \(n\ge0\), and the series converges absolutely in \(B_r(\sigma)\) whenever
\[
|h|_p<\frac{p^{-1/(p-1)}}{M(\lambda)}.
\]

\item[(iii)] For every compact \(K\subset\Omega\), there exist \(r>0\) and \(\rho<1\), independent of \(\lambda\in K\), such that \(g_{\lambda,s}\in B_r(\sigma)\) for all \(\lambda\in K\) and
\[
|\langle\lambda+x\rangle-1|_p\le\rho
\quad(\lambda\in K,\ x\in\mathbb D_\sigma(r)).
\]
\end{enumerate}
\end{lemma}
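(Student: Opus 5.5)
The plan is to reduce everything to a single geometric fact: on a small enough neighbourhood $\mathbb D_\sigma(r)$ of $\sigma$, the point $\lambda+x$ stays in one fixed disc on which $p^{v_p(\cdot)}$ and the Teichm\"uller factor are constant. Throughout, $\lambda\in\Omega$ and $\delta$ are as in \eqref{omega}.

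\textbf{The geometric fact.} Take $r<\delta$ and $x\in\mathbb D_\sigma(r)$, so $|x-a|_p<r$ for some $a\in\sigma$. Since $|\lambda+a|_p\ge\delta>r$, the ultrametric inequality gives
\[
|\lambda+x|_p=|\lambda+a|_p .
\]
Thus on each disc $D_a(r)$ the factor $p^{v_p(\lambda+x)}$ is constant. The map $x\mapsto\widehat{\lambda+x}$ is constant on $D_a(r)$ once $r$ is smaller than $|\lambda+a|_p$, because $|(\lambda+x)-(\lambda+a)|_p<|\lambda+a|_p$ forces equal Teichm\"uller representatives. Consequently, on each $D_a(r)$,
\[
\langle\lambda+x\rangle=c_a(\lambda+x),\qquad c_a=\bigl(p^{v_p(\lambda+a)}\,\widehat{\lambda+a}\bigr)^{-1},
\]
which is a polynomial in $x$. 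It takes values in $1+\mathfrak m_{\Cp}$, and more precisely
\[
|\langle\lambda+x\rangle-1|_p\le\max\Bigl(|\langle\lambda+a\rangle-1|_p,\ \frac{r}{|\lambda+a|_p}\Bigr)<1 .
\]

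\textbf{Part (i).} By compactness of $\sigma$ and continuity of $a\mapsto|\langle\lambda+a\rangle-1|_p$ on $\sigma$, this gives a uniform bound
\[
\rho:=\sup_{x\in\mathbb D_\sigma(r)}|\langle\lambda+x\rangle-1|_p<1 .
\]
The same continuity argument also controls $|\lambda+a|_p$ from above and below on $\sigma$. The binomial series $\sum_n\binom{1-s}{n}(\langle\lambda+x\rangle-1)^n$ then has terms of size $\le\rho^n\to0$, since $|\binom{1-s}{n}|_p\le1$ for $s\in\Zp$. Each partial sum is a polynomial in $x$ on each $D_a(r)$, so the limit is a uniform limit of analytic functions on $D_a(r)$. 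Such a limit is again given by a convergent power series, hence $g_{\lambda,s}\in B_r(\sigma)$. Finally, $\log_p\langle\lambda+x\rangle=\sum_{n\ge1}(-1)^{n+1}(\langle\lambda+x\rangle-1)^n/n$ is bounded by $\sup_n\rho^n/|n|_p<\infty$, which gives $M(\lambda)<\infty$.

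\textbf{Part (ii).} This part is the main obstacle, because the identity $\langle u\rangle^{t}=\exp_p(t\log_p\langle u\rangle)$ is needed with an explicit convergence radius. First I would shrink $r$ so that $\rho<p^{-1/(p-1)}$; this keeps $r$ independent of $s$, hence of $s\in J$. On the disc $|y-1|_p<p^{-1/(p-1)}$:
\begin{itemize}
\item the map $\log_p$ is an isometry onto the disc of radius $p^{-1/(p-1)}$;
\item $\exp_p$ inverts it there;
\item $y^t=\exp_p(t\log_p y)$ for $t\in\Zp$, and both sides are analytic in $t$.
\end{itemize}
Writing $1-s=(1-s_0)-h$ gives
\[
g_{\lambda,s}=\langle\lambda+x\rangle^{1-s_0}\,\exp_p\bigl(-h\log_p\langle\lambda+x\rangle\bigr),
\]
and expanding $\exp_p$ yields exactly the stated $\psi_n$. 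Each $\psi_n$ is a product of elements of $B_r(\sigma)$, and $B_r(\sigma)$ is a Banach algebra under the sup norm, so $\psi_n\in B_r(\sigma)$. Moreover
\[
\|h^n\psi_n\|_r\le\frac{(|h|_pM(\lambda))^n}{|n!|_p}\le\Bigl(|h|_pM(\lambda)\,p^{1/(p-1)}\Bigr)^n .
\]
This tends to $0$ under the stated hypothesis on $|h|_p$, which gives absolute convergence in $B_r(\sigma)$.

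\textbf{Part (iii).} For a compact $K\subset\Omega$, every $\lambda\in K$ satisfies $|\lambda+a|_p=|\lambda_0+a|_p\ge\delta$. Hence the bound $r/|\lambda+a|_p$ is uniform once $r<\delta$ is fixed. It remains to show that $\sup_{\lambda\in K,\,a\in\sigma}|\langle\lambda+a\rangle-1|_p<1$. This follows by compactness of $K\times\sigma$ and continuity of $(\lambda,a)\mapsto\langle\lambda+a\rangle$, which holds because $\lambda+a$ avoids $0$. Repeating the argument of Part (i) with this $r$ gives a uniform $\rho<1$ and $g_{\lambda,s}\in B_r(\sigma)$ for all $\lambda\in K$.
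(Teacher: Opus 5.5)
\textbf{Gap in (ii).} The proof breaks at the step ``shrink $r$ so that $\rho<p^{-1/(p-1)}$''. Because $\sigma\subset\mathbb D_\sigma(r)$, your $\rho$ is at least $\max_{a\in\sigma}|\langle\lambda+a\rangle-1|_p$ for every $r$, and this number does not depend on $r$. Over $\Cp$ it can be $\ge p^{-1/(p-1)}$.

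Concretely, take $\sigma=\{0,\zeta_p-1\}$ with $\zeta_p$ a primitive $p$-th root of unity, and $\lambda=\lambda_0=1$. Then $\delta=1$ and $\langle\lambda+(\zeta_p-1)\rangle=\zeta_p$, so $\rho\ge|\zeta_p-1|_p=p^{-1/(p-1)}$ for all $r$.

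This is not just a defect of your route. For $r<p^{-1/(p-1)}$ one checks that $M(1)\le r$, so every $h\in\Zp$ satisfies $|h|_p<p^{-1/(p-1)}/M(1)$. Yet at $x=\zeta_p-1$ the series equals $\zeta_p^{1-s_0}$ (because $\log_p\zeta_p=0$), while $g_{1,s}(x)=\zeta_p^{1-s}$. These differ whenever $s\not\equiv s_0\pmod p$. Using $\zeta_{p^k}$ instead gives failures for some $s\equiv s_0\pmod{p^{k-1}}$, so shrinking $J$ does not help.

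So the identity $y^{-h}=\exp_p(-h\log_p y)$ on the set of values $\langle\lambda+x\rangle$ is exactly the missing step, and for small $r$ it can be false. The paper's proof does not supply it either. It simply writes $g_{\lambda,s}=\langle\lambda+x\rangle^{1-s_0}\exp_p(-h\log_p\langle\lambda+x\rangle)$, with the small $r$ from (i), and the same example applies.

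\textbf{Two repairs.} The first is to add the hypothesis $\max_{a\in\sigma}|\langle\lambda+a\rangle-1|_p<p^{-1/(p-1)}$; this holds, e.g., if $\lambda_0\in\Qp$ and $\sigma\subset\Qp$. After that your argument runs verbatim.

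The second is to move $r$ the other way. Since $|\lambda+a|_p=\delta$ for every $a\in\sigma$, choose $r$ with $\max_{a\in\sigma}|\langle\lambda+a\rangle-1|_p<r/\delta<1$.
\begin{itemize}
\item By your affine formula $\langle\lambda+x\rangle=c_a(\lambda+x)$, the map $x\mapsto\langle\lambda+x\rangle$ sends each $D_a(r)$ onto the single disc $B=\{y:|y-1|_p<r/\delta\}$. Hence $M(\lambda)=\sup_{y\in B}|\log_p y|_p<\infty$.
\item For $|h|_pM(\lambda)<p^{-1/(p-1)}$, the function $y\mapsto y^{h}\exp_p(-h\log_p y)$ is a power series in $y-1$ on $B$. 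It takes values in $\ker\log_p=\mu_{p^\infty}$ and equals $1$ at $y=1$.
\item Distinct $p$-power roots of unity are at distance at least $p^{-1/(p-1)}$, so the function is identically $1$ near $y=1$, hence on all of $B$.
\end{itemize}
This gives the factorization, and your estimate for $\|\psi_n\|_r$ then finishes (ii).

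\textbf{Parts (i) and (iii).} These are correct. Through the affine formula on $D_a(r)$, they actually prove the membership in $B_r(\sigma)$ that the paper's continuity-plus-compactness argument only asserts.
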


\begin{proof}
(i) We first prove the local boundedness. By construction of
\(\Omega\), \(\lambda+x\neq0\) for all \(x\in\sigma\), so
\(|\langle\lambda+x\rangle-1|_p<1\) for every \(x\in\sigma\). Fix
\(a\in\sigma\). Since \(x\mapsto\langle\lambda+x\rangle\) is
continuous at \(a\), there exists an open neighbourhood \(U_a\) of
\(a\) in \(\sigma\) and a constant \(\rho_a<1\) such that
\[
|\langle\lambda+x\rangle-1|_p\le\rho_a<1
\quad(x\in U_a).
\]
On \(U_a\), the \(\log_p\)-series
\[
\log_p\langle\lambda+x\rangle
=
\sum_{n=1}^{\infty}
\frac{(-1)^{n+1}}{n}\bigl(\langle\lambda+x\rangle-1\bigr)^n
\]
converges uniformly, since its \(n\)-th term is bounded by
\(\rho_a^n\,n\), which is summable because \(\rho_a<1\). Hence there
exists a constant \(C_a<\infty\) such that
\[
|\log_p\langle\lambda+x\rangle|_p\le C_a
\quad(x\in U_a).
\]

Since \(\sigma\) is compact and \(\{U_a\}_{a\in\sigma}\) is an open
cover of \(\sigma\), there exist finitely many points
\(a_1,\dots,a_m\in\sigma\) such that
\[
\sigma\subset U_{a_1}\cup\cdots\cup U_{a_m}.
\]
Set
\[
\rho:=\max_{1\le i\le m}\rho_{a_i}<1,
\quad
C:=\max_{1\le i\le m}C_{a_i}<\infty.
\]
Choose \(r>0\) small enough so that for every \(a\in\sigma\), the
disc \(D_a(r)\) is contained in one of the \(U_{a_i}\). Then for every
\(x\in\mathbb D_\sigma(r)\), we have
\[
|\langle\lambda+x\rangle-1|_p\le\rho<1,
\quad
|\log_p\langle\lambda+x\rangle|_p\le C.
\]
Therefore
\[
M(\lambda)=\sup_{x\in\mathbb D_\sigma(r)}
\bigl|\log_p\langle\lambda+x\rangle\bigr|_p\le C<\infty.
\]
This proves \(M(\lambda)<\infty\).

Finally, since \(\langle\lambda+x\rangle^{1-s}\) and
\(\log_p\langle\lambda+x\rangle\) both belong to \(B_r(\sigma)\), and
\(B_r(\sigma)\) is closed under multiplication, it follows that
\(g_{\lambda,s}\in B_r(\sigma)\) for all \(s\in\Zp\).

(ii) We expand \(g_{\lambda,s}\) in powers of \(h=s-s_0\).
Since \(\lambda+x\neq0\) on \(\sigma\), the function
\(\log_p\langle\lambda+x\rangle\) is well-defined and, by part~(i),
belongs to \(B_r(\sigma)\). Writing
\[
g_{\lambda,s}(x)
=
\langle\lambda+x\rangle^{1-s_0}
\exp_p\!\bigl(-h\log_p\langle\lambda+x\rangle\bigr),
\]
we expand the exponential as
\[
\exp_p\!\bigl(-h\log_p\langle\lambda+x\rangle\bigr)
=
\sum_{n=0}^{\infty}
\frac{(-1)^n}{n!}
h^n\bigl(\log_p\langle\lambda+x\rangle\bigr)^n.
\]
Substituting this into the expression for \(g_{\lambda,s}\) yields
\[
g_{\lambda,s}(x)
=
\sum_{n=0}^{\infty}h^n\psi_n(x),
\quad
\psi_n(x)
=
\frac{(-1)^n}{n!}
\langle\lambda+x\rangle^{1-s_0}
\bigl(\log_p\langle\lambda+x\rangle\bigr)^n.
\]
The radius \(r\) is independent of \(n\). Indeed, both factors
\(\langle\lambda+x\rangle^{1-s_0}\) and
\(\log_p\langle\lambda+x\rangle\) lie in \(B_r(\sigma)\) by part~(i),
and \(B_r(\sigma)\) is closed under multiplication; hence
\(\psi_n\in B_r(\sigma)\) for every \(n\ge0\), with the same \(r\)
that was fixed in part~(i).

To prove absolute convergence in \(B_r(\sigma)\), we estimate
\(\|\psi_n\|_r\). By the definition of the norm on \(B_r(\sigma)\),
\[
\|\psi_n\|_r
\le
\frac{1}{|n!|_p}
\bigl\|\langle\lambda+x\rangle^{1-s_0}\bigr\|_r
\sup_{x\in\mathbb D_\sigma(r)}
\bigl|\log_p\langle\lambda+x\rangle\bigr|_p^n.
\]
By the definition of \(M(\lambda)\) in part~(i),
\[
\sup_{x\in\mathbb D_\sigma(r)}
\bigl|\log_p\langle\lambda+x\rangle\bigr|_p
=
M(\lambda),
\]
so
\[
\|\psi_n\|_r
\le
\frac{1}{|n!|_p}
\bigl\|\langle\lambda+x\rangle^{1-s_0}\bigr\|_r
M(\lambda)^n.
\]
Using the standard lower bound \(1/|n!|_p\le p^{\,n/(p-1)}\)
(see \cite[p.~21, Lemma~3]{Iw}), we obtain
\[
\|\psi_n\|_r
\le
C_0(\lambda,s_0)\bigl(M(\lambda)p^{1/(p-1)}\bigr)^n,
\]
where
\[
C_0(\lambda,s_0):=\bigl\|\langle\lambda+x\rangle^{1-s_0}\bigr\|_r<\infty
\]
depends only on \(\lambda\) and \(s_0\), not on \(n\). Therefore the
series \(\sum_{n=0}^{\infty}h^n\psi_n\) converges absolutely in
\(B_r(\sigma)\) whenever
\[
|h|_p\,M(\lambda)\,p^{1/(p-1)}<1,
\quad\text{i.e.,}\quad
|h|_p<\frac{p^{-1/(p-1)}}{M(\lambda)}.
\]

 (iii) Fix a compact \(K\subset\Omega\). By the argument of part~(i)
(applied with \(\lambda\) replaced by an arbitrary point of \(K\), and
using the continuity of \(\lambda\mapsto\langle\lambda+x\rangle\)), for
each \(\lambda\in K\) there exist an open neighbourhood
\(U_\lambda\subset\Omega\) of \(\lambda\), a radius \(r_\lambda>0\),
and a constant \(\rho_\lambda<1\) such that for all
\(\lambda'\in U_\lambda\), both \(\langle\lambda'+x\rangle^{1-s}\)
(for all \(s\in\Zp\)) and \(\log_p\langle\lambda'+x\rangle\) belong to
\(B_{r_\lambda}(\sigma)\), and
\[
|\langle\lambda'+x\rangle-1|_p\le\rho_\lambda
\quad(x\in\mathbb D_\sigma(r_\lambda)).
\]
Since \(K\) is compact and \(\{U_\lambda\}_{\lambda\in K}\) is an open
cover of \(K\), there exist finitely many points
\(\lambda_1,\dots,\lambda_m\in K\) such that
\(K\subset U_{\lambda_1}\cup\cdots\cup U_{\lambda_m}\). Set
\[
r:=\min_{1\le i\le m}r_{\lambda_i}>0,
\quad
\rho:=\max_{1\le i\le m}\rho_{\lambda_i}<1.
\]
Since \(r\le r_{\lambda_i}\) for every \(i\), we have
\(\mathbb D_\sigma(r)\subset\mathbb D_\sigma(r_{\lambda_i})\); as
restriction to a smaller disc preserves membership in
\(B_r(\sigma)\), it follows that \(\langle\lambda_i+x\rangle^{1-s}\)
(for all \(s\in\Zp\)) and \(\log_p\langle\lambda_i+x\rangle\) belong
to \(B_r(\sigma)\). Moreover, for \(x\in\mathbb D_\sigma(r)\subset
\mathbb D_\sigma(r_{\lambda_i})\),
\[
|\langle\lambda_i+x\rangle-1|_p\le\rho_{\lambda_i}\le\rho<1.
\]
Now let \(\lambda\in K\) be arbitrary, and choose \(i\) with
\(\lambda\in U_{\lambda_i}\). Since \(r\le r_{\lambda_i}\) and
\(\rho_{\lambda_i}\le\rho\), the preceding two displays give
\[
g_{\lambda,s}\in B_r(\sigma),
\quad
|\langle\lambda+x\rangle-1|_p\le\rho
\quad(x\in\mathbb D_\sigma(r)).
\]
This proves part~(iii).
\end{proof}

With the analytic properties  for \(g_{\lambda,s}\) established in
Lemma~\ref{lem:g-lambda-s}, we are now ready to prove the main
analyticity theorem. It states that the bosonic spectral zeta function
is well-defined and possesses the expected smoothness in \(s\) and
local analyticity in \(\lambda\).
\begin{theorem}[{Analyticity; cf. \cite[Theorem III.2]{HuKim}}]
\label{thm:welldef}
There exists a nonempty open set $\Omega\subset\Cp$ such that for every
$\lambda\in\Omega$ and every $s\in\Zp\setminus\{1\}$, the action in
Definition \ref{def:bosonic-zeta} is well-defined. Moreover,
$\zeta_p^\mu(s,\lambda)$ is $C^\infty$ in $s$ on $\Zp\setminus\{1\}$
and locally analytic in $\lambda$ on $\Omega$.
\end{theorem}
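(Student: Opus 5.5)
We take $\Omega$ to be the disc constructed in \eqref{omega}. It is nonempty because it contains $\lambda_0$, and it is open. For $\lambda\in\Omega$ and $s\in\Zp$, Lemma~\ref{lem:g-lambda-s}(i) provides a radius $r>0$ with $g_{\lambda,s}\in B_r(\sigma)\subset\cL(\sigma)$. Hence the action $\mu(g_{\lambda,s})$ is defined, and by continuity $|\mu(g_{\lambda,s})|_p\le\|\mu\|_r\,\|g_{\lambda,s}\|_r$. Dividing by $s-1\neq0$ shows that $\zeta_p^\mu(s,\lambda)$ is well-defined. For this we also note that $-\lambda\notin\sigma\supset\supp(\mu)$, since $|\lambda+x|_p\ge\delta$ on $\sigma$.

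For the dependence on $s$, fix $s_0\in\Zp\setminus\{1\}$. By Lemma~\ref{lem:g-lambda-s}(ii), for $|h|_p<p^{-1/(p-1)}/M(\lambda)$ we have $g_{\lambda,s_0+h}=\sum_n h^n\psi_n$, and this series converges absolutely in the Banach space $B_r(\sigma)$. Since $\mu$ restricted to $B_r(\sigma)$ is a bounded linear functional, it commutes with the series, so
\[
\int_\sigma\langle\lambda+x\rangle^{1-s}\,\d\mu(x)=\sum_{n=0}^{\infty}\mu(\psi_n)\,h^n,\qquad |\mu(\psi_n)|_p\le\|\mu\|_r\,C_0(\lambda,s_0)\bigl(M(\lambda)p^{1/(p-1)}\bigr)^n .
\]
Thus the integral is given by a convergent power series in $s-s_0$ near every point, so it is locally analytic, and in particular $C^\infty$, in $s$. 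The factor $1/(s-1)$ is analytic on $\Zp\setminus\{1\}$, so the product is $C^\infty$ there.

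For the dependence on $\lambda$, fix $\lambda_1\in\Omega$ and take a small closed disc $K\subset\Omega$ around it. Lemma~\ref{lem:g-lambda-s}(iii) gives a common $r$ and a common bound $|\langle\lambda+x\rangle-1|_p\le\rho<1$ for $\lambda\in K$. Write $\lambda=\lambda_1+t$. For $|t|_p$ small (smaller than $r$, and smaller than $\delta$, which is at most $|\lambda_1+x|_p$ on $\mathbb D_\sigma(r)$ after shrinking $r$), the locally constant factor $p^{v_p}\hat{\cdot}$ does not change. We then have
\[
\langle\lambda+x\rangle=\langle\lambda_1+x\rangle\Bigl(1+\frac{t}{\lambda_1+x}\Bigr).
\]
It follows that
\[
g_{\lambda,s}(x)=g_{\lambda_1,s}(x)\sum_{k\ge0}\binom{1-s}{k}\frac{t^k}{(\lambda_1+x)^k}.
\]
The functions $(\lambda_1+x)^{-k}$ lie in $B_r(\sigma)$ with norm at most $\delta^{-k}$, and $|\binom{1-s}{k}|_p\le1$. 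Therefore the series converges in $B_r(\sigma)$ for $|t|_p<\delta$. Applying $\mu$ term by term, exactly as in the previous step, gives a convergent power series in $t$, which proves local analyticity in $\lambda$.

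The main obstacle is the step that justifies interchanging $\mu$ with the infinite expansions. A generalized distribution is continuous only on each $B_r(\sigma)$ separately, through the inductive limit topology. So the argument must exhibit a single radius $r$ that serves all terms of the series and all parameters in a neighbourhood, and must prove convergence in the norm $\|\cdot\|_r$, not merely pointwise on $\sigma$. Lemma~\ref{lem:g-lambda-s}(ii)--(iii) supplies this uniformity, and the remaining care goes into checking that the factor $p^{v_p(\lambda+x)}\widehat{\lambda+x}$ stays constant on all of $\mathbb D_\sigma(r)$ for $\lambda$ near $\lambda_1$.
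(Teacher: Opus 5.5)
Your well-definedness and $s$-arguments are the paper's. Lemma~\ref{lem:g-lambda-s}(i) puts $g_{\lambda,s}$ in $B_r(\sigma)$, and Lemma~\ref{lem:g-lambda-s}(ii) together with continuity of $\mu$ on $B_r(\sigma)$ makes $\mu(g_{\lambda,s})$ a power series in $s-s_0$.

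For $\lambda$ you take a genuinely different route. The paper fixes a compact $K\subset\Omega$ and uses Lemma~\ref{lem:g-lambda-s}(iii) to approximate $g_{\lambda,s}$ in $\|\cdot\|_r$, uniformly in $\lambda\in K$, by the binomial partial sums $P_N(\lambda,\cdot)$ in $\langle\lambda+x\rangle-1$. It then passes to the uniform limit of the functions $\lambda\mapsto\mu(P_N(\lambda,\cdot))$, which it asserts are analytic. You instead recentre at $\lambda_1$ and factor $\langle\lambda_1+t+x\rangle=\langle\lambda_1+x\rangle\bigl(1+t/(\lambda_1+x)\bigr)$. This is valid because $p^{v_p(y)}\hat y$ is constant on $\{y:|y-y_1|_p<|y_1|_p\}$ and binomial powers are multiplicative on $1+\mathfrak m_{\Cp}$. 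You obtain an expansion in $t$ with coefficients $\binom{1-s}{k}g_{\lambda_1,s}(x)(\lambda_1+x)^{-k}$, converging in the norm of $B_r(\sigma)$ once $r\le\delta$, so $\mu$ can be applied termwise.

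Your route buys three things:
\begin{enumerate}
\item It gives an explicit power series of radius at least $\delta$, hence a single expansion on all of $\Omega$ (a disc of radius $\delta/2$ about each of its points).
\item It avoids a localization problem in the paper's argument. Since $\Cp$ is not locally compact, its compact subsets have empty interior (closed discs are not compact), so analyticity on every compact $K$ does not localize at points of $\Omega$.
\item It supplies what the paper leaves unproved: analyticity of $\lambda\mapsto\mu(P_N(\lambda,\cdot))$ for a general $\mu\in\cL^*(\sigma)$ needs exactly the kind of norm-convergent expansion in $\lambda$ that you write down.
\end{enumerate}
By the same token, your invocation of Lemma~\ref{lem:g-lambda-s}(iii) on a closed disc is formally unavailable, and it is also unnecessary. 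You only use $g_{\lambda_1,s}\in B_r(\sigma)$ from part (i) and $|\lambda_1+x|_p\ge\delta$ on $\mathbb D_\sigma(r)$ for $r\le\delta$, so drop it.

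One caveat you inherit from the paper: the radius $p^{-1/(p-1)}/M(\lambda)$ in Lemma~\ref{lem:g-lambda-s}(ii) is too generous. The identity $\langle y\rangle^{-h}=\exp_p(-h\log_p\langle y\rangle)$ needs $h$ small relative to $|\langle y\rangle-1|_p$, not merely relative to $|\log_p\langle y\rangle|_p$. Here is a counterexample. Take $\sigma=\Zp$ and $\lambda=p^{-1}\varepsilon$ with $\varepsilon$ a primitive $p$-th root of unity. Then $\langle\lambda+x\rangle=\varepsilon+px$ for $|x|_p\le1$, so $M(\lambda)\le p^{-1}$ for small $r$, and $h=1$ lies within the claimed radius. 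Yet at $x=0$ the identity reads $\varepsilon^{-1}=1$, which is false.

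The fix is harmless. Since $|\langle\lambda+x\rangle-1|_p\le\rho<1$ on $\mathbb D_\sigma(r)$, there is some $N$ with $|\langle\lambda+x\rangle^{p^N}-1|_p<p^{-1/(p-1)}$ uniformly there. The $h$-expansion then holds for $h\in p^N\Zp$, which still gives local analyticity, hence $C^\infty$, in $s$.
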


\begin{remark}[Quantitative refinement in the discrete case]
\label{rem:quantitative}
When $\supp(\mu)$ is contained in a finite extension $K$ of $\Q_p$
with ramification index $e<p-1$, and $(\pi)$ denotes the maximal ideal
of the ring of integers of $K$, the argument below can be refined to
yield the sharper statement that $\zeta_p^\mu(s,\lambda)$ is analytic
for
\[
|s|_p<|\pi|_p^{-1}p^{-1/(p-1)}
\]
except for a simple pole at $s=1$, while the fermionic counterpart
$\zeta_{p,E}^\nu(s,\lambda)$ is analytic on the same disc. For general
$\mu\in\cL^*(\sigma)$, such a quantitative bound is not available,
since $\supp(\mu)$ may contain transcendental points that do not lie
in any finite extension of $\Q_p$.
\end{remark}

\begin{proof}
The open set \(\Omega\) is constructed in \eqref{omega}. By Lemma~\ref{lem:g-lambda-s}(i), \(g_{\lambda,s}\in B_r(\sigma)\subset\cL(\sigma)\) for all \(\lambda\in\Omega\) and \(s\in\Zp\), so \(\mu(g_{\lambda,s})\) is well-defined. We prove the two analyticity statements.

\medskip
\noindent\textbf{\(C^\infty\)-smoothness in \(s\).}
Fix \(s_0\in\Zp\setminus\{1\}\) and a compact \(J\subset\Zp\setminus\{1\}\) containing \(s_0\). By Lemma~\ref{lem:g-lambda-s}(ii), there exists \(r>0\) such that
\[
g_{\lambda,s}(x)=\sum_{n=0}^{\infty}h^n\psi_n(x),
\quad h=s-s_0,
\]
converges absolutely in \(B_r(\sigma)\) for \(|h|_p<p^{-1/(p-1)}/M(\lambda)\). Since \(\mu\in\cL^*(\sigma)\) is continuous on \(B_r(\sigma)\), it commutes with the series:
\[
\mu(g_{\lambda,s})=\sum_{n=0}^{\infty}h^n\mu(\psi_n).
\]
This is a convergent power series in \(h\), so \(s\mapsto\mu(g_{\lambda,s})\) is analytic near \(s_0\), hence \(C^\infty\) on \(\Zp\setminus\{1\}\). Since the Taylor series of \(s\mapsto g_{\lambda,s}\) and all its
term-by-term derivatives converge absolutely in \(B_r(\sigma)\), and
\(\mu\) is continuous on \(B_r(\sigma)\), differentiation commutes with
\(\mu\):
\[
\frac{d^k}{ds^k}\mu(g_{\lambda,s})
=
\mu\!\left(\frac{\partial^k}{\partial s^k}g_{\lambda,s}\right),
\quad k\ge0.
\]
Multiplication by \((s-1)^{-1}\) preserves \(C^\infty\)-smoothness away from \(s=1\), so \(\zeta_p^\mu(s,\lambda)\) is \(C^\infty\) in \(s\) on \(\Zp\setminus\{1\}\).

\medskip
\noindent\textbf{Local analyticity in \(\lambda\).}
Fix \(s\in\Zp\setminus\{1\}\). Since \(\Omega\) is open, it suffices to show that \(\lambda\mapsto\mu(g_{\lambda,s})\) is analytic on an arbitrary compact subset \(K\subset\Omega\). Fix such a \(K\).

By Lemma~\ref{lem:g-lambda-s}(iii), there exist \(r>0\) and \(\rho<1\) such that for all \(\lambda\in K\),
\begin{equation}\label{rho}
g_{\lambda,s}\in B_r(\sigma),
\quad
|\langle\lambda+x\rangle-1|_p\le\rho
\quad(x\in\mathbb D_\sigma(r)).
\end{equation}
In particular, \(\lambda+x\neq0\) on \(K\times\mathbb D_\sigma(r)\).

For every \(N\ge0\), define the partial sum
\[
P_N(\lambda,x)=\sum_{n=0}^{N}\binom{1-s}{n}
\bigl(\langle\lambda+x\rangle-1\bigr)^n,
\quad\lambda\in K,\ x\in\mathbb D_\sigma(r).
\]
For each fixed \(x\in\mathbb D_\sigma(r)\), the map \(\lambda\mapsto P_N(\lambda,x)\) is analytic on \(K\), since \(\lambda+x\neq0\) there and the projection function \(y\mapsto\langle y\rangle\) is locally analytic on \(\Cp^\times\) (see (\ref{deri})).

Moreover, for \(\lambda\in K\) and \(x\in\mathbb D_\sigma(r)\),  by (\ref{rho}) and the strong triangle inequality, we have
\[
\|P_N(\lambda,\cdot)-g_{\lambda,s}\|_r
\le\sup_{n>N}\rho^n\longrightarrow0,
\]
uniformly in \(\lambda\in K\). Since \(\mu\) is continuous on \(B_r(\sigma)\) (since $B_r(\sigma)\subset\cL(\sigma)$), by (\ref{norm})
\[
|\mu(P_N(\lambda,\cdot))-\mu(g_{\lambda,s})|_p
\le\|\mu\|_r\|P_N(\lambda,\cdot)-g_{\lambda,s}\|_r
\longrightarrow0
\]
uniformly in \(\lambda\in K\). Thus \(\lambda\mapsto\mu(g_{\lambda,s})\) is a uniform limit on \(K\) of the analytic functions \(\lambda\mapsto\mu(P_N(\lambda,\cdot))\). Because \(\mathbb C_p\) is \textit{not} locally compact, a uniform limit of analytic functions on a disc is again analytic (see \cite[Theorem~42.2(ii), Theorem~42.3(ii)]{SC}). Hence \(\lambda\mapsto\mu(g_{\lambda,s})\) is analytic on \(K\). As \(K\subset\Omega\) is arbitrary, it is locally analytic on \(\Omega\). Multiplication by \((s-1)^{-1}\) does not affect local analyticity. This completes the proof.
\end{proof}

To express the special values of $\zeta_p^\mu(s,\lambda)$, we need to extend the classical Bernoulli polynomial $B_m(\lambda)$ and Euler polynomial
 $E_m(\lambda)$ to the present setting. We begin by recalling their classical generating functions.
\begin{equation}\label{eq:classical-BE}
\frac{t e^{\lambda t}}{e^{t}-1}
=
\sum_{m=0}^{\infty}B_m(\lambda)\frac{t^m}{m!},
\quad
\frac{2 e^{\lambda t}}{e^{t}+1}
=
\sum_{m=0}^{\infty}E_m(\lambda)\frac{t^m}{m!},
\end{equation}
see \cite[Chapter 11]{Cohen}. In the \(p\)-adic setting these polynomials admit the integral representations
\begin{equation}\label{eq:p-adic-BE}
B_m(\lambda)=\int_{\Zp}(\lambda+a)^m\,\d a,
\quad
E_m(\lambda)=\int_{\Zp}(\lambda+a)^m\,\d\mu_{-1}(a),
\end{equation}
where the integral with respect to \(da\) is the Volkenborn integral with respect to the Haar distribution, and \(\mu_{-1}\) is the  measure defined in Definition \ref{def:mu-1}. The first identity is standard (see \cite[Lemma 11.1.7]{Cohen}), while the second follows from the properties of \(\mu_{-1}\) (see \cite[(2.6)]{HK2012}).

In our previous work \cite{HuKim}, for a locally analytic and bounded interpolation function \(f:\Zp\to\Cp\), we introduced the associated Bernoulli and Euler polynomials
\begin{equation}\label{eq:Bf-Ef}
B_m^f(\lambda)=\int_{\Zp}(\lambda+f(a))^m\,\d a,
\quad
E_m^f(\lambda)=\int_{\Zp}(\lambda+f(a))^m\,\d\mu_{-1}(a),
\end{equation}
and proved the special value formulas
\begin{equation}\label{eq:special-f}
\zeta_p^f(1-m,\lambda)
=
-\frac{1}{\omega_v^m(\lambda)}\frac{B_m^f(\lambda)}{m},
\quad
\zeta_{p,E}^f(1-m,\lambda)
=
\frac{1}{\omega_v^m(\lambda)}E_m^f(\lambda),
\end{equation}
for \(m\ge1\); see \cite[(45), (46), (53), (54)]{HuKim}. The definitions \eqref{eq:Bf-Ef} reduce to \eqref{eq:p-adic-BE} when \(f(a)=a\), i.e.\ for the integer spectrum.

We now extend \eqref{eq:Bf-Ef} to arbitrary generalized spectral distributions. Let \(\mu\in\cL^*(\sigma)\). In analogy with \eqref{eq:p-adic-BE} and \eqref{eq:Bf-Ef}, we define the \emph{Bernoulli functionals} associated with \(\mu\) by
\begin{equation}\label{eq:Bmu}
B_m^\mu(\lambda)=\int_\sigma(\lambda+x)^m\,\d\mu(x),
\quad m\ge0.
\end{equation}
In particular,
\[
B_m^\mu(0)=\int_\sigma x^m\,\d\mu(x).
\]
If \(\mu=\mu_f\) is the pushforward of the Haar distribution under a locally analytic interpolation function \(f\), then \(B_m^{\mu_f}(\lambda)=B_m^f(\lambda)\). Thus \eqref{eq:Bmu} is a genuine generalization of \eqref{eq:Bf-Ef} and \eqref{eq:p-adic-BE}.

\begin{definition}[{Bernoulli functionals}, see \eqref{eq:Bmu}]
\label{def:Bernoulli}
For $\mu\in\cL^*(\sigma)$ and $m\ge0$, the \emph{Bernoulli functionals} are defined by
\[
B_m^\mu(\lambda)=\int_\sigma (\lambda+x)^m\,\d\mu(x).
\]
In particular, $B_m^\mu(0)=\int_\sigma x^m\,\d\mu(x)$.
\end{definition}

\begin{theorem}[{Special value formula, bosonic; cf. \cite[(53)]{HuKim}}]
\label{thm:special-bosonic}
For $m\ge1$ and $|\lambda|_p$ sufficiently large,
\[
\zeta_p^\mu(1-m,\lambda)
=
-\frac{1}{\omega_v^m(\lambda)}
\frac{B_m^\mu(\lambda)}{m}.
\]
\end{theorem}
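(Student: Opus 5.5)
The plan is to substitute $s=1-m$ into Definition~\ref{def:bosonic-zeta}, which gives
\[
\zeta_p^\mu(1-m,\lambda)=-\frac{1}{m}\int_\sigma\langle\lambda+x\rangle^{m}\,\d\mu(x),
\]
and then to replace the integrand $\langle\lambda+x\rangle^m$ by $\omega_v^{-m}(\lambda)(\lambda+x)^m$ identically on a neighbourhood $\mathbb D_\sigma(r)$ of $\sigma$. Once this is done, linearity of $\mu$ and Definition~\ref{def:Bernoulli} give the claim at once. Here $m$ is a positive integer, so $\langle\lambda+x\rangle^{m}$ is an honest power. The binomial-series definition of $\langle a\rangle^s$ agrees with the ordinary $m$-th power when $s=m$, because the series terminates.

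The key step is to show that for $|\lambda|_p$ large the Teichm\"uller factor $\omega_v(\lambda+x)=p^{v_p(\lambda+x)}\widehat{\lambda+x}$ is constant in $x$ and equals $\omega_v(\lambda)$. Put $R_0=\max_{x\in\sigma}|x|_p$ and take $|\lambda|_p>R_0$, with $r\le|\lambda|_p$ small. For $x\in\mathbb D_\sigma(r)$ we then have $|x|_p<|\lambda|_p$, and writing $\lambda+x=\lambda(1+x/\lambda)$ with $|x/\lambda|_p<1$ gives $|\lambda+x|_p=|\lambda|_p$, so $v_p(\lambda+x)=v_p(\lambda)$. For the unit parts, $(\lambda+x)/p^{v_p(\lambda)}$ and $\lambda/p^{v_p(\lambda)}$ differ by $x/p^{v_p(\lambda)}$, which has absolute value $<1$. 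Since the Teichm\"uller representative depends only on the residue class modulo $\mathfrak m_{\Cp}$, the two units have the same Teichm\"uller representative. Thus $\widehat{\lambda+x}=\hat\lambda$.

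It follows that $\omega_v(\lambda+x)=\omega_v(\lambda)$ and $\langle\lambda+x\rangle=(\lambda+x)/\omega_v(\lambda)$ on $\mathbb D_\sigma(r)$. Hence $\langle\lambda+x\rangle^m=\omega_v^{-m}(\lambda)(\lambda+x)^m$ as elements of $B_r(\sigma)$, so applying $\mu$ yields
\[
\mu\bigl(\langle\lambda+x\rangle^m\bigr)=\omega_v^{-m}(\lambda)B_m^\mu(\lambda).
\]

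Two technical points remain, and the second is the one I expect to need care. First, one must make sure that such $\lambda$ lie in the domain where the zeta function is defined. The condition $-\lambda\notin\supp(\mu)$ holds automatically since $|\lambda|_p>R_0$, and the function is well defined by Lemma~\ref{lem:g-lambda-s}(i). One may also take $\lambda_0=\lambda$ in the construction \eqref{omega}. Second, the identity of locally analytic functions must hold on a full neighbourhood $\mathbb D_\sigma(r)$ and not merely on $\sigma$. That is exactly what the argument above provides, provided the Teichm\"uller representative is chosen by the same compatible convention (via the fixed system $p^{\Q}$) for both units. The statement that these representatives agree rests on $u\mapsto\hat u$ being locally constant, as recorded in \eqref{maps}.
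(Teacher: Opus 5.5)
Your proof is correct and follows the paper's argument. Both rest on the identity $\langle\lambda+x\rangle=\langle\lambda\rangle(1+x/\lambda)=(\lambda+x)/\omega_v(\lambda)$ for $|\lambda|_p$ large, after which linearity of $\mu$ produces $\omega_v^{-m}(\lambda)B_m^\mu(\lambda)$. You differ only in two small ways: you set $s=1-m$ before expanding (the paper's binomial series is finite at $s=1-m$ anyway), and you actually prove the identity on a full neighbourhood $\mathbb D_\sigma(r)$ via equality of valuations and Teichm\"uller representatives, whereas the paper only asserts it on $\supp(\mu)$.
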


\begin{proof}
For large \(|\lambda|_p\), we have \(|x/\lambda|_p<1\) for all \(x\in\supp(\mu)\), so
\[
\langle\lambda+x\rangle
=
\langle\lambda\rangle\left(1+\frac{x}{\lambda}\right).
\]
Thus
\[
\langle\lambda+x\rangle^{1-s}
=
\langle\lambda\rangle^{1-s}
\sum_{n=0}^{\infty}\binom{1-s}{n}\frac{x^n}{\lambda^n}.
\]
Integrating term by term against \(\mu\) and setting \(s=1-m\), we obtain
\[
\zeta_p^\mu(1-m,\lambda)
=
-\frac{1}{m}\langle\lambda\rangle^{m}
\sum_{n=0}^{\infty}\binom{m}{n}\frac{\mu(x^n)}{\lambda^n}.
\]
Since \(\langle\lambda\rangle=\omega_v^{-1}(\lambda)\lambda\), the sum equals \(\omega_v^{-m}(\lambda)B_m^\mu(\lambda)\), and the result follows.
\end{proof}

\begin{remark}
When the spectrum is discrete and admits a locally analytic interpolation function \(f\), the bosonic spectral distribution \(\mu\) is the pushforward of the Haar distribution under \(f\). In that case \(B_m^\mu(\lambda)=B_m^f(\lambda)\), and Theorem~\ref{thm:special-bosonic} reduces  to   \cite[(53)]{HuKim}. Thus the present construction extends our previous framework.
\end{remark}

\subsection{Fermionic resolvent and zeta function}

The fermionic version is obtained by using a bounded spectral distribution, associated with the \(\mu_{-1}\)-measure. Recall from Definition~\ref{def:mu-1} that \(\mu_{-1}\) is bounded:
\[
|\mu_{-1}(a+p^N\Zp)|_p=|(-1)^a|_p=1
\]
for all \(a\) and \(N\). Consequently, the corresponding integral is defined for every continuous function, without any differentiability requirement. This makes the fermionic case simpler than the bosonic one, where the Haar distribution is unbounded and the integral is defined only for \(C^1\) functions. We begin by defining the fermionic resolvent and the associated spectral distribution (see Definition~\ref{def:fermionic-resolvent}), and then use them to define the fermionic zeta function (see Definition~\ref{def:fermionic-zeta}).

\begin{definition}[{Fermionic resolvent}]
\label{def:fermionic-resolvent}
Let \(f:\Zp\to\sigma\) be continuous. The \emph{fermionic resolvent} associated with \(f\) is the function
\[
F_f^E(z)=\int_{\Zp}\frac{\d\mu_{-1}(a)}{z-f(a)},\quad z\in\Omega_{\sigma}:=\Cp\setminus\sigma.
\]
Its Riemann sums are
\[
F_{f,N}^E(z)=\sum_{a=0}^{p^N-1}\frac{(-1)^a}{z-f(a)},\quad N\ge0.
\]
\end{definition}

\begin{proposition}\label{prop:fermionic-resolvent-well-defined}
The fermionic resolvent \(F_f^E\) is well-defined, and its Riemann sums \(F_{f,N}^E\) converge uniformly on each
\[
\Omega_\sigma(r)=\{z\in\Cp:\dist(z,\sigma)\ge r\},\quad r>0.
\]
Moreover, \(F_f^E\in H_0(\Omega_{\sigma})\).
\end{proposition}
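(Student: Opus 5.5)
Since $\mu_{-1}$ is a bounded measure (Definition~\ref{def:mu-1}) and continuous functions on $\Zp$ are integrable against it, I will first show that for each fixed $z\in\Omega_\sigma$ the integrand $a\mapsto 1/(z-f(a))$ is continuous on $\Zp$. Then I will upgrade pointwise convergence of the Riemann sums to uniform convergence on $\Omega_\sigma(r)$.

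\textbf{Uniform bound and equicontinuity.} Fix $r>0$ and $z\in\Omega_\sigma(r)$. Since $f(a)\in\sigma$, we have $|z-f(a)|_p\ge r$, so $|1/(z-f(a))|_p\le r^{-1}$. For the modulus of continuity, use
\[
\frac{1}{z-f(a)}-\frac{1}{z-f(b)}=\frac{f(a)-f(b)}{(z-f(a))(z-f(b))},
\]
which has absolute value at most $r^{-2}|f(a)-f(b)|_p$. Because $\Zp$ is compact, $f$ is uniformly continuous. Hence the family $\{a\mapsto 1/(z-f(a))\}_{z\in\Omega_\sigma(r)}$ is uniformly equicontinuous: given $\varepsilon>0$ there is $N_0$ with $|f(a)-f(b)|_p\le r^2\varepsilon$ whenever $a\equiv b \pmod{p^{N_0}}$.

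\textbf{Cauchy estimate for the Riemann sums.} For $M>N\ge N_0$, split the residue classes mod $p^M$ according to their class mod $p^N$. Finite additivity of $\mu_{-1}$ gives
\[
\sum_{b\equiv a\ (p^N),\,0\le b<p^M}(-1)^b=(-1)^a
\]
for $0\le a<p^N$. This is exactly the consistency $\mu_{-1}(a+p^N\Zp)=\sum_b\mu_{-1}(b+p^M\Zp)$. Therefore
\[
F^E_{f,M}(z)-F^E_{f,N}(z)=\sum_{a=0}^{p^N-1}\ \sum_{b\equiv a\,(p^N)}(-1)^b\Bigl(\tfrac{1}{z-f(b)}-\tfrac{1}{z-f(a)}\Bigr).
\]
Every term has absolute value at most $\varepsilon$, so by the strong triangle inequality the whole difference is at most $\varepsilon$, uniformly in $z\in\Omega_\sigma(r)$.

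Consequently $F^E_{f,N}$ is uniformly Cauchy on $\Omega_\sigma(r)$. Its limit is $F^E_f(z)$, which is therefore well defined and satisfies $|F_f^E(z)|_p\le r^{-1}$ there.

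\textbf{Membership in $H_0(\Omega_\sigma)$.} Each $F^E_{f,N}$ is a rational function whose poles $f(0),\dots,f(p^N-1)$ all lie in $\sigma$. Being a sum of terms $c/(z-w)$, it also vanishes at infinity. By the uniform convergence just proved, $F^E_f$ restricted to each $\Omega_\sigma(r)$ is a uniform limit of such rational functions, which is precisely Krasner analyticity in Definition~\ref{def:krasner-analytic}.

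For vanishing at infinity, I will use the sharper bound valid when $|z|_p>\max_{x\in\sigma}|x|_p$. In that range $|z-f(a)|_p=|z|_p$, so every Riemann sum, and hence the limit, satisfies $|F_f^E(z)|_p\le|z|_p^{-1}\to0$.

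\textbf{Expected obstacle.} The only delicate point is obtaining convergence uniformly in $z$ rather than pointwise. The lower bound $r$ on $\dist(z,\sigma)$ handles this, because it controls both the size of the integrand and its Lipschitz dependence on $f(a)$ uniformly. No differentiability of $f$ is needed, since $\mu_{-1}$ is bounded.
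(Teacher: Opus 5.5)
Your proof is correct and follows essentially the same route as the paper. In both, the Lipschitz bound $|1/(z-f(a))-1/(z-f(b))|_p\le r^{-2}|f(a)-f(b)|_p$, uniform continuity of $f$, the bound $|\mu_{-1}(a+p^N\mathbb{Z}_p)|_p=1$ and the strong triangle inequality give uniform convergence on $\Omega_\sigma(r)$, and $|z-f(a)|_p=|z|_p$ for large $|z|_p$ gives decay at infinity.

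The only difference is minor and slightly in your favour. You show the Riemann sums are uniformly Cauchy via the consistency $\sum_{b\equiv a\ (p^N)}(-1)^b=(-1)^a$, which proves the integral exists rather than presupposing it, as the paper does when it compares the integral directly with the $N$-th Riemann sum. Your bound at infinity, taken through the Riemann sums, is also cleaner than the paper's ``supremum times total mass'' step.
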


\begin{proof}
Since \(f\) is continuous and \(\mu_{-1}\) is bounded, the integral defining \(F_f^E(z)\) converges for every \(z\in\Omega_{\sigma}\).

To prove uniform convergence of the Riemann sums, fix \(r>0\) and let \(z\in\Omega_\sigma(r)\). For every \(a\in\Zp\), we have \(|z-f(a)|_p\ge r\). Hence the function
\[
g_z(a)=\frac{1}{z-f(a)}
\]
is continuous on \(\Zp\).

We estimate the error between the integral and its Riemann sum. For each \(N\ge0\),
decomposing \(\Zp\) into the \(p^N\) disjoint clopen residue classes modulo \(p^N\), and using the finite additivity of the distribution \(\mu_{-1}\), we obtain
\[
\int_{\Zp} g_z(a)\,\d\mu_{-1}(a)
=
\sum_{a=0}^{p^N-1} \int_{a+p^N\Zp} g_z(x)\,\d\mu_{-1}(x).
\]
Since \(\mu_{-1}(a+p^N\Zp)=(-1)^a\), we may write
\[
\int_{\Zp} g_z(a)\,\d\mu_{-1}(a)
=
\sum_{a=0}^{p^N-1} g_z(a)(-1)^a
+
\sum_{a=0}^{p^N-1} \int_{a+p^N\Zp} \bigl(g_z(x)-g_z(a)\bigr)\,\d\mu_{-1}(x).
\]
Therefore by the strong triangle inequality, we have
\[
\left|
\int_{\Zp} g_z(a)\,\d\mu_{-1}(a)
-\sum_{a=0}^{p^N-1} g_z(a)(-1)^a
\right|_p
\le
\max_{0\le a<p^N} \sup_{x\in a+p^N\Zp} |g_z(x)-g_z(a)|_p,
\]
where we used \(|\mu_{-1}(a+p^N\Zp)|_p=1\).

Now \(f\) is continuous on the compact set \(\Zp\), hence uniformly continuous. Let \(\omega_f\) denote its modulus of continuity. For \(x,a\in a+p^N\Zp\), we have \(|x-a|_p\le p^{-N}\), so
\[
|f(x)-f(a)|_p \le \omega_f(p^{-N}).
\]
Since \(|z-f(x)|_p\ge r\) and \(|z-f(a)|_p\ge r\), we obtain
\[
\begin{aligned}
|g_z(x)-g_z(a)|_p
&= \left|\frac{1}{z-f(x)}-\frac{1}{z-f(a)}\right|_p \\
&= \frac{|f(x)-f(a)|_p}{|z-f(x)|_p\,|z-f(a)|_p}
\le \frac{\omega_f(p^{-N})}{r^2}.
\end{aligned}
\]
This bound is independent of \(z\in\Omega_\sigma(r)\). Hence
\[
\left|
\int_{\Zp} g_z(a)\,\d\mu_{-1}(a)
-\sum_{a=0}^{p^N-1} g_z(a)(-1)^a
\right|_p
\le
\frac{\omega_f(p^{-N})}{r^2}
\longrightarrow 0
\quad (N\to\infty),
\]
uniformly in \(z\in\Omega_\sigma(r)\). Thus \(F_f^E\) is a uniform limit of rational functions with poles in \(\sigma\), i.e., \(F_f^E\) is Krasner analytic on \(\Omega_{\sigma}\).

It remains to prove that \(F_f^E\) vanishes at infinity. Since \(f\) is continuous on the compact set \(\Zp\), it is bounded: there exists \(M<\infty\) such that
\[
|f(a)|_p\le M\quad(a\in\Zp).
\]
For \(|z|_p>M\) and every \(a\in\Zp\), we have
\[
|z-f(a)|_p=|z|_p.
\]
Therefore
\[
\begin{aligned}
|F_f^E(z)|_p
=
\left|
\int_{\Zp}\frac{\d\mu_{-1}(a)}{z-f(a)}
\right|_p
&\le
\sup_{a\in\Zp}\frac{1}{|z-f(a)|_p}\int_{\Zp}\d\mu_{-1}(a)\\
&=
\sup_{a\in\Zp}\frac{1}{|z-f(a)|_p}
=
\frac1{|z|_p},
\end{aligned}
\]
by Proposition~\ref{prop:mu-1-total-mass}.
Letting \(|z|_p\to\infty\), we obtain \(F_f^E(z)\to0\). Hence
\(F_f^E\in H_0(\Omega_{\sigma})\) (recall Definition~\ref{def:krasner-analytic}).
\end{proof}

\begin{definition}[{Fermionic spectral distribution}]
\label{def:fermionic-spectral-distribution}
Let \(f:\Zp\to\sigma\) be continuous. By
Proposition~\ref{prop:fermionic-resolvent-well-defined}, the fermionic
resolvent \(F_f^E\) belongs to \(H_0(\Omega_{\sigma})\). Its inverse
Stieltjes transform \(\nu_f^E\in\cL^*(\sigma)\), characterized by
\[
S_{\nu_f^E}=F_f^E,
\]
is called the \emph{fermionic spectral distribution} associated with \(f\).
\end{definition}

\begin{definition}[{Fermionic \(p\)-adic spectral zeta function, see \cite[Definition I.1]{HuKim}}]
\label{def:fermionic-zeta}
Let \(f:\Zp\to\sigma\) be continuous, and let \(\nu_f^E\in\cL^*(\sigma)\) be the fermionic spectral distribution associated with \(f\) (see Definition~\ref{def:fermionic-spectral-distribution}). Define the \emph{fermionic \(p\)-adic spectral zeta function} by
\[
\zeta_{p,E}^{\nu_f^E}(s,\lambda)
=
\int_\sigma
\langle \lambda+x\rangle^{1-s}
\,\d\nu_f^E(x),
\quad s\in\Zp,\ \lambda\in\Omega.
\]
No factor \((s-1)^{-1}\) is needed because \(\mu_{-1}\) is bounded, and hence so is \(\nu_f^E\).
\end{definition}

The fermionic zeta function enjoys the same analyticity properties as its bosonic counterpart, but the proof is simpler because the underlying measure is bounded. Since \(\nu_f^E\in\cL^*(\sigma)\) is bounded, the integral converges for every continuous function, and no \(C^1\) estimates are needed. We record this as the following theorem.

\begin{theorem}[Analyticity, fermionic]
\label{thm:fermionic-analyticity}
Let \(f:\Zp\to\sigma\) be continuous, and let \(\nu_f^E\in\cL^*(\sigma)\) be the fermionic spectral distribution associated with \(f\) (see Definition~\ref{def:fermionic-resolvent}). There exists a nonempty open set \(\Omega\subset\Cp\) such that for every \(\lambda\in\Omega\) and every \(s\in\Zp\), the integral defining \(\zeta_{p,E}^{\nu_f^E}(s,\lambda)\) (see Definition~\ref{def:fermionic-zeta}) is well-defined. Moreover, \(\zeta_{p,E}^{\nu_f^E}(s,\lambda)\) is \(C^\infty\) in \(s\) on \(\Zp\) and locally analytic in \(\lambda\) on \(\Omega\).
\end{theorem}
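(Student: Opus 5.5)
The plan is to follow the proof of Theorem~\ref{thm:welldef} almost verbatim, with \(\nu_f^E\) in place of \(\mu\) and with no factor \((s-1)^{-1}\). This explains why the statement holds on all of \(\Zp\) rather than on \(\Zp\setminus\{1\}\). For \(\Omega\) we take the disc constructed in \eqref{omega}. For every \(\lambda\in\Omega\) and \(x\in\sigma\) we then have \(|\lambda+x|_p\ge\delta>0\), so \(\langle\lambda+x\rangle\) is defined.

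\textbf{Well-definedness.} Lemma~\ref{lem:g-lambda-s}(i) gives \(g_{\lambda,s}(x)=\langle\lambda+x\rangle^{1-s}\in B_r(\sigma)\subset\cL(\sigma)\) for all \(s\in\Zp\). Since \(\nu_f^E\in\cL^*(\sigma)\) exists by Proposition~\ref{prop:fermionic-resolvent-well-defined} and Theorem~\ref{thm:invStieltjes}, the pairing \(\nu_f^E(g_{\lambda,s})\) makes sense. No boundedness of \(\nu_f^E\) is needed for this step. Continuity on \(B_r(\sigma)\), i.e.\ finiteness of \(\|\nu_f^E\|_r\) in \eqref{norm}, is all that is used.

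\textbf{Smoothness in \(s\).} Fix \(s_0\in\Zp\); now \(s_0=1\) is allowed. Lemma~\ref{lem:g-lambda-s}(ii) expands \(g_{\lambda,s}=\sum_n h^n\psi_n\), with \(h=s-s_0\), absolutely in \(B_r(\sigma)\) for \(|h|_p<p^{-1/(p-1)}/M(\lambda)\). Its proof never used \(s_0\neq1\): the restriction \(J\subset\Zp\setminus\{1\}\) there is only cosmetic. Continuity of \(\nu_f^E\) on \(B_r(\sigma)\) then gives
\[
\zeta_{p,E}^{\nu_f^E}(s,\lambda)=\sum_{n\ge0}h^n\,\nu_f^E(\psi_n),
\]
a convergent power series near \(s_0\). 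Hence the function is locally analytic, and so \(C^\infty\), on all of \(\Zp\).

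\textbf{Local analyticity in \(\lambda\).} Fix \(s\) and a compact \(K\subset\Omega\). Lemma~\ref{lem:g-lambda-s}(iii) gives a uniform \(r\) and \(\rho<1\). The binomial partial sums \(P_N(\lambda,\cdot)\) are analytic in \(\lambda\) by \eqref{deri}, and they converge to \(g_{\lambda,s}\) in \(\|\cdot\|_r\), uniformly in \(\lambda\in K\), at rate \(\sup_{n>N}\rho^n\). Applying \(\nu_f^E\) and the bound \(\|\nu_f^E\|_r\) exhibits \(\lambda\mapsto\nu_f^E(g_{\lambda,s})\) as a uniform limit of analytic functions, hence analytic, using the same citation to \cite{SC} as before.

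\textbf{Main obstacle.} The delicate point is justifying that \(\nu_f^E\) is bounded, as the paper's remark asserts. This is needed if one wants the stronger statement for merely continuous integrands. I would obtain it by identifying \(\nu_f^E\) with the pushforward \(f_*\mu_{-1}\). The functional \(\phi\mapsto\int_{\Zp}\phi(f(a))\,\d\mu_{-1}(a)\) is continuous on \(\cL(\sigma)\), with norm at most 1 because \(\mu_{-1}\) is bounded. Its Stieltjes transform is exactly \(F_f^E\), so uniqueness in Theorem~\ref{thm:invStieltjes} forces \(\nu_f^E=f_*\mu_{-1}\). Consequently \(\|\nu_f^E\|_r\le1\) for every \(r\), and
\[
\zeta_{p,E}^{\nu_f^E}(s,\lambda)=\int_{\Zp}\langle\lambda+f(a)\rangle^{1-s}\,\d\mu_{-1}(a).
\]
This is the fermionic analogue of compatibility, and it makes the estimates in both steps uniform.
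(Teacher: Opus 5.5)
Your proposal is correct and follows essentially the same route as the paper: the paper's proof notes that \(\nu_f^E\) is bounded and then says the argument of Theorem~\ref{thm:welldef} goes through without the factor \((s-1)^{-1}\), and you carry that out explicitly, correctly observing that Lemma~\ref{lem:g-lambda-s}(ii) never uses \(s_0\neq1\) and that only continuity of \(\nu_f^E\) on each \(B_r(\sigma)\) is actually needed. Your justification of boundedness matches the argument the paper gives in the fermionic part of Theorem~\ref{thm:compat}: you identify \(\nu_f^E\) with the pushforward of \(\mu_{-1}\) under \(f\) using the uniqueness in Theorem~\ref{thm:invStieltjes}.
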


\begin{proof}
Since \(\nu_f^E\) is bounded, the integral converges for every continuous function. The function \(g_{\lambda,s}(x)=\langle\lambda+x\rangle^{1-s}\) is continuous on \(\sigma\), and its \(s\)-derivatives are continuous and uniformly bounded on \(\sigma\). Hence differentiation under the integral sign is justified, and the same argument as in Theorem~\ref{thm:welldef} gives the result.
\end{proof}

We now turn to the special values of the fermionic zeta function. We begin by recalling the classical Euler polynomials, which are defined by the generating function
\begin{equation}\label{eq:classical-E}
\frac{2 e^{\lambda t}}{e^{t}+1}
=
\sum_{m=0}^{\infty}E_m(\lambda)\frac{t^m}{m!},
\end{equation}
see \cite[Chapter 11]{Cohen}. In the \(p\)-adic setting they admit the integral representation
\begin{equation}\label{eq:p-adic-E}
E_m(\lambda)=\int_{\Zp}(\lambda+a)^m\,\d\mu_{-1}(a),
\end{equation}
where \(\mu_{-1}\) is the measure defined in Definition \ref{def:mu-1}.

In our previous work \cite{HuKim}, for a locally analytic and bounded interpolation function \(f:\Zp\to\Cp\), we introduced the associated Euler polynomials
\begin{equation}\label{eq:Ef}
E_m^f(\lambda)=\int_{\Zp}(\lambda+f(a))^m\,\d\mu_{-1}(a),
\end{equation}
and proved the special value formula
\begin{equation}\label{eq:special-Ef}
\zeta_{p,E}^f(1-m,\lambda)
=\frac{1}{\omega_v^m(\lambda)}E_m^f(\lambda),
\quad m\ge1,
\end{equation}
see \cite[(46), (54)]{HuKim}. The definition \eqref{eq:Ef} reduces to \eqref{eq:p-adic-E} when \(f(a)=a\).

We now extend \eqref{eq:Ef} to arbitrary fermionic spectral distributions. Let \(\nu\in\cL^*(\sigma)\). In analogy with \eqref{eq:p-adic-E} and \eqref{eq:Ef}, we define the \emph{Euler functionals} associated with \(\nu\) by
\begin{equation}\label{eq:Enu}
E_m^\nu(\lambda)=\int_\sigma(\lambda+x)^m\,\d\nu(x),
\quad m\ge0.
\end{equation}
If \(\nu=\nu_f^E\) is the fermionic spectral distribution of Definition~\ref{def:fermionic-zeta}, then \(E_m^{\nu_f^E}(\lambda)=E_m^f(\lambda)\). Thus \eqref{eq:Enu} is a genuine generalization of \eqref{eq:Ef} and \eqref{eq:p-adic-E}.

\begin{definition}[{Euler functionals}, see \eqref{eq:Enu}]
\label{def:Euler}
For $\nu\in\cL^*(\sigma)$ and $m\ge0$, the \emph{Euler functionals} are defined by
\[
E_m^\nu(\lambda)=\int_\sigma (\lambda+x)^m\,\d \nu(x).
\]
\end{definition}

\begin{theorem}[{Special value formula, fermionic; cf. \cite[(54)]{HuKim}}]
\label{thm:special-fermionic}
For $m\ge0$ and $|\lambda|_p$ sufficiently large,
\[
\zeta_{p,E}^\nu(1-m,\lambda)
=
\frac{1}{\omega_v^m(\lambda)}
E_m^\nu(\lambda),
\]
where
\[
E_m^\nu(\lambda)=\int_\sigma (\lambda+x)^m\,\d\nu(x).
\]
\end{theorem}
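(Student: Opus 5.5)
The plan is to mirror the proof of Theorem~\ref{thm:special-bosonic}, with the prefactor $(s-1)^{-1}$ absent. The main steps are: factor the projection function for large $|\lambda|_p$, expand binomially, integrate term by term against $\nu$, and then recognize $E_m^\nu(\lambda)$.

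\textbf{Step 1: the regime for $\lambda$.} Since $\sigma$ is compact, $R:=\max_{x\in\sigma}|x|_p$ is finite. Choose $\lambda$ with $|\lambda|_p>R$, and also large enough that $\lambda$ lies in the set $\Omega$ of \eqref{omega} (with $\lambda_0=\lambda$). Then Theorem~\ref{thm:fermionic-analyticity} guarantees that $\zeta_{p,E}^\nu(s,\lambda)$ is defined.

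\textbf{Step 2: factor the projection function.} For every $x\in\mathbb D_\sigma(r)$ with $r\le R$, write $\lambda+x=\lambda(1+x/\lambda)$ with $|x/\lambda|_p<1$. Then $1+x/\lambda$ is a principal unit, so $p^{v_p(1+x/\lambda)}=1$ and its Teichm\"uller representative equals $1$. This gives
\[
\omega_v(\lambda+x)=\omega_v(\lambda),\qquad \langle\lambda+x\rangle=\langle\lambda\rangle\,(1+x/\lambda).
\]
I expect this to be the only point needing care. It uses the multiplicativity of the decomposition $\Cp^\times\simeq p^{\Q}\times\mu\times D$, given the fixed compatible choice of fractional powers, together with the fact that $\langle u\rangle=u$ for $u\in D$. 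One also has to check that $(1+x/\lambda)^{1-s}$ is given by the binomial series. This holds because $\langle\lambda\rangle$ and $1+x/\lambda$ both lie in $D$, where $y\mapsto y^{1-s}$ is multiplicative.

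\textbf{Step 3: expand and integrate.} Write
\[
\langle\lambda+x\rangle^{1-s}=\langle\lambda\rangle^{1-s}\sum_{n\ge0}\binom{1-s}{n}\frac{x^n}{\lambda^n}.
\]
On $\mathbb D_\sigma(r)$ we have $|x/\lambda|_p\le\rho<1$ uniformly, and $|\binom{1-s}{n}|_p\le1$ for $s\in\Zp$. Hence the series converges in the norm of $B_r(\sigma)$, and it is finite when $s=1-m$. Since $\nu\in\cL^*(\sigma)$ is continuous on $B_r(\sigma)$, it commutes with the sum, just as in Proposition~\ref{prop:expansion}.

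\textbf{Step 4: specialize to $s=1-m$.} Setting $s=1-m$ with $m\ge0$ leaves the finite sum
\[
\zeta_{p,E}^\nu(1-m,\lambda)=\langle\lambda\rangle^{m}\sum_{n=0}^{m}\binom{m}{n}\lambda^{-n}\nu(x^n).
\]
Using $\langle\lambda\rangle=\omega_v(\lambda)^{-1}\lambda$, this equals
\[
\omega_v(\lambda)^{-m}\sum_{n=0}^m\binom mn\lambda^{m-n}\nu(x^n)=\omega_v(\lambda)^{-m}\,\nu\bigl((\lambda+x)^m\bigr)=\omega_v^{-m}(\lambda)E_m^\nu(\lambda),
\]
by linearity of $\nu$. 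The case $m=0$ is included, and it just gives $\nu(1)$.
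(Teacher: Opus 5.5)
Your proposal is correct and follows essentially the same route as the paper: for $|\lambda|_p$ large, factor $\langle\lambda+x\rangle=\langle\lambda\rangle(1+x/\lambda)$, expand binomially, apply $\nu$ term by term, and use $\langle\lambda\rangle=\omega_v^{-1}(\lambda)\lambda$. Your extra details make explicit what the paper leaves implicit. These are the multiplicativity of the decomposition, working with $\max_{x\in\sigma}|x|_p$ rather than $\supp(\nu)$, and the observation that at $s=1-m$ the sum is finite, so only linearity of $\nu$ is needed.
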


\begin{proof}
For large \(|\lambda|_p\), we have \(|x/\lambda|_p<1\) for all \(x\in\supp(\nu)\), so
\[
\langle\lambda+x\rangle
=
\langle\lambda\rangle\left(1+\frac{x}{\lambda}\right).
\]
Thus
\[
\langle\lambda+x\rangle^{1-s}
=
\langle\lambda\rangle^{1-s}
\sum_{n=0}^{\infty}\binom{1-s}{n}\frac{x^n}{\lambda^n}.
\]
Integrating term by term against \(\nu\) and setting \(s=1-m\), we obtain
\[
\zeta_{p,E}^\nu(1-m,\lambda)
=
\langle\lambda\rangle^{m}
\sum_{n=0}^{\infty}\binom{m}{n}\frac{\nu(x^n)}{\lambda^n}.
\]
Since \(\langle\lambda\rangle=\omega_v^{-1}(\lambda)\lambda\), the sum equals \(\omega_v^{-m}(\lambda)E_m^\nu(\lambda)\), and the result follows.
\end{proof}

\begin{remark}
When the spectrum is discrete and admits a locally analytic interpolation function \(f\), the fermionic spectral distribution \(\nu_f^E\) is the pushforward of the \(\mu_{-1}\)-measure under \(f\). In that case \(E_m^{\nu_f^E}(\lambda)=E_m^f(\lambda)\), and Theorem~\ref{thm:special-fermionic} reduces to  \cite[(54)]{HuKim}. Thus the present construction extends our previous framework.
\end{remark}

\subsection{Compatibility with the discrete case}
\label{Compatibility}

The main result of this subsection (Theorem \ref{thm:compat}) shows that when the spectrum is discrete and
admits a locally analytic interpolation function, the zeta function of
\cite{HuKim} is recovered as the zeta function associated with a
spectral distribution \(\mu_{\mathcal T}\in\cL^*(\sigma)\) constructed
by the inverse Stieltjes transform. This establishes that the inverse
Stieltjes framework is a genuine extension of our previous theory.
The proof of the bosonic case requires an estimate for the Volkenborn integral, which is most conveniently stated in terms of the spaces \(C^1(\Zp\to\Cp)\) and \(C^2(\Zp\to\Cp)\), together with the notion of the indefinite sum. We therefore begin by recalling these definitions.

\begin{definition}[{\(C^1\)-functions, see \cite[Definition~27.1]{SC}}]
\label{def:C1}
A function \(g:\Zp\to\Cp\) belongs to \(C^1(\Zp\to\Cp)\) if its first difference quotient
\[
\Phi_1 g(a,b)=\frac{g(a)-g(b)}{a-b},\quad a\neq b,
\]
extends to a continuous function on \(\Zp\times\Zp\). The \(C^1\)-norm is
\begin{equation}\label{gC1}
\|g\|_{C^1}=\max\bigl(\|g\|_\infty,\|\Phi_1 g\|_\infty\bigr).
\end{equation}
\end{definition}

\begin{definition}[{\(C^2\)-functions, see \cite[Definition~28.1]{SC}}]
\label{def:C2}
A function \(g:\Zp\to\Cp\) belongs to \(C^2(\Zp\to\Cp)\) if its second difference quotient \(\Phi_2 g\) extends to a continuous function on \(\Zp^3\). The \(C^2\)-norm is
\begin{equation}\label{gC2}
\|g\|_{C^2}=\max\bigl(\|g\|_\infty,\|\Phi_1 g\|_\infty,\|\Phi_2 g\|_\infty\bigr).
\end{equation}
\end{definition}

Recall also that a locally analytic function on \(\Zp\) belongs to
$$C^\infty(\Zp\to\Cp)=\bigcap_{n\ge1}C^n(\Zp\to\Cp);$$ see \cite[Corollary~29.11]{SC}.

\begin{definition}[{Indefinite sum, see \cite[Definition~34.1]{SC}}]
\label{def:indefinite-sum}
For \(g\in C(\Zp\to\Cp)\), the \emph{indefinite sum} of \(g\) is the unique continuous function \(Sg\) satisfying
\[
Sg(x+1)-Sg(x)=g(x),\quad Sg(0)=0.
\]
\end{definition}

By \cite[Theorem~53.6]{SC}, if \(g\in C^1(\Zp\to\Cp)\) then \(Sg\in C^1(\Zp\to\Cp)\) and
\[
\|Sg\|_{C^1}\le p\,\|g\|_{C^1}.
\]
Moreover, if \(g\) is locally analytic then so is \(Sg\), hence \(Sg\in C^2(\Zp\to\Cp)\).

With these definitions at hand, we can state the estimate.

\begin{lemma}[Volkenborn estimate]\label{lem:volkenborn}
Let \(g\in C^2(\Zp\to\Cp)\). Then
\begin{equation}\label{eq:volkenborn-estimate-2}
\left|
\int_{\Zp}g(a)\,\d a-\frac{1}{p^N}\sum_{a=0}^{p^N-1}g(a)
\right|_p
\le
\|Sg\|_{C^2}\,p^{-N},
\end{equation}
where \(Sg\) is the indefinite sum of \(g\). In particular, if
\(\{g_z\}\) is a family of \(C^2\)-functions whose indefinite sums satisfy
\[
\sup_z \|Sg_z\|_{C^2}<\infty,
\]
then the estimate holds with a constant independent of \(z\).
\end{lemma}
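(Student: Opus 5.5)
The plan is to express both the Volkenborn integral and its Riemann sums through the indefinite sum \(G:=Sg\) from Definition~\ref{def:indefinite-sum}. Telescoping \(G(x+1)-G(x)=g(x)\) over \(0\le a<p^N\), and using \(G(0)=0\), gives
\[
\frac{1}{p^N}\sum_{a=0}^{p^N-1}g(a)=\frac{G(p^N)-G(0)}{p^N}=\Phi_1G(p^N,0).
\]
Letting \(N\to\infty\) and using that \(\Phi_1G\) extends continuously to the diagonal (here \(G\in C^1\) suffices), we get the standard identity
\[
\int_{\Zp}g(a)\,\d a=\lim_{N\to\infty}\Phi_1G(p^N,0)=G'(0)=\Phi_1G(0,0).
\]
So the quantity to estimate is exactly
\[
\bigl|\Phi_1G(p^N,0)-\Phi_1G(0,0)\bigr|_p .
\]

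Next I would invoke the second difference quotient. By the recursive definition in \cite[Definition~28.1]{SC}, one has, for distinct points, \(\Phi_1G(x,y)-\Phi_1G(z,y)=(x-z)\,\Phi_2G(x,z,y)\). Since \(G\in C^2\), this extends by continuity to coincident arguments. Taking \(x=p^N\) and \(z=y=0\) yields
\[
\Phi_1G(p^N,0)-\Phi_1G(0,0)=p^N\,\Phi_2G(p^N,0,0).
\]
Hence the left side of \eqref{eq:volkenborn-estimate-2} is bounded by
\[
|p^N|_p\,\|\Phi_2G\|_\infty\le p^{-N}\|Sg\|_{C^2},
\]
by the definition \eqref{gC2} of the \(C^2\)-norm. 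The uniform version for a family \(\{g_z\}\) then follows immediately: every bound above involves only \(\|Sg_z\|_{C^2}\), so \(\sup_z\|Sg_z\|_{C^2}<\infty\) yields a \(z\)-independent constant.

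The main obstacle is justifying that \(Sg\) actually lies in \(C^2\). For \(g\in C^1\), \cite[Theorem~53.6]{SC} only gives \(Sg\in C^1\), and in general \(S\) does not raise the order of smoothness by one. So I would read the lemma as implicitly requiring \(Sg\in C^2\), since the right-hand side involves \(\|Sg\|_{C^2}\). In the applications, \(g\) is locally analytic, so the remark preceding the lemma gives \(Sg\in C^\infty\supset C^2\).

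A secondary point is that the identity \(\Phi_1G(x,y)-\Phi_1G(z,y)=(x-z)\Phi_2G(x,z,y)\) must be used in its continuously extended form at the point \((p^N,0,0)\). This extension is legitimate because \(\Phi_2G\) is continuous on \(\Zp^3\) and the identity holds on the dense set of points with distinct coordinates.
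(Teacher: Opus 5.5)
Your argument is correct and is essentially the paper's proof: the paper Taylor-expands $Sg$ to second order at $0$ and evaluates at $x=p^N$, which is exactly your identity $\Phi_1(Sg)(p^N,0)-(Sg)'(0)=p^N\,\bar\Phi_2(Sg)(p^N,0,0)$. Your form of the remainder is in fact the right one; the paper's $R(x)=\bar\Phi_2(Sg)(x,0,0)-D_2(Sg)(0)$ drops an $x^2D_2(Sg)(0)$ term, which does not affect the bound.

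The ``main obstacle'' you flag is not one. You do not need $S$ to raise $C^1$ to $C^2$; you only need $S$ to preserve $C^2$, and it preserves each $C^n$, since on Mahler expansions it is the shift $\binom{x}{n}\mapsto\binom{x}{n+1}$, which respects the coefficient characterization of $C^n$. The paper takes this from \cite[Corollary~54.3]{SC}, so $g\in C^2$ already gives $Sg\in C^2$, and no implicit extra hypothesis is needed.
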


\begin{proof}
Since \(g\in C^2(\Zp\to\Cp)\), \cite[Corollary~54.3]{SC} implies
that its indefinite sum \(Sg\) also belongs to \(C^2(\Zp\to\Cp)\).
Its Taylor expansion at \(0\) reads
\begin{equation}\label{eq:taylor-Sg}
Sg(x)=Sg(0)+x(Sg)'(0)+x^2R(x),
\end{equation}
where, by \cite[Theorem~29.4]{SC}, the remainder is
\[
R(x)=\bar\Phi_2(Sg)(x,0,0)-D_2(Sg)(0).
\]
Here \(\bar\Phi_2(Sg)\) is the continuous extension of \(\Phi_2(Sg)\)
to \(\Zp^3\), which exists because \(Sg\in C^2\), and
\[
D_2(Sg)(0):=\bar\Phi_2(Sg)(0,0,0)
\]
is the value of this extension on the diagonal, in accordance with the
general convention
\[
D_n f(a):=\bar\Phi_n f(a,\dots,a)\quad (f\in C^n,\ a\in X)
\]
from \cite[Definition~29.1]{SC}.

Since \(\bar\Phi_2(Sg)\) is continuous on the compact set \(\Zp^3\),
\[
|\bar\Phi_2(Sg)(x,0,0)|_p\le\|\Phi_2(Sg)\|_\infty\le\|Sg\|_{C^2},
\]
and
\[
|D_2(Sg)(0)|_p
=
|\bar\Phi_2(Sg)(0,0,0)|_p
\le
\|\Phi_2(Sg)\|_\infty
\le
\|Sg\|_{C^2}.
\]
Therefore, by the strong triangle inequality,
\[
|R(x)|_p
\le
\max\bigl(|\bar\Phi_2(Sg)(x,0,0)|_p,|D_2(Sg)(0)|_p\bigr)
\le
\|Sg\|_{C^2}.
\]

The Volkenborn integral and the Riemann sum can be written as
\begin{equation}\label{indefine}
\int_{\Zp}g(a)\,\d a=(Sg)'(0),
\quad
\frac{1}{p^N}\sum_{a=0}^{p^N-1}g(a)=\frac{Sg(p^N)}{p^N};
\end{equation}
see \cite[Definition~55.1 and \S54]{SC}. Substituting \(x=p^N\) into
\eqref{eq:taylor-Sg} and using \(Sg(0)=0\), we obtain
\[
Sg(p^N)=p^N(Sg)'(0)+p^{2N}R(p^N),
\]
hence
\[
(Sg)'(0)-\frac{Sg(p^N)}{p^N}=-p^N R(p^N).
\]
By \eqref{indefine}, the left-hand side is the difference between the
integral and the Riemann sum, so
\[
\left|
\int_{\Zp}g(a)\,\d a-\frac{1}{p^N}\sum_{a=0}^{p^N-1}g(a)
\right|_p
=
p^{-N}|R(p^N)|_p
\le
\|Sg\|_{C^2}\,p^{-N}.
\]
This proves \eqref{eq:volkenborn-estimate-2}.

Finally, let \(\{g_z\}\) be a family of \(C^2\)-functions with
\[
M:=\sup_z\|Sg_z\|_{C^2}<\infty.
\]
Applying the estimate just proved to each \(g_z\), we obtain
\[
\left|
\int_{\Zp}g_z(a)\,\d a-\frac{1}{p^N}\sum_{a=0}^{p^N-1}g_z(a)
\right|_p
\le
\|Sg_z\|_{C^2}\,p^{-N}
\le
M\,p^{-N},
\]
uniformly in \(z\). Thus the estimate holds with a constant independent
of \(z\).
\end{proof}

For the proof, we also need the following lemma.

\begin{lemma}[Decay at infinity for Volkenborn resolvents]
\label{lem:volkenborn-decay}
Let \(\sigma\subset\Cp\) be compact, and let \(f:\Zp\to\sigma\) be
locally analytic thus bounded. Define
\[
F_f(z)=\int_{\Zp}\frac{\d a}{z-f(a)},\quad z\in\Omega_\sigma:=\Cp\setminus\sigma.
\]
Then \(F_f\) is well-defined on \(\Omega_\sigma\), and there exists a
constant \(C>0\), depending only on \(f\), such that
\[
|F_f(z)|_p\le \frac{C}{|z|_p}
\quad(|z|_p\to\infty).
\]
Consequently,
\[
\lim_{|z|_p\to\infty}F_f(z)=0.
\]
\end{lemma}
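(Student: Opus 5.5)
Since \(f\) is locally analytic on the compact set \(\Zp\), it is bounded. Set \(M:=\sup_{a\in\Zp}|f(a)|_p<\infty\). For \(z\in\Omega_\sigma\) the function \(g_z(a)=1/(z-f(a))\) is locally analytic on \(\Zp\), because \(z-f(a)\) is locally analytic and, by compactness of \(f(\Zp)\subset\sigma\), bounded away from \(0\). By \cite[Corollary~29.11]{SC} it therefore lies in \(C^\infty(\Zp\to\Cp)\subset C^1\), so the Volkenborn integral \(F_f(z)=\int_{\Zp}g_z(a)\,\d a\) exists. This settles well-definedness.

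For the decay estimate I would restrict to \(|z|_p>M\) and expand in a geometric series. Here \(|f(a)/z|_p\le M/|z|_p<1\) uniformly in \(a\), so
\[
g_z(a)=\frac{1}{z}\sum_{m=0}^{\infty}\frac{f(a)^m}{z^m}.
\]
Rather than integrate term by term, which would require controlling \(\int_{\Zp}f^m\,\d a\) in \(m\), I would bound the \(C^1\)-norm of \(g_z\) directly and then use the standard Volkenborn bound. By \cite[Theorem~55.3]{SC}, or equivalently through \(\|Sg\|_{C^1}\le p\|g\|_{C^1}\) and \(\int g\,\d a=(Sg)'(0)\), one has
\[
\Bigl|\int_{\Zp}g\,\d a\Bigr|_p\le p\,\|g\|_{C^1}.
\]

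It then suffices to show \(\|g_z\|_{C^1}\le C'/|z|_p\). The sup-norm term is immediate: \(|g_z(a)|_p=|z|_p^{-1}\) for all \(a\). For the difference quotient,
\[
\Phi_1g_z(a,b)=\frac{\Phi_1f(a,b)}{(z-f(a))(z-f(b))},
\]
so
\[
|\Phi_1g_z(a,b)|_p\le\frac{\|\Phi_1f\|_\infty}{|z|_p^2}\le\frac{\|\Phi_1f\|_\infty}{M'\,|z|_p}
\]
once \(|z|_p\ge M'\), where \(M'>M\) is fixed. Note that \(\|\Phi_1 f\|_\infty<\infty\) because \(f\in C^1\). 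Hence \(|F_f(z)|_p\le p\max(1,\|\Phi_1f\|_\infty/M')\,|z|_p^{-1}\), and \(C=p\max(1,\|\Phi_1f\|_\infty/M')\) depends only on \(f\). The limit statement follows at once.

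The main obstacle is justifying the \(C^1\) bound for the Volkenborn integral. If I wanted to avoid citing it, I would use Lemma~\ref{lem:volkenborn}-style reasoning with the indefinite sum, which needs \(\|Sg_z\|_{C^1}\le p\|g_z\|_{C^1}\) from \cite[Theorem~53.6]{SC}. The difference-quotient computation itself is routine.
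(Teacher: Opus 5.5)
Your proposal is correct and is essentially the paper's argument: for large $|z|_p$ you bound $\|g_z\|_{C^1}$ by a constant times $|z|_p^{-1}$, using $|g_z(a)|_p=|z|_p^{-1}$ and the identity $\Phi_1 g_z(a,b)=\Phi_1 f(a,b)/\bigl((z-f(a))(z-f(b))\bigr)$, and then apply the Volkenborn bound $\bigl|\int_{\Zp}g\,\mathrm{d}a\bigr|_p\le p\|g\|_{C^1}$; the paper normalizes with $|z|_p\ge 1$ rather than your $|z|_p\ge M'$, which only changes the constant. Your explicit check that $g_z$ is locally analytic for every $z\in\Omega_\sigma$ also covers the well-definedness claim, which the paper's proof only addresses for $|z|_p>M$.
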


\begin{proof}
Since \(f\) is bounded, there exists \(M<\infty\) such that
\[
|f(a)|_p\le M\quad(a\in\Zp).
\]
For \(|z|_p>M\) and every \(a\in\Zp\), we have
\[
|z-f(a)|_p=|z|_p.
\]
Define
\[
g_z(a):=\frac{1}{z-f(a)},\quad a\in\Zp.
\]
Then \(g_z\) is locally analytic on \(\Zp\), hence belongs to
\(C^1(\Zp\to\Cp)\). We estimate its \(C^1\)-norm.

First, the supremum norm is
\[
\|g_z\|_\infty
=
\sup_{a\in\Zp}|g_z(a)|_p
=
\sup_{a\in\Zp}\frac{1}{|z-f(a)|_p}
=
\frac1{|z|_p}.
\]

Next, for \(a\ne b\), the first difference quotient is
\[
\Phi_1 g_z(a,b)
=
\frac{g_z(a)-g_z(b)}{a-b}
=
\frac{
\frac{1}{z-f(a)}-\frac{1}{z-f(b)}
}{a-b}.
\]
A direct computation gives
\[
\Phi_1 g_z(a,b)
=
\frac{f(a)-f(b)}{(z-f(a))(z-f(b))(a-b)}
=
\frac{\Phi_1 f(a,b)}{(z-f(a))(z-f(b))}.
\]
Since \(f\) is locally analytic, \(f\in C^1(\Zp\to\Cp)\) and
\[
M_1:=\|\Phi_1 f\|_\infty<\infty.
\]
For \(|z|_p>M\), we have \(|z-f(a)|_p=|z|_p\) and
\(|z-f(b)|_p=|z|_p\). Hence
\[
|\Phi_1 g_z(a,b)|_p
=
\frac{|\Phi_1 f(a,b)|_p}{|z-f(a)|_p\,|z-f(b)|_p}
\le
\frac{M_1}{|z|_p^2}.
\]
Taking the supremum over \(a\ne b\), we obtain
\[
\|\Phi_1 g_z\|_\infty
\le
\frac{M_1}{|z|_p^2}.
\]

Combining the two estimates yields
\[
\|g_z\|_{C^1}
=
\max\bigl(\|g_z\|_\infty,\|\Phi_1 g_z\|_\infty\bigr)
\le
\max\left(
\frac1{|z|_p},
\frac{M_1}{|z|_p^2}
\right).
\]
For \(|z|_p\ge 1\), we have \(1/|z|_p\le 1\) and \(M_1/|z|_p^2\le M_1/|z|_p\). Hence
\[
\|g_z\|_{C^1}
\le
\max\left(\frac{1}{|z|_p},\frac{M_1}{|z|_p^2}\right)
\le
\frac{\max(1,M_1)}{|z|_p}
=
\frac{C_0}{|z|_p},
\]
where \(C_0:=\max(1,M_1)\).

By the Volkenborn estimate for \(C^1\)-functions
(see \cite[Proposition~55.2]{SC}),
\[
\left|\int_{\Zp} g_z(a)\,\d a\right|_p
\le
p\,\|g_z\|_{C^1}.
\]
Therefore
\[
|F_f(z)|_p
=
\left|\int_{\Zp}\frac{\d a}{z-f(a)}\right|_p
\le
p\,\|g_z\|_{C^1}
\le
\frac{pC_0}{|z|_p}.
\]
Setting \(C:=pC_0\), we obtain
\[
|F_f(z)|_p\le \frac{C}{|z|_p}.
\]
Letting \(|z|_p\to\infty\), we conclude \(F_f(z)\to0\).
\end{proof}

\begin{theorem}[{Compatibility; cf. \cite[Definition I.1]{HuKim}}]
\label{thm:compat}
Let \(f:\Zp\to\Cp\) be locally analytic and bounded, and put
\(\sigma=f(\Zp)\). Define
\[
F_f(z)=\int_{\Zp}\frac{\d a}{z-f(a)},\quad z\in\Omega_\sigma.
\]
Then \(F_f\in H_0(\Omega_\sigma)\). Let
\(\mu_{\mathcal T}\in\cL^*(\sigma)\) be the unique generalized
distribution satisfying \(S_{\mu_{\mathcal T}}=F_f\). Then
\[
\zeta_p^{\mu_{\mathcal T}}(s,\lambda)=\zeta_p^f(s,\lambda)
\]
for all \(s\in\Zp\setminus\{1\}\) and all \(\lambda\in\Cp\) with
\(-\lambda\notin\sigma\). In other words, the zeta function of \cite{HuKim} is recovered as the zeta function associated with the spectral distribution \(\mu_{\mathcal T}\in\cL^*(\sigma)\). The fermionic case is analogous, with
\(\mu_{-1}\) in place of the Haar distribution.
\end{theorem}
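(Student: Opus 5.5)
The proof has two parts: first show that \(F_f\in H_0(\Omega_\sigma)\); then show that the resulting functional \(\mu_{\mathcal T}\) is the pushforward of the Haar distribution under \(f\), that is, \(\mu_{\mathcal T}(g)=\int_{\Zp}g(f(a))\,\d a\) for every \(g\in\cL(\sigma)\). The zeta identity then follows by taking \(g=g_{\lambda,s}=\langle\lambda+x\rangle^{1-s}\), since this gives
\[
\frac{1}{s-1}\int_\sigma g\,\d\mu_{\mathcal T}=\frac{1}{s-1}\int_{\Zp}\langle\lambda+f(a)\rangle^{1-s}\,\d a=\zeta_p^f(s,\lambda).
\]
We must check that \(g_{\lambda,s}\in\cL(\sigma)\) whenever \(-\lambda\notin\sigma\). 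Since \(\sigma\) is compact, \(\lambda+x\) stays bounded away from \(0\) on a neighbourhood \(\mathbb D_\sigma(r)\), and the argument of Lemma~\ref{lem:g-lambda-s}(i) applies verbatim with \(\Omega\) replaced by this condition.

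\textbf{Step 1: Krasner analyticity.} Fix \(r>0\) and \(z\in\Omega_\sigma(r)\), and set \(g_z(a)=1/(z-f(a))\). The Riemann sums \(p^{-N}\sum_{a<p^N}g_z(a)\) are rational functions in \(z\) with poles in \(\sigma\). By Lemma~\ref{lem:volkenborn}, they converge to \(F_f(z)\) with error at most \(\|Sg_z\|_{C^2}p^{-N}\), so it suffices to bound \(\|Sg_z\|_{C^2}\) uniformly in \(z\in\Omega_\sigma(r)\).

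Since \(f\) is locally analytic on compact \(\Zp\), the difference quotients \(\Phi_1f\) and \(\Phi_2f\) are bounded. Expanding \(\Phi_1g_z\) as in Lemma~\ref{lem:volkenborn-decay}, and \(\Phi_2g_z\) by the analogous algebraic identity, every denominator is a product of factors \(z-f(\cdot)\) of absolute value \(\ge r\). This bounds \(\|g_z\|_{C^2}\) by a constant \(C(r)\).

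I expect the main obstacle here: passing from \(\|g_z\|_{C^2}\) to \(\|Sg_z\|_{C^2}\). The cited bound \(\|Sg\|_{C^1}\le p\|g\|_{C^1}\) only covers \(C^1\). I would try the \(C^2\) analogue from \cite[\S54]{SC}, which gives \(\|Sg\|_{C^2}\le c_p\|g\|_{C^2}\). If that reference is not usable, I would instead bound \(g_z\) uniformly in a Banach space of locally analytic functions of fixed radius on \(\Zp\). This is possible because \(f\) is analytic on discs of a fixed radius \(p^{-k}\), and there \(1/(z-f)\) has a uniformly convergent expansion once \(|f(a)-f(a_0)|_p<r\). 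On such a space, \(S\) acts continuously with a uniformly bounded operator norm.

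Decay at infinity is exactly Lemma~\ref{lem:volkenborn-decay}. Together these give \(F_f\in H_0(\Omega_\sigma)\), so Theorem~\ref{thm:invStieltjes} yields \(\mu_{\mathcal T}\).

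\textbf{Step 2: identification of \(\mu_{\mathcal T}\).} Define \(\nu(g):=\int_{\Zp}g(f(a))\,\d a\) for \(g\in\cL(\sigma)\). If \(g\in B_r(\sigma)\), then \(g\circ f\) is locally analytic on \(\Zp\), with radius controlled by \(r\) and the analyticity radius of \(f\). The Volkenborn integral is continuous on the corresponding Banach space via \(|\int h\,\d a|_p\le p\|h\|_{C^1}\), and \(\|g\circ f\|_{C^1}\) is bounded by \(\|g\|_r\) up to a constant depending on \(r\) (Cauchy estimates bound \(g'\), and \(\Phi_1f\) is bounded). Hence \(\nu\in\cL^*(\sigma)\).

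Its Stieltjes transform is \(S_\nu(z)=\nu\bigl(1/(z-x)\bigr)=F_f(z)\) by definition. Injectivity of \(S\) in Theorem~\ref{thm:invStieltjes} then gives \(\nu=\mu_{\mathcal T}\), which completes the argument.

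\textbf{Fermionic case.} Proposition~\ref{prop:fermionic-resolvent-well-defined} already gives \(F^E_f\in H_0(\Omega_\sigma)\). The pushforward \(g\mapsto\int g(f(a))\,\d\mu_{-1}(a)\) is continuous, since it is bounded by the sup norm. The same injectivity argument identifies it with \(\nu_f^E\), and hence \(\zeta_{p,E}^{\nu_f^E}=\zeta_{p,E}^f\).
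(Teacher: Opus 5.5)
Your proof is correct, but for the bosonic identification it takes a different route from the paper.

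\textbf{The paper's route.} The paper handles Step~1 as you do. It resolves the obstacle you flag by citing that the indefinite sum maps $C^2(\Zp\to\Cp)$ into itself, and is bounded there by the closed graph theorem. It then computes $\mu_{\mathcal T}(g_\lambda)$ in three moves:
\begin{enumerate}
\item insert $F_f$ into the explicit Shnirelman inversion formula \eqref{VF};
\item interchange the Shnirelman integral with the Volkenborn Riemann sums, using uniform convergence on the circles $C(a_i,|\Gamma_i|_p)\subset\Omega_\sigma(|\Gamma_i|_p)$;
\item finish with the $p$-adic Cauchy formula.
\end{enumerate}

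\textbf{Your route.} You show that the pushforward $\nu(g)=\int_{\Zp}g(f(a))\,\d a$ lies in $\cL^*(\sigma)$, note that $S_\nu=F_f$, and use injectivity of $S$. The paper uses exactly this argument, but only in the fermionic case.

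\textbf{Comparison.} Your route is shorter and treats the two cases uniformly. It also makes your Step~1 unnecessary. Since $S$ maps $\cL^*(\sigma)$ into $H_0(\Omega_\sigma)$, you get $F_f=S_\nu\in H_0(\Omega_\sigma)$ for free. So the uniform $C^2$ bound on $Sg_z$ that you worried about is never needed. The paper's route buys an explicit check that the inversion formula reproduces the pushforward. The price is that $C^2$ estimate and a limit interchange.

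\textbf{A detail to write out.} In your continuity estimate, the single-disc Cauchy estimate does not cover pairs with $|f(a)-f(b)|_p\ge r$. For those pairs, $r\le\|\Phi_1 f\|_\infty|a-b|_p$ gives $|\Phi_1(g\circ f)(a,b)|_p\le\|\Phi_1 f\|_\infty\|g\|_r/r$. This is the same bound you get for close pairs. Hence $\|g\circ f\|_{C^1}\le\max(1,\|\Phi_1 f\|_\infty/r)\,\|g\|_r$ in all cases, and $|\nu(g)|_p\le p\,\|g\circ f\|_{C^1}$ then gives continuity on each $B_r(\sigma)$.
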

\begin{proof}
We treat the bosonic case first, then the fermionic case.

\medskip
\noindent\textbf{Bosonic case.}
Let \(f:\Zp\to\Cp\) be locally analytic and bounded. Since \(\Zp\) is
compact and \(f\) is continuous, \(\sigma=f(\Zp)\) is compact.

\medskip
\noindent\textit{Step 1: \(F_f\in H_0(\Omega_\sigma)\).}

Fix \(r>0\) and set
\[
\Omega_\sigma(r)=\{z\in\Cp:\operatorname{dist}(z,\sigma)\ge r\}.
\]
For \(z\in\Omega_\sigma(r)\) and \(a\in\Zp\), we have \(f(a)\in\sigma\),
hence
\begin{equation}\label{fa}
|z-f(a)|_p\ge r.
\end{equation}
Define
\[
g_z(a)=\frac{1}{z-f(a)},\quad a\in\Zp.
\]
Since \(f\) is locally analytic on \(\Zp\), it belongs to
\(C^\infty(\Zp\to\Cp)\) by \cite[Corollary~29.11]{SC}; in particular
\(g_z\in C^2(\Zp\to\Cp)\), and its \(C^2\)-norm admits a bound
independent of \(z\in\Omega_\sigma(r)\). Indeed, for every \(z\in\Omega_\sigma(r)\) and \(a\in\mathbb Z_p\), we have
\[
|g_z(a)|=\frac{1}{|z-f(a)|}\le\frac1r,
\]
so
\[
\|g_z\|_\infty:=\sup_{a\in\mathbb Z_p}|g_z(a)|\le\frac1r.
\]
For \(a\neq b\),
\[
\Phi_1 g_z(a,b)
=\frac{g_z(a)-g_z(b)}{a-b}
=\frac{\Phi_1 f(a,b)}{(z-f(a))(z-f(b))},
\]
hence
\[
\|\Phi_1 g_z\|_\infty
:=\sup_{\substack{a,b\in\mathbb Z_p\\a\neq b}}|\Phi_1 g_z(a,b)|
\le\frac{\|f\|_{C^1}}{r^2}.
\]
By a similar argument applied to the second difference quotient, we have
\[
\|\Phi_2 g_z\|_\infty
:=\sup_{\substack{a,b,c\in\mathbb Z_p\\a,b,c\text{ pairwise distinct}}}
|\Phi_2 g_z(a,b,c)|
\le\frac{C_f}{r^3},
\]
where \(C_f\) depends only on \(\|f\|_{C^2}\). Consequently, by (\ref{gC2})
\begin{equation}\label{gz}
\sup_{z\in\Omega_\sigma(r)}\|g_z\|_{C^2}
=
\sup_{z\in\Omega_\sigma(r)}
\max\Bigl(
\|g_z\|_\infty,
\|\Phi_1 g_z\|_\infty,
\|\Phi_2 g_z\|_\infty
\Bigr)
\le C(r)<\infty,
\end{equation}
where the inner suprema in the definitions of \(\|g_z\|_\infty\), \(\|\Phi_1 g_z\|_\infty\), and \(\|\Phi_2 g_z\|_\infty\) are taken over \(a,b,c\in\mathbb Z_p\) (with the indicated distinctness conditions), while the outer supremum is taken over the parameter \(z\in\Omega_\sigma(r)\).

By \cite[Corollary~54.3]{SC}, the indefinite sum maps
\(C^2(\Zp\to\Cp)\) into itself, and the operator
\(S:C^2(\Zp\to\Cp)\to C^2(\Zp\to\Cp)\) is continuous (by the closed
graph theorem; see \cite[Theorem 3.5]{vanRooij}). Combining this with
\eqref{gz}, we obtain
\[
\sup_{z\in\Omega_\sigma(r)}\|Sg_z\|_{C^2}
\le
\|S\|_{C^2\to C^2}\,C(r)
=:
C_r<\infty.
\]
Lemma~\ref{lem:volkenborn} therefore applies to each \(g_z\) with the
same bound, giving
\[
|F_f(z)-F_{f,N}(z)|_p
\le
\|Sg_z\|_{C^2}\,p^{-N}
\le
C_r\,p^{-N},
\]
where \(F_{f,N}(z):=p^{-N}\sum_{a=0}^{p^N-1}(z-f(a))^{-1}\). Since the
right-hand side is independent of \(z\), \(F_{f,N}\to F_f\) uniformly
on \(\Omega_\sigma(r)\). Each \(F_{f,N}\) is a rational function with
poles in \(\sigma\), so \(F_f\) is Krasner analytic on \(\Omega_\sigma\).

It remains to verify that \(F_f\) vanishes at infinity. This is
precisely the content of Lemma~\ref{lem:volkenborn-decay}: since \(f\)
is locally analytic and bounded, the lemma yields a constant \(C>0\),
depending only on \(f\), such that
\[
|F_f(z)|_p\le \frac{C}{|z|_p}
\]
for all sufficiently large \(|z|_p\). Consequently,
\[
F_f(z)\longrightarrow 0
\quad(|z|_p\to\infty).
\]
Therefore \(F_f\in H_0(\Omega_\sigma)\).

Thus by Vishik's inverse Stieltjes theorem (Theorem~\ref{thm:invStieltjes}),
there exists a unique generalized distribution
\[
\mu_{\mathcal T}\in\cL^*(\sigma)
\]
such that
\[
S_{\mu_{\mathcal T}}=F_f.
\]

\medskip
\noindent\textit{Step 2: The evaluation of \(\mu_{\mathcal T}(g_\lambda)\).}

Fix \(s\in\Zp\setminus\{1\}\) and \(\lambda\in\Cp\) with
\(-\lambda\notin\sigma\), and set
\[
g_\lambda(x):=\langle\lambda+x\rangle^{1-s},\quad x\in\sigma.
\]
By Step 1 of the proof of Theorem~\ref{thm:welldef}, \(g_\lambda\) is
locally analytic on \(\sigma\), so \(g_\lambda\in\cL(\sigma)\).

Let \(\mu_{\mathcal T}\) be the generalized distribution from Step 1. By the explicit form of the inverse Stieltjes transform (Theorem~\ref{thm:invStieltjes}, (\ref{VF})), substituting \(F_f(x)\) for \(F(x)\),
\begin{equation}\label{muT}
\mu_{\mathcal T}(g_\lambda)
=
\sum_i\int_{a_i,\Gamma_i}
F_f(x)\,g_\lambda(x)\,(x-a_i)\,\d x,
\end{equation}
where the \(a_i,\Gamma_i\) are as in Theorem~\ref{thm:invStieltjes}.

We shall interchange the Shnirelman integral with the Volkenborn integral
defining \(F_f\) in (\ref{muT}). To do this, for each \(i\), put \(r_i:=|\Gamma_i|_p\), and let
\(x\) lie on the circle \(|x-a_i|_p=r_i\). We claim that this circle
is contained in \(\Omega_\sigma(r_i)\). Indeed, for every \(a\in\sigma\),
the element \(a\) lies in one of the discs
\(D_j=\{y:|y-a_j|_p<r_i\}\) covering \(\sigma\). If \(j=i\), then
\(|a-a_i|_p<r_i\), so \(|x-a|_p=|x-a_i|_p=r_i\). If \(j\ne i\), the
discs \(D_i\) and \(D_j\) are disjoint, so \(|a_i-a_j|_p\ge r_i\); if
we had \(|x-a|_p<r_i\), then
\[
|a_i-a_j|_p\le\max(|a_i-a|_p,|a-a_j|_p)<r_i,
\]
a contradiction. Hence \(|x-a|_p\ge r_i\), i.e.,
\(x\in\Omega_\sigma(r_i)\).

For each fixed \(i\), define
\[
h_i(a):=\int_{a_i,\Gamma_i}
\frac{g_\lambda(x)\,(x-a_i)}{x-f(a)}\,\d x,
\quad a\in\mathbb Z_p.
\]
Since the circle \(|x-a_i|_p=r_i=|\Gamma_i|_p\) is contained in
\(\Omega_\sigma(r_i)\), by (\ref{fa}) we have
\[
|x-f(a)|_p\ge r_i>0
\]
for all \(x\) on the circle and all \(a\in\mathbb Z_p\). Hence the map
\[
a\longmapsto \frac{1}{x-f(a)}
\]
is locally analytic on \(\mathbb Z_p\). Consequently, \(h_i\) is locally
analytic on \(\mathbb Z_p\). In particular, by \cite[Corollary~29.11]{SC}, locally analytic functions are \(C^\infty\), hence
\begin{equation}\label{hi}
h_i\in C^\infty(\mathbb Z_p\to\mathbb C_p)\subset C^1(\mathbb Z_p\to\mathbb C_p),
\end{equation}
so its Volkenborn integral exists by \cite[Definition~55.1]{SC}.

By the Volkenborn estimate of Step~1, the partial sums
\[
F_{f,N}(x)=\frac{1}{p^N}\sum_{a=0}^{p^N-1}\frac{1}{x-f(a)}
\]
satisfy
\[
\sup_{x\in\Omega_\sigma(r_i)}|F_f(x)-F_{f,N}(x)|_p\le C_{r_i}\,p^{-N}.
\]
Since the circle \(|x-a_i|_p=r_i\) is contained in \(\Omega_\sigma(r_i)\),
the convergence is uniform on the circle.
By Proposition~\ref{prop:shnirelman-uniform}(1), the Shnirelman integral
is compatible with this uniform limit:
\[
\int_{a_i,\Gamma_i}F_{f,N}(x)\,g_\lambda(x)\,(x-a_i)\,\d x
\longrightarrow
\int_{a_i,\Gamma_i}F_f(x)\,g_\lambda(x)\,(x-a_i)\,\d x.
\]

On the other hand, because
\[
F_{f,N}(x)=\frac{1}{p^N}\sum_{a=0}^{p^N-1}\frac{1}{x-f(a)},
\]
we have
\[
\int_{a_i,\Gamma_i}F_{f,N}(x)\,g_\lambda(x)\,(x-a_i)\,\d x
=
\frac{1}{p^N}\sum_{a=0}^{p^N-1}
\int_{a_i,\Gamma_i}
\frac{g_\lambda(x)\,(x-a_i)}{x-f(a)}\,\d x
=
\frac{1}{p^N}\sum_{a=0}^{p^N-1}h_i(a).
\]
Since $h_i$ are Volkenborn integrable as stated in the above, by the definition of the Volkenborn integral,
\[
\frac{1}{p^N}\sum_{a=0}^{p^N-1}h_i(a)
\longrightarrow
\int_{\mathbb Z_p}h_i(a)\,\d a
\quad(N\to\infty).
\]
Combining these limits, we obtain
\[
\int_{a_i,\Gamma_i}F_f(x)\,g_\lambda(x)\,(x-a_i)\,\d x
=
\int_{\mathbb Z_p}
\left(
\int_{a_i,\Gamma_i}
\frac{g_\lambda(x)\,(x-a_i)}{x-f(a)}\,\d x
\right)\d a.
\]
Summing over \(i\), (\ref{muT}) implies
\[
\mu_{\mathcal T}(g_\lambda)
=
\int_{\Zp}
\left(
\sum_i\int_{a_i,\Gamma_i}
\frac{g_\lambda(x)\,(x-a_i)}{x-f(a)}\,\d x
\right)\d a.
\]

Then for each fixed \(a\in\Zp\), the element \(f(a)\) lies in \(\sigma\), so
the \(p\)-adic Cauchy integral formula (Proposition~\ref{prop:cauchy}) applied to \(g_\lambda\in\cL(\sigma)\)
gives
\[
\sum_i\int_{a_i,\Gamma_i}
\frac{g_\lambda(x)\,(x-a_i)}{x-f(a)}\,\d x
=
g_\lambda(f(a)).
\]
Therefore
\[
\mu_{\mathcal T}(g_\lambda)
=
\int_{\Zp}g_\lambda(f(a))\,\d a.
\]

\medskip
\noindent\textit{Step 3: Conclusion.}

By Definition~\ref{def:bosonic-zeta} and the definition of
\(\zeta_p^f\),
\[
\begin{aligned}
\zeta_p^{\mu_{\mathcal T}}(s,\lambda)
&= \frac{1}{s-1}\,\mu_{\mathcal T}(g_\lambda)
= \frac{1}{s-1}\int_{\Zp}g_\lambda(f(a))\,\d a \\
&= \frac{1}{s-1}\int_{\Zp}\langle\lambda+f(a)\rangle^{1-s}\,\d a
= \zeta_p^f(s,\lambda).
\end{aligned}
\]

\medskip
\noindent\textbf{Fermionic case.}
Let \(f:\Zp\to\sigma\) be continuous. Since \(\mu_{-1}\) is bounded,
\[
|\mu_{-1}(a+p^N\Zp)|_p=|(-1)^a|_p=1
\]
for all \(a,N\) (Definition~\ref{def:mu-1}), the measure \(\mu_{-1}\)
already belongs to \(\cL^*(\sigma)\) (Definition~\ref{def:generalized-distribution}).
Define
\[
\nu_f^E(g)=\int_{\Zp}g(f(a))\,\d\mu_{-1}(a),\quad g\in\cL(\sigma).
\]
For all \(g\in B_r(\sigma)\),
\[
|\nu_f^E(g)|_p\le\sup_{a\in\Zp}|g(f(a))|_p\le\|g\|_r,
\]
so \(\nu_f^E\in\cL^*(\sigma)\) with \(\|\nu_f^E\|_r\le1\).

Its Stieltjes transform is
\[
S_{\nu_f^E}(z)
=
\int_{\Zp}\frac{\d\mu_{-1}(a)}{z-f(a)}
=
F_f^E(z),
\]
which is the fermionic resolvent of
Definition~\ref{def:fermionic-resolvent}. Since \(\nu_f^E\in\cL^*(\sigma)\) and \(S_{\nu_f^E}=F_f^E\), the inverse
Stieltjes theorem implies \(F_f^E\in H_0(\Omega_\sigma)\) and that
\(F_f^E\) is the Stieltjes transform of \(\nu_f^E\). Hence, using the
definition of the fermionic spectral zeta function
(Definition~\ref{def:fermionic-zeta}),
\[
\zeta_{p,E}^{\nu_f^E}(s,\lambda)
=
\int_\sigma\langle\lambda+x\rangle^{1-s}\,\d\nu_f^E(x)
=
\int_{\Zp}\langle\lambda+f(a)\rangle^{1-s}\,\d\mu_{-1}(a)
=
\zeta_{p,E}^f(s,\lambda).
\]
This proves the fermionic compatibility.
\end{proof}

\section{$p$-adic functional determinants}\label{sec:det}

The spectral zeta functions introduced in Section~\ref{sec:zeta} give rise, by differentiation at $s=0$, to the $p$-adic functional determinants. In this section we shall develop their theory for a general spectral distribution $\mu\in\cL^*(\sigma)$. In the classical archimedean setting, the functional determinant is obtained by differentiating the spectral zeta function at $s=0$, and it encodes the one-loop effective action of a quantum system \cite{Hawking1977,Voros1992}; see also \cite{Elizalde,Kirsten} for comprehensive accounts of its physical applications, including quantum field theory, spectral geometry and Casimir energies. Recent developments in the archimedean setting include the localization of spectral zeta functions and determinants on Lie groups \cite{Spreafico}, systematic regularization techniques for functional determinants in quantum field theory \cite{ShojiYamaguchi2025}, and the application of supersymmetric zeta functions and determinants to supersymmetric indices and Casimir energies \cite{NakayamaOkazaki2026}.

Our previous framework established the corresponding $p$-adic construction for discrete spectra admitting a locally analytic interpolation function. Our goal here is to extend that construction to the more general spectral distributions introduced in Section~\ref{sec:zeta}. We first define the bosonic and fermionic functional determinants, then prove their integral representations, and finally derive Stirling-type expansions in powers of $1/\lambda$. Throughout this section, we assume that $\mu$ and $\nu$ are generalized distributions obtained from suitable resolvents via the inverse Stieltjes transform.

\begin{definition}\label{def:functional-det}
For a bosonic spectral distribution $\mu$, the \emph{$p$-adic functional determinant} is
\[
\log\Gamma_p^\mu(\lambda)
=
\omega_v(\lambda)
\frac{\partial}{\partial s}
\zeta_p^\mu(s,\lambda)
\bigg|_{s=0}.
\]
For a fermionic distribution $\nu$, define analogously
\[
\log\Gamma_{p,E}^\nu(\lambda)
=
\omega_v(\lambda)
\frac{\partial}{\partial s}
\zeta_{p,E}^\nu(s,\lambda)
\bigg|_{s=0}.
\]
\end{definition}
\begin{remark}
This is a generalization of \cite[Definition~IV.1]{HuKim}, which was formulated for discrete spectra. In the discrete case with a locally analytic interpolation function, the definition of \cite{HuKim} is recovered as a special case.
\end{remark}
\begin{theorem}[Integral representation]\label{thm:integral-rep}
For $\lambda$ sufficiently large and $\lambda+x$ avoiding zero,
\[
\log\Gamma_p^\mu(\lambda)
=
\int_\sigma
(\lambda+x)\bigl(\log_p(\lambda+x)-1\bigr)
\,\d\mu(x).
\]
For the fermionic case,
\[
\log\Gamma_{p,E}^\nu(\lambda)
=
-\int_\sigma
(\lambda+x)\log_p(\lambda+x)
\,\d\nu(x).
\]
\end{theorem}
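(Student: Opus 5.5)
The plan is a direct computation: differentiate under the action of the distribution at $s=0$, then use the multiplicative decomposition $a=\omega_v(a)\langle a\rangle$ to remove the projection function. Fix $\lambda$ with $|\lambda|_p>\max_{x\in\sigma}|x|_p$. Then $\lambda$ lies in a set $\Omega$ of the type \eqref{omega} (take $\lambda_0=\lambda$), and $\lambda+x\neq 0$ for all $x\in\sigma$. Write $g_{\lambda,s}(x)=\langle\lambda+x\rangle^{1-s}$ as in Lemma~\ref{lem:g-lambda-s}.

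\textbf{Step 1 (differentiate under $\mu$).} By Lemma~\ref{lem:g-lambda-s}(ii) with $s_0=0$, there is an $r>0$ such that
\[
g_{\lambda,s}=\sum_{n\ge0}s^n\psi_n
\]
converges absolutely in $B_r(\sigma)$ for small $|s|_p$, with
\[
\psi_0=\langle\lambda+x\rangle,\qquad \psi_1=-\langle\lambda+x\rangle\log_p\langle\lambda+x\rangle .
\]
The expansion uses only $\mu\in\cL^*(\sigma)$, so it applies equally to $\nu$. Since $\mu$ is continuous on $B_r(\sigma)$, as in the proof of Theorem~\ref{thm:welldef},
\[
\mu(g_{\lambda,s})=\sum_n s^n\mu(\psi_n),\qquad \frac{\partial}{\partial s}\mu(g_{\lambda,s})\Big|_{s=0}=\mu(\psi_1).
\]
For the bosonic function, apply the product rule to $(s-1)^{-1}\mu(g_{\lambda,s})$ at $s=0$:
\[
\frac{\partial}{\partial s}\zeta_p^\mu(s,\lambda)\Big|_{s=0}=-\mu(\psi_0)-\mu(\psi_1)=\mu\bigl(\langle\lambda+x\rangle(\log_p\langle\lambda+x\rangle-1)\bigr).
\]
For the fermionic function there is no prefactor, so the derivative is simply
\[
\nu(\psi_1)=-\nu\bigl(\langle\lambda+x\rangle\log_p\langle\lambda+x\rangle\bigr).
\]

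\textbf{Step 2 (bring $\omega_v(\lambda)$ inside).} Since $\omega_v(\lambda)$ is a constant, linearity gives
\[
\omega_v(\lambda)\,\mu(\cdot)=\mu\bigl(\omega_v(\lambda)\,\cdot\,\bigr).
\]
It remains to show $\omega_v(\lambda)\langle\lambda+x\rangle=\lambda+x$ for all $x$ in the relevant neighbourhood $\mathbb D_\sigma(r)$ (shrink $r$ so that $|x|_p<|\lambda|_p$ there). Because $|x/\lambda|_p<1$, we have $1+x/\lambda\in D$, so $\langle 1+x/\lambda\rangle=1+x/\lambda$. As in the proof of Theorem~\ref{thm:special-bosonic},
\[
\langle\lambda+x\rangle=\langle\lambda\rangle(1+x/\lambda),
\]
hence
\[
\omega_v(\lambda)\langle\lambda+x\rangle=\lambda(1+x/\lambda)=\lambda+x.
\]
Equivalently, $\omega_v(\lambda+x)=\omega_v(\lambda)$ on the whole region.

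\textbf{Step 3 (logarithm).} With the Iwasawa normalization ($\log_p p=0$, and $\log_p$ vanishing on roots of unity), we have
\[
\log_p(\lambda+x)=\log_p\omega_v(\lambda+x)+\log_p\langle\lambda+x\rangle=\log_p\langle\lambda+x\rangle .
\]
Substituting this and Step 2 into Step 1 gives both formulas:
\[
(\lambda+x)(\log_p(\lambda+x)-1)\ \text{(bosonic)},\qquad -(\lambda+x)\log_p(\lambda+x)\ \text{(fermionic)}.
\]

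\textbf{Main obstacle.} I expect the delicate point to be justifying that these identities hold as elements of $B_r(\sigma)$, not merely pointwise on $\sigma$, since the functional acts on $B_r(\sigma)$. This requires choosing $r$ small enough that $\mathbb D_\sigma(r)$ stays inside $\{|x|_p<|\lambda|_p\}$, so that $\omega_v(\lambda+x)$ is constant there. With that choice, the $s$-expansion from Lemma~\ref{lem:g-lambda-s}(ii) and the identity $\omega_v(\lambda)\langle\lambda+x\rangle=\lambda+x$ hold on the same Banach space.
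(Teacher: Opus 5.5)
Your proposal is correct and follows essentially the paper's argument: differentiate at $s=0$ (you via the $s$-expansion of Lemma~\ref{lem:g-lambda-s}(ii) and the product rule, the paper by differentiating $(s-1)\zeta_p^\mu(s,\lambda)$), then use $\lambda+x=\omega_v(\lambda)\langle\lambda+x\rangle$ together with the vanishing of $\log_p$ on the $\omega_v$-factor. You also make explicit two points the paper leaves implicit: that these identities must hold on $\mathbb D_\sigma(r)$ and not just on $\sigma$, and that $\log_p p=0$ is needed in addition to $\log_p$ vanishing on roots of unity.
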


\begin{proof}
Let us first consider the bosonic case. By Definition~\ref{def:bosonic-zeta},
\[
(s-1)\zeta_p^\mu(s,\lambda)
=
\int_\sigma \langle\lambda+x\rangle^{1-s}\,\d\mu(x).
\]
Differentiating both sides with respect to \(s\) and setting \(s=0\), we obtain
\[
\zeta_p^\mu(0,\lambda)
-
\frac{\partial}{\partial s}\zeta_p^\mu(s,\lambda)\bigg|_{s=0}
=
-\int_\sigma \langle\lambda+x\rangle
\log_p\langle\lambda+x\rangle
\,\d\mu(x),
\]
where we used \(\frac{\partial}{\partial s}\langle\lambda+x\rangle^{1-s}=-\langle\lambda+x\rangle^{1-s}\log_p\langle\lambda+x\rangle\). Equivalently,
\[
\frac{\partial}{\partial s}\zeta_p^\mu(s,\lambda)\bigg|_{s=0}
=
\int_\sigma \langle\lambda+x\rangle
\bigl(\log_p\langle\lambda+x\rangle-1\bigr)
\,\d\mu(x),
\]
since \(\zeta_p^\mu(0,\lambda)=-\int_\sigma \langle\lambda+x\rangle\,d\mu(x)\) by the same definition with \(s=0\).

Now, for \(|\lambda|_p\) sufficiently large and \(-\lambda\notin\supp(\mu)\), we have \(|\lambda+x|_p=|\lambda|_p\) for all \(x\in\supp(\mu)\), so that
\[
\lambda+x=\omega_v(\lambda+x)\langle\lambda+x\rangle,
\quad
\omega_v(\lambda+x)=\omega_v(\lambda).
\]
Moreover, the Iwasawa logarithm \(\log_p\) vanishes on roots of unity, hence
\[
\log_p(\lambda+x)=\log_p\langle\lambda+x\rangle.
\]
Combining these identities with the definition
\[
\log\Gamma_p^\mu(\lambda)
=
\omega_v(\lambda)
\frac{\partial}{\partial s}\zeta_p^\mu(s,\lambda)\bigg|_{s=0},
\]
we obtain
\[
\log\Gamma_p^\mu(\lambda)
=
\int_\sigma
(\lambda+x)\bigl(\log_p(\lambda+x)-1\bigr)
\,\d\mu(x),
\]
which is the desired formula.

The fermionic case is analogous. By Definition~\ref{def:fermionic-zeta},
\[
\zeta_{p,E}^\nu(s,\lambda)
=
\int_\sigma \langle\lambda+x\rangle^{1-s}\,\d\nu(x).
\]
Differentiating with respect to \(s\) and setting \(s=0\),
\[
\frac{\partial}{\partial s}\zeta_{p,E}^\nu(s,\lambda)\bigg|_{s=0}
=
-\int_\sigma \langle\lambda+x\rangle
\log_p\langle\lambda+x\rangle
\,\d\nu(x).
\]
Multiplying by \(\omega_v(\lambda)\) and using the same identities as above, we arrive at
\[
\log\Gamma_{p,E}^\nu(\lambda)
=
-\int_\sigma
(\lambda+x)\log_p(\lambda+x)
\,\d\nu(x).
\]
This completes the proof.
\end{proof}

Having established the integral representation of the functional determinant in Theorem~\ref{thm:integral-rep}, we now derive its asymptotic expansion for large \(|\lambda|_p\). This expansion is the \(p\)-adic analogue of Stirling's series for the logarithm of the gamma function.

\begin{theorem}[Stirling's series]\label{thm:stirling}
Let $\mu\in\cL^*(\sigma)$ and let $|\lambda|_p>\max_{x\in\supp(\mu)}|x|_p$. Then
\[
\log\Gamma_p^\mu(\lambda)
=
B_1^\mu(0)
+
\sum_{n=1}^{\infty}
\frac{(-1)^{n+1}}{n(n+1)}
B_{n+1}^\mu(0)
\frac{1}{\lambda^n}
+
B_1^\mu(\lambda)\log_p\lambda
-
B_1^\mu(\lambda).
\]
\end{theorem}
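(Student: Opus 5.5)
The plan is to start from the integral representation of Theorem~\ref{thm:integral-rep}. In the range $|\lambda|_p>\max_{x\in\supp(\mu)}|x|_p$ it gives
\[
\log\Gamma_p^\mu(\lambda)=\int_\sigma(\lambda+x)\bigl(\log_p(\lambda+x)-1\bigr)\,\d\mu(x).
\]
We expand the integrand in powers of $x/\lambda$ and integrate term by term. Since $|x/\lambda|_p<1$ on $\supp(\mu)$, we can write $\log_p(\lambda+x)=\log_p\lambda+\log_p(1+x/\lambda)$, with
\[
\log_p(1+x/\lambda)=\sum_{k\ge1}\frac{(-1)^{k+1}}{k}\frac{x^k}{\lambda^k}.
\]
This expansion uses only that $\log_p$ is a homomorphism on $\Cp^\times$. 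The integrand then becomes
\[
(\lambda+x)\log_p\lambda-(\lambda+x)+(\lambda+x)\log_p(1+x/\lambda).
\]
The first two pieces integrate to $B_1^\mu(\lambda)\log_p\lambda-B_1^\mu(\lambda)$ by Definition~\ref{def:Bernoulli}.

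For the third piece, we multiply out $(\lambda+x)\sum_{k\ge1}(-1)^{k+1}x^k/(k\lambda^k)$ and collect powers of $x$. The term $\lambda\cdot x/\lambda$ contributes $x$. For $n\ge1$, the coefficient of $x^{n+1}/\lambda^n$ is
\[
\frac{(-1)^{n+2}}{n+1}+\frac{(-1)^{n+1}}{n}=(-1)^{n+1}\Bigl(\frac1n-\frac1{n+1}\Bigr)=\frac{(-1)^{n+1}}{n(n+1)}.
\]
Hence
\[
(\lambda+x)\log_p(1+x/\lambda)=x+\sum_{n\ge1}\frac{(-1)^{n+1}}{n(n+1)}\frac{x^{n+1}}{\lambda^n}.
\]
Applying $\mu$ term by term gives $B_1^\mu(0)+\sum_{n\ge1}\frac{(-1)^{n+1}}{n(n+1)}B_{n+1}^\mu(0)\lambda^{-n}$, since $B_m^\mu(0)=\mu(x^m)$. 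This is exactly the claimed formula.

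The main obstacle is justifying the term-by-term application of $\mu$, because $\mu\in\cL^*(\sigma)$ is only a continuous functional on each $B_r(\sigma)$ and not a bounded measure. Following the proof of Theorem~\ref{thm:welldef}, we will choose $r>0$ with $\max_{x\in\sigma}|x|_p<|\lambda|_p$ and $\mathbb D_\sigma(r)\subset\{|x|_p<|\lambda|_p\}$; this is possible since $\sigma$ is compact. On this set we have $\sup_{x\in\mathbb D_\sigma(r)}|x/\lambda|_p=\rho<1$, the $\rho$-bound coming from compactness via an ultrametric sup over discs of radius $r<|\lambda|_p$. Each monomial then lies in $B_r(\sigma)$ with $\|x^{n+1}/\lambda^n\|_r\le|\lambda|_p\rho^{n+1}$. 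The coefficients satisfy $|1/(n(n+1))|_p\le n(n+1)$, which grows only polynomially, so the partial sums converge in $B_r(\sigma)$ and continuity of $\mu$ on $B_r(\sigma)$ passes the limit inside.

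A secondary point is that the integral representation of Theorem~\ref{thm:integral-rep} is stated for $\lambda$ "sufficiently large" with $\omega_v(\lambda+x)=\omega_v(\lambda)$. Under the hypothesis $|\lambda|_p>\max|x|_p$ we do get $|\lambda+x|_p=|\lambda|_p$. We also have $\log_p(\lambda+x)=\log_p\lambda+\log_p(1+x/\lambda)$ directly, and this is all the expansion needs. So we will rerun the argument of Theorem~\ref{thm:integral-rep} in this range rather than worry about the Teichmüller factor.
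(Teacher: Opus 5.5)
Your proposal is correct and follows the paper's route: start from the integral representation, write $\log_p(\lambda+x)=\log_p\lambda+\log_p(1+x/\lambda)$, multiply out, and apply $\mu$ term by term. You also supply the coefficient computation and the convergence-in-$B_r(\sigma)$ justification of the interchange that the paper's proof leaves implicit. One small fix is needed: your choice of $r$ assumes $\max_{x\in\sigma}|x|_p<|\lambda|_p$, which is stronger than the stated bound over $\supp(\mu)$; to cover the full hypothesis, run the same estimates on the clopen set $V=\{x\in\sigma:|x|_p<|\lambda|_p\}\supset\supp(\mu)$, using $\mu(g)=\mu(g\mathbf{1}_V)$ since $\mu$ vanishes on the open set $\sigma\setminus V$.
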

\begin{proof}
For large $|\lambda|_p$, we have $|x/\lambda|_p<1$ for all $x\in\supp(\mu)$, so the power series expansion of $\log_p(1+T)$ gives
\[
\log_p(\lambda+x)
=
\log_p\lambda+\log_p\!\left(1+\frac{x}{\lambda}\right)
=
\log_p\lambda+\sum_{n=1}^{\infty}\frac{(-1)^{n+1}}{n}\frac{x^n}{\lambda^n}.
\]
Integrating term by term against $\mu$ and rearranging yields
\[
\log\Gamma_p^\mu(\lambda)
=
B_1^\mu(0)
+
\sum_{n=1}^{\infty}
\frac{(-1)^{n+1}}{n(n+1)}
B_{n+1}^\mu(0)
\frac{1}{\lambda^n}
+
B_1^\mu(\lambda)\log_p\lambda
-
B_1^\mu(\lambda),
\]
where we used the definition $B_m^\mu(\lambda)=\int_\sigma(\lambda+x)^m\,d\mu(x)$ and the identity $x\log_p\lambda=\lambda\log_p\lambda\cdot(x/\lambda)$. The details are otherwise the same as in our previous framework (see \cite[Theorem IV.3]{HuKim}).
\end{proof}
\begin{remark}
The moments $B_m^\mu(0)$ are exactly the coefficients in the Laurent expansion of the resolvent $F(z)$ at infinity:
\[
F(z)=\sum_{m=0}^{\infty}\frac{B_m^\mu(0)}{z^{m+1}}.
\]
Thus the functional determinant is completely determined by the asymptotic behavior of the resolvent.
\end{remark}

\section{Application: The position operator in \(p\)-adic quantum mechanics}
\label{sec:applications}

In this section we illustrate the general framework of
Sections~\ref{sec:zeta} and~\ref{sec:det} with a single example that
satisfies all hypotheses of the inverse Stieltjes theorem
(Theorem~\ref{thm:invStieltjes}) unconditionally. The example is the
position operator in \(p\)-adic quantum mechanics, restricted to an
arbitrary compact subset \(\sigma\subset\Cp\). This operator has a
genuinely continuous compact spectrum, and its resolvent is explicit,
which allows a direct computation of the \(p\)-adic spectral
distribution, the zeta functions, and the functional determinant.

The position operator in \(p\)-adic quantum mechanics is defined as the
multiplication operator
\[
(Q\psi)(x)=x\psi(x),
\]
acting on \(p\)-adic valued wave functions; see
\cite[Eq.~(5.3)]{Khrennikov1991} and
\cite[Chapter~IV, \S1]{Khrennikov1994book}. In the \(p\)-adic Hilbert
space \(L_2(\Qp^n,\nu(dx))\) introduced in
\cite{Khrennikov1991,Khrennikov1994book}, it is a symmetric operator
with continuous spectrum. For our purposes, it is convenient to work
instead with the Banach space \(C(\sigma\to\Cp)\) on a compact subset
\(\sigma\subset\Cp\), on which the resolvent is bounded and the inverse
Stieltjes theorem applies directly. The two settings are compatible at
the level of the scalar resolvent: the Stieltjes transform
\(F_Q(z)=\int_\sigma d\mu(x)/(z-x)\) and the spectral distribution
\(\mu\) obtained below depend only on the measure \(\mu\) and the
compact set \(\sigma\), not on the choice of the underlying Banach
space.

Its spectrum on \(\Qp\) is all of \(\Qp\), which is not compact in the
\(p\)-adic topology. To obtain a compact spectral set, we choose an
arbitrary compact subset \(\sigma\subset\Cp\) and consider the position
operator restricted to \(\sigma\). A natural example is the closed unit
ball of a finite extension \(K/\Qp\), which can be viewed as a \(p\)-adic
analogue of a bounded spatial region. We may take, for instance,
\(\sigma=\{x\in K:|x|_p\le1\}\) for a finite extension \(K/\Qp\).

Let
\[
X=C(\sigma\to\Cp)
\]
be the \(\Cp\)-Banach space of continuous functions on \(\sigma\),
equipped with the supremum norm
\(\|f\|_\infty=\sup_{x\in\sigma}|f(x)|_p\). Define the position
operator
\[
(Qf)(x)=x f(x),\quad f\in X,\ x\in\sigma.
\]
Then \(Q\) is bounded with
\(\|Q\|_p\le\sup_{x\in\sigma}|x|_p<\infty\), and its spectrum is
\[
\sigma(Q)=\sigma,
\]
which is compact and continuous (i.e.\ uncountable, with no isolated
points). For \(z\notin\sigma\), the resolvent is
\[
(R(z)f)(x)=\frac{f(x)}{z-x},\quad f\in X,\ x\in\sigma.
\]

Choose a bounded \(p\)-adic measure \(\mu\) on \(\sigma\), and define
the continuous linear functional
\[
\ell_\mu(f)=\int_\sigma f\,\d\mu,\quad f\in X.
\]
This is the functional \(\ell\) required in
Definition~\ref{def:admissible-operator}. Then the scalar resolvent is
\[
F_Q(z)=\ell_\mu(R(z)\mathbf{1})
=
\int_\sigma\frac{\d\mu(x)}{z-x},
\]
where \(\mathbf{1}\) denotes the constant function \(1\). Since \(\mu\)
is bounded and \(|z-x|_p\ge\dist(z,\sigma)>0\) for \(z\in\Omega_\sigma\),
the function \(F_Q\) is Krasner analytic on \(\Omega_\sigma\), and
\(F_Q(z)\to0\) as \(|z|_p\to\infty\). Hence
\(F_Q\in H_0(\Cp\setminus\sigma)\).

By the inverse Stieltjes theorem (Theorem~\ref{thm:invStieltjes}),
there exists a unique generalized distribution
\(\mu_Q\in\cL^*(\sigma)\) such that
\[
F_Q(z)=\int_\sigma\frac{\d\mu_Q(x)}{z-x}.
\]
But \(F_Q\) is by construction the Stieltjes transform of \(\mu\), so
by uniqueness,
\[
\mu_Q=\mu.
\]
Thus the \(p\)-adic spectral distribution of the position operator
\(Q\) is exactly the chosen measure \(\mu\). In particular, if \(\mu\)
is the normalized Haar measure on \(\sigma\), then \(F_Q\) is the
Stieltjes transform of the Haar measure; if \(\mu=\delta_{x_0}\) is a
point mass, then \(F_Q(z)=1/(z-x_0)\).

The bosonic and fermionic \(p\)-adic spectral zeta functions are
\[
\zeta_p^{\mu_Q}(s,\lambda)
=
\frac{1}{s-1}\int_\sigma\langle\lambda+x\rangle^{1-s}\,\d\mu(x),
\quad
\zeta_{p,E}^{\nu_Q}(s,\lambda)
=
\int_\sigma\langle\lambda+x\rangle^{1-s}\,\d\mu(x).
\]
Their special values at non-positive integers are given by
\[
\zeta_p^{\mu_Q}(1-m,\lambda)
=
-\frac{1}{\omega_v^m(\lambda)}
\frac{1}{m}\int_\sigma(\lambda+x)^m\,\d\mu(x),
\]
\[
\zeta_{p,E}^{\nu_Q}(1-m,\lambda)
=
\frac{1}{\omega_v^m(\lambda)}
\int_\sigma(\lambda+x)^m\,\d\mu(x).
\]
The functional determinant \(\log\Gamma_p^{\mu_Q}(\lambda)\) is defined
as in Section~\ref{sec:det}, with integral representation
\[
\log\Gamma_p^{\mu_Q}(\lambda)
=
\int_\sigma(\lambda+x)(\log_p(\lambda+x)-1)\,\d\mu(x),
\]
and Stirling expansion of Theorem~\ref{thm:stirling} with moments
\[
B_m^{\mu_Q}(0)=\int_\sigma x^m\,\d\mu(x).
\]

\begin{remark}
In \(p\)-adic quantum mechanics, \(Q\) is the position operator of a
\(p\)-adic particle confined to a compact spatial region, and the
\(p\)-adic spectral zeta function and functional determinant provide a
regularization of the spectral data of this operator in the \(p\)-adic
setting.
\end{remark}

\begin{remark}\label{rem:why-not-HK}
The example above reveals a structural difference from the framework of
\cite{HuKim}. There, the zeta function is defined by
\[
\zeta_p^f(s,\lambda)
=
\frac{1}{s-1}\int_{\Zp}\langle\lambda+f(a)\rangle^{1-s}\,\d a,
\]
where \(f\) interpolates the integer spectrum
\(\{\lambda_n\}_{n\in\mathbb N_0}\). Since \(\mathbb N_0\) is dense in
\(\Zp\), the integral is taken over the parameter space \(\Zp\) with
the Haar distribution. But a general spectrum is a compact set
\(\sigma\subset\Cp\), not necessarily \(\Zp\); hence the construction
is tied to \(\Zp\) and does not carry over to \(\sigma\) in general.

More intrinsically, the Haar distribution is an a priori measure on
\(\Zp\) and bears no direct relation to the operator's spectral
distribution. In contrast, here the spectral distribution
\(\mu\in\cL^*(\sigma)\) is built directly from the resolvent
\(R(z)=(zI-T)^{-1}\) via the inverse Stieltjes transform. For the
position operator, \(\mu\) is a bounded measure on \(\sigma\), and its
Stieltjes transform is exactly the scalar resolvent
\(F_Q(z)=\int_\sigma d\mu(x)/(z-x)\); the inverse Stieltjes transform
then recovers \(\mu\in\cL^*(\sigma)\). Thus the present framework makes
essential use of the operator's spectral distribution. This is why it
can handle genuinely continuous spectra in a more general way.
\end{remark}

\end{document}